\newcolumntype{L}{>{$}l<{$}} % math-mode version of "l" column type
\newcommand{\INPUT}{\item[{\bf Input:}]}
\newtheorem*{theorem*}{Theorem}
\newtheorem*{definition*}{Definition}
\newtheorem{lemma}{Lemma}
\newtheorem{definition}{Definition}
\newtheorem{claim}{Claim}
\newtheorem*{thmnonumber}{Theorem}
\def\P{\mathbb{P}}
\def\S{\mathcal{S}}
\def\A{\mathcal{A}}
\def\O{\mathcal{O}}
\def\E{\mathbb{E}}
\newcounter{note}[section]
\DeclareMathOperator{\poly}{poly}
\renewcommand{\algorithmiccomment}[1]{\bgroup\hfill\footnotesize~#1\egroup}
\title{Learning Low Degree Hypergraphs}
\author{Eric Balkanski\footnote{Columbia University. \texttt{eb3224@columbia.edu}} \and Oussama Hanguir\footnote{Columbia University. \texttt{oh2204@columbia.edu}} \and Shatian Wang\footnote{Columbia University. \texttt{sw3219@columbia.edu}}}
\date{}
\begin{document}
\maketitle

\begin{abstract}%
 We study the problem of learning a hypergraph via edge detecting queries. In this problem, a learner queries subsets of vertices of a hidden hypergraph and observes whether these subsets contain an edge or not. %Hypergraph learning via edge detecting queries has relevant applications in chemical reaction networks where, given a set of chemicals, the goal is to learn which subsets of chemicals react using a small number of experiments. 
 In general,  learning a  hypergraph  with $m$ edges of maximum size $d$ requires  $\Omega((2m/d)^{d/2})$ queries \citep{Angluin2006}. In this paper, we aim to identify families of hypergraphs that can be learned without suffering from a query complexity that grows exponentially in the size of the edges. 
    
    We show that hypermatchings and low-degree near-uniform hypergraphs with $n$ vertices are learnable with $\poly(n)$ queries. For learning hypermatchings (hypergraphs of maximum degree $ 1$), we give an $\O(\log^3 n)$-round  algorithm with $\O(n \log^5 n)$ queries. We complement this upper bound by showing that there are no algorithms with $\poly(n)$ queries that learn hypermatchings in $o(\log \log n)$ adaptive rounds.     For hypergraphs with maximum degree $\Delta$ and edge size ratio $\rho$, we give a non-adaptive algorithm with $\O((2n)^{\rho \Delta+1}\log^2 n)$ queries. To the best of our knowledge, these are the first algorithms  with $\poly(n, m)$ query complexity for learning non-trivial families of hypergraphs that have  a super-constant number of edges of super-constant size.\footnote{Accepted for presentation at the Conference on Learning Theory (COLT) 2022.}
\end{abstract}

\pagenumbering{gobble}
\clearpage
\newpage

\pagenumbering{arabic}

\section{Introduction}

%In this paper, we study the problem of learning a hypergraph via edge-detecting queries. 
Hypergraphs are a powerful tool in computational  chemistry and molecular biology where they are used to represent groups of chemicals and molecules that cause a reaction.
%vertices represent chemicals and molecules, and edges represent groups of vertices that cause a reaction. 
The chemicals and molecules that cause a reaction are however often unknown a priori, and learning such groups is a central problem of interest that has motivated a long line of work on hypergraph learning in the edge-detecting queries model, e.g., \citep{Torney1999, Angluin2006, angluin2004learning, abasi2014exact, Abasi2015, abasi2018error}. In this model, a learner queries subsets of vertices and, for each queried subset, observes whether it contains an edge or not. When the vertices represent chemicals, the learner queries groups of chemicals and observes a reaction if a subgroup reacts. The goal is to learn the edges with a small number of queries, i.e., a small number of  experiments.

%test that indicates whether a reaction occurs among a collection of chemicals.  

An important application of learning chemical reaction networks is to learn effective drug combinations (drugs are vertices and effective drug combinations are hyperedges). In particular, there has recently been a lot of interest in finding drug combinations that reduce cancer cell viability. For example, as part of  AstraZeneca's recent DREAM Challenge, the effectiveness of    a large number of drug combinations was tested against different cancer  cell lines \citep{menden2019community}. Despite this interest, as noted by \citet{flobak2019high}, ``our knowledge about beneficial targeted drug combinations is still limited, partly due to the combinatorial complexity", especially since the time required to culture, maintain and test cell lines is significant. This combinatorial complexity motivates the need for novel computational methods that efficiently explore  drug combinations.

\vspace{.5cm}
%Hypergraph learning is a well studied-probl

One main obstacle to discovering effective combinations that involve $d > 2$ vertices is that $\Omega((2m/d)^{d/2})$ queries are required to learn hypergraphs with $m$ edges of maximum size $d$ \citep{angluin2004learning}.  Since there is no efficient algorithm for learning general hypergraphs with large maximum edge size, the main question is which families of hypergraphs can be efficiently learned.  

\begin{center}
\emph{Which families of hypergraphs can we learn with a number of queries \\ that does not grow exponentially in the size of the edges?}
\end{center}

\paragraph{Our results.} We develop the first algorithms with $\poly(n, m)$ queries for learning non-trivial families  of hypergraphs that have  a super-constant number of edges $m$ of super-constant size. The first family of such hypergraphs that we consider  are hypermatchings, i.e., hypergraphs such that every vertex is in at most one edge. Learning a hypermatching generalizes the problem of learning a matching studied in \citep{alon2004learning, beigel2001optimal}. In addition to their query complexity, we are also interested in the adaptive complexity of our algorithms, which is the number of adaptive rounds of queries required  when 
the algorithm can perform multiple queries in parallel in each round. Our first main result is an algorithm  with near-linear query complexity and poly-logarithmic adaptive complexity  for learning hypermatchings.

%These algorithms are for learning hypermatchings and low-degree near-uniform hypergraphs. For learning a hypermatching, which generalizes the problem of learning a matching studied in \citep{alon2004learning, beigel2001optimal}, we propose an algorithm with $\O(n \log^5n)$ queries and $\O(\log^3 n)$ adaptivity. 

\begin{thmnonumber}
There is a  $\O(\log^3 n)$-adaptive algorithm with $\O(n \log^5 n)$ queries   that, for any hypermatching $M$ over $n$ vertices, learns $M$ with high probability.
\end{thmnonumber}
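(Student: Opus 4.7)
I would design a three-layer algorithm whose adaptive and query complexities compose multiplicatively to give the claimed bounds.

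\textbf{Layer 1 (Extraction).} The core subroutine $\textsc{Extract}(v,S)$ takes as input a vertex $v$ and a set $S\ni v$ whose unique edge of $M$ is $e_v$, and returns $e_v$ via binary search. Split $S\setminus\{v\}$ into halves $A,B$ and query $\{v\}\cup A$ and $\{v\}\cup B$ in parallel. By the uniqueness of $e_v$ in $S$, any edge detected in either half must equal $e_v$; if $\{v\}\cup A$ is a YES we recurse on $\{v\}\cup A$. If neither half is a YES, $e_v$ is split across $A$ and $B$, and we recurse on each side while ``freezing'' the opposite side in every subsequent query so that the only detectable edge remains $e_v$. A branching-number argument shows that the recursion tree has $O(|e_v|)$ leaves and depth $O(\log n)$, yielding $O(|e_v|\log n)$ queries in $O(\log n)$ rounds.

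\textbf{Layer 2 (Isolation).} To supply $\textsc{Extract}$ with valid inputs, a subroutine $\textsc{Isolate}(v)$ produces, with high probability, a set $S\ni v$ whose unique edge of $M$ is $e_v$ (or certifies that $v$ is isolated). The strategy is multi-scale random sub-sampling: for each scale $i\in\{1,\dots,\lceil\log n\rceil\}$, independently form $O(\log n)$ candidate sets $S_{i,j}=\{v\}\cup T_{i,j}$, where each vertex of $V\setminus\{v\}$ is included in $T_{i,j}$ with a scale-dependent probability $p_i$. Each candidate is verified by running $\textsc{Extract}$ on it and checking consistency (e.g., that the returned edge contains $v$ and is stable to small perturbations). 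The goal of the scale decomposition is to ensure that, whatever the size $|e_v|$, some scale $i^*$ simultaneously achieves $\Pr[e_v\setminus\{v\}\subseteq T_{i^*,j}]=\Omega(1/\mathrm{polylog}\,n)$ and $\Pr[\exists e'\neq e_v:\ e'\subseteq S_{i^*,j}]\le 1/2$.

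\textbf{Layer 3 (Outer loop).} Run $\textsc{Isolate}$ in parallel across all vertices $v\in V$, feed successful candidates into $\textsc{Extract}$, and repeat for $O(\log n)$ phases, each phase peeling off vertices already covered by a discovered edge. A union bound over the $O(n)$ vertices and $O(\log n)$ phases controls the overall failure probability. Counting contributions: each of the three layers contributes one $\log n$ factor to adaptivity, giving $O(\log^3 n)$ rounds, while queries are dominated by $O(n)$ vertices times $\tilde O(\log^4 n)$ per vertex, giving $O(n\log^5 n)$.

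\textbf{Main obstacle.} The delicate step is the analysis of $\textsc{Isolate}$: proving that the multi-scale sampling has, for every non-isolated $v$, a scale at which a constant-probability sample yields $e_v$ in isolation. This is where the hypermatching structure is essential. Since edges are vertex-disjoint, $\sum_{e\in M}|e|\le n$ and at most $n/k$ edges have size $\ge k$, so the expected number of contaminating edges fully contained in a $p_i$-sample is bounded by $\sum_{k}(n/k)\,p_i^{k}$. Balancing this contamination term against the $p_i^{|e_v|-1}$ probability that $e_v$ itself is sampled, across all possible unknown sizes of $e_v$, is the main technical challenge: the $p_i$ must be chosen close enough to $1$ (with $i$-dependent slack) to keep $e_v$ in the sample for each size scale, but not so close that the sum of $(n/k)p_i^k$ blows up. The proof will combine this scale-by-scale balancing with the $O(\log n)$-fold boosting in Layer~2 and the $O(\log n)$-fold outer repetition in Layer~3 to complete the union bound.
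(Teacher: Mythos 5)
Your Layer 1 (binary search splitting $S$ in halves and ``freezing'' the sibling half when both halves answer $0$) is essentially the paper's \textsc{FindEdgeAdaptive}, and the overall three-tier decomposition (extract an edge from an isolating set, find isolating sets by random sampling, iterate while peeling covered vertices) matches the paper's plan. However, two steps in Layers 2 and 3 would fail. The first is the query accounting: running $\textsc{Isolate}(v)$ for every vertex means that each vertex $v$ in an edge $e$ of size $k$ triggers $O(\log^2 n)$ candidate verifications, each at cost $\Theta(k\log n)$, so $e$ alone incurs $\Theta(k^2\log^3 n)$ queries in a single phase; summed over the matching this is $\Theta(\log^3 n\cdot\sum_{e\in M}|e|^2)$, which is $\Theta(n^2\log^3 n)$ already when one edge has size $\Theta(n)$. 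Your estimate of ``$\tilde O(\log^4 n)$ per vertex'' implicitly treats $|e_v|$ as polylogarithmic. The paper avoids this quadratic blowup by sampling \emph{globally} rather than per vertex: in the phase targeting sizes around $s$, only $O((n/s)\log^2 n)$ of the random samples contain any edge (the surviving matching has $O(n/s)$ edges, each hit by $O(\log^2 n)$ samples), so the Extract subroutine is invoked $O((n/s)\log^2 n)$ times at cost $O(s\log n)$ each, i.e.\ $O(n\log^3 n)$ per phase, independently of the individual edge sizes.

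The second gap is the claim you flag as ``the main obstacle''---that for every non-isolated $v$ there is a scale whose samples isolate $e_v$ with constant probability---which is simply false unless smaller edges have already been removed. To have $e_v\setminus\{v\}$ in a $p$-sample with non-negligible probability when $|e_v|=k$, you need $p\approx 1-O(\mathrm{polylog}\,n/k)$, and at such a $p$ every edge of size at most $k/2$ is contained in the sample with overwhelming probability, so the contamination probability is near $1$ as long as any smaller edges survive; no scale balances this. The paper's fix is to enforce the ordering explicitly: phase $t$ targets sizes in $[s/\alpha,s]$, removes covered vertices, then sets $s\leftarrow\lfloor\alpha s\rfloor+1$, and to keep the per-phase sample count $n^\alpha\log^2 n$ linear in $n$ one must take $\alpha=1/(1-1/(2\log n))$, giving $O(\log n/\log\alpha)=O(\log^2 n)$ phases of $O(\log n)$ rounds each and hence $O(\log^3 n)$ rounds and $O(n\log^5 n)$ queries. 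With only $O(\log n)$ doubling scales ($\alpha=2$) the sample count per phase jumps to $n^2\log^2 n$, and one lands on the paper's weaker $O(n^2\log^3 n)$-query, $O(\log n)$-round tradeoff, not the stated one; your ``one $\log n$ per layer'' arithmetic matches neither tradeoff.
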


The adaptivity of the algorithm can be improved to $\O(\log n)$ at the expense of $\O(n^3 \log^3 n)$ queries.  We complement our algorithm by showing that there are no $o(\log \log n)$-adaptive algorithms that learn hypermatchings using $\poly(n)$ queries. 

\begin{thmnonumber}
There is no $o(\log \log n)$-adaptive algorithm with polynomial query complexity that learns an arbitrary hypermatching $M$ over $n$ vertices, even with small probability.
\end{thmnonumber}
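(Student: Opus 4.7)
The plan is to apply Yao's minimax principle: we exhibit a distribution $\mathcal{D}$ over hypermatchings on $n$ vertices such that any deterministic $r$-round algorithm using $\poly(n)$ queries fails to identify $M \sim \mathcal{D}$ whenever $r = o(\log \log n)$. This reduces the randomized lower bound to an averaged deterministic statement.

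First, I would construct the hard distribution $\mathcal{D}$ with a multi-scale, recursive structure. A natural choice is to build $\mathcal{D}$ in levels: pick a random partition of $[n]$ into blocks, within each block independently place a hypermatching drawn from a similar distribution on the block, and iterate for $\Theta(\log \log n)$ levels of nesting. The intended effect is that each round of queries can only disambiguate one level of the nested structure, and the total number of levels forces $\Omega(\log \log n)$ rounds. Edge sizes across levels can be chosen so that no query of small ``support'' can interact with more than one level at a time.

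Second, I would prove a round-elimination lemma bounding how fast the posterior can collapse. Letting $\mathcal{H}_i$ denote the set of hypermatchings in the support of $\mathcal{D}$ consistent with all answers observed through round $i$, the goal is to show that for any choice of $\poly(n)$ queries in round $i$ there is an answer vector such that $|\mathcal{H}_i| \ge |\mathcal{H}_{i-1}|^{1/c}$ for an absolute constant $c > 1$. Iterating this polynomial-root shrinkage over $r$ rounds yields $|\mathcal{H}_r| \ge |\mathcal{H}_0|^{1/c^r}$, which remains at least $2$ so long as $r = o(\log \log |\mathcal{H}_0|) = o(\log \log n)$; in that regime the algorithm cannot uniquely pin down $M$ and thus errs with constant probability (even with a small success probability allowance, as any output is correct on at most a $1/|\mathcal{H}_r|$ fraction of the posterior mass).

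The main obstacle will be proving the round-elimination lemma. A naive pigeonhole argument only yields $|\mathcal{H}_i| \ge |\mathcal{H}_{i-1}|/2^q$, a mere polynomial-factor shrinkage per round, which cascades to a $\log n$-type bound rather than $\log \log n$. To obtain the polynomial-root shrinkage, one must exploit the structural symmetry of the recursive distribution: I would argue that for any $\poly(n)$ queries there exists a large subfamily $\mathcal{F} \subseteq \mathcal{H}_{i-1}$, obtained as an orbit under relabelings of the vertex set that preserve the current-level block structure, on which every query returns the same answer and $|\mathcal{F}| \ge |\mathcal{H}_{i-1}|^{1/c}$. Establishing the existence of such a large indistinguishable orbit — likely through a probabilistic / averaging argument over a random symmetry of the block structure, together with a counting argument bounding how many of the $\poly(n)$ queries can ``cross'' block boundaries at the active level — is the crux of the proof.
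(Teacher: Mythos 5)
Your high-level shape is right --- a multi-level random construction, an inductive ``one level per round'' indistinguishability argument driven by symmetry of the unrevealed part of the partition, and a final counting step showing any output is correct on a vanishing fraction of the surviving posterior. This matches the paper's strategy in outline. However, there are two genuine gaps. The first is quantitative: your round-elimination arithmetic does not produce $\log\log n$. If each round only costs a constant-power root, then $|\mathcal{H}_r|\ge|\mathcal{H}_0|^{1/c^r}\ge 2$ holds for all $r\le \log_c\log_2|\mathcal{H}_0|$, and for this to be $\Theta(\log\log n)$ you need $\log_2|\mathcal{H}_0|=\mathrm{polylog}(n)$, i.e.\ a quasi-polynomially small hard family; for any natural family of size $2^{\Theta(n)}$ or larger (as in the paper, where the last level alone contributes entropy $\Omega(n)$), your identity $\log\log|\mathcal{H}_0|=\log\log n$ is false and the lemma, if true, would prove an $\Omega(\log n)$ bound --- which is not what this construction can deliver. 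The posterior in the correct proof does not shrink by polynomial roots at all; it barely shrinks for most rounds, and the bound comes from the \emph{number of levels} being $\Theta(\log\log n)$, not from an entropy cascade.

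The second gap is the missing mechanism that forces the number of levels to be $\Theta(\log\log n)$ and makes the per-round indistinguishability provable. The paper sets level-$i$ edges to have size $d_i$ with $d_{i+1}=3\log^2 n\cdot d_i^2$ and $|P_i|=3\log^2 n\cdot d_i^2$; the squaring means only $\Theta(\log\log n)$ levels fit below $n$, and it is exactly what is needed for the key dichotomy: any round-$i$ query $S$ that is independent of the unrevealed levels either satisfies $|S|\le(1-1/d_i)\sum_{j\ge i}|P_j|$, in which case a Chernoff-type bound shows $S$ misses at least one vertex of every level-$(\ge i+1)$ edge w.h.p.\ (this is where $d_{i+1}\gtrsim \log^2 n\cdot d_i^2$ is used), or $|S|\ge(1-1/d_i)\sum_{j\ge i}|P_j|$, in which case $S$ already contains a whole level-$i$ edge w.h.p.\ and the answer is identically $1$. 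In both cases the answer reveals nothing about levels $i+1,\dots,R$. Your proposal gestures at ``edge sizes chosen so small queries interact with one level'' but does not identify this threshold argument, the doubly-exponential growth it forces, or why the large-query case is handled (which requires a separate correlation argument because the edges of a matching are not independent --- cf.\ Lemma~\ref{claim:uniform_edges}). Without these, neither the exponent $\log\log n$ nor the round-elimination step is established.
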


%\snote{Do we need to define $\rho$-near-uniform before stating this result?}
Moving beyond hypermatchings, 
we  study hypergraphs whose vertices have bounded maximum degree $\Delta \geq 2$ ($\Delta = 1$ for hypermatchings). For such hypergraphs, the previously mentioned $\Omega((2m/d)^{d/2})$ lower bound on the number of queries needed by any algorithm implicitly also implies that $\Omega((m/\rho)^\rho)$ queries are required, even when $\Delta = 2$, where $\rho \geq 1$ is the ratio between the maximum and minimum edge sizes \citep{angluin2004learning}. Thus, an exponential dependence on this near-uniform edge sizes parameter $\rho$ is, in general, unavoidable. We give a non-adaptive algorithm with query complexity that depends on the maximum degree $\Delta$ and the near-uniform parameter $\rho$ of the hypergraph $H$ we wish to learn.

\begin{thmnonumber}
There is a  non-adaptive algorithm with $\O((2n)^{\rho \Delta +1} \log^2n)$ queries   that, for any $\rho$ near-uniform hypergraph $H$ with maximum degree $\Delta \geq 2$, learns $H$ with high probability.
\end{thmnonumber}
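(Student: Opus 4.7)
The plan is a non-adaptive design combining enumeration over small ``identifier'' subsets with random extensions. Specifically, for each $S \subseteq V$ with $|S| = \rho\Delta + 1$, I would include $O(\log^2 n)$ queries of the form $S \cup R$, where $R$ is a random subset of $V \setminus S$ with each vertex kept independently with a probability $p$ drawn from a logarithmic-size grid of values. Summing over $S$, this gives $\binom{n}{\rho\Delta+1}\cdot O(\log^2 n) = O((2n)^{\rho\Delta+1}\log^2 n)$ queries, and edges of size at most $\rho\Delta+1$ are handled by the queries where $R = \emptyset$ (namely $Q=S$ for any $S$ of that size).

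The structural core is an identifier lemma: in any hypergraph $H$ of maximum degree $\Delta$, every edge $e$ contains some subset $S_e$ with $|S_e| \leq \Delta$ such that no other edge of $H$ contains $S_e$. I would prove this greedily: pick any $v \in e$; since $v$ has degree at most $\Delta$, at most $\Delta - 1$ other edges currently ``block'' our candidate by containing it; for each blocker $e'$, the antichain property gives a vertex in $e \setminus e'$, and adding it destroys $e'$ as a blocker without creating new ones, yielding $|S_e| \leq \Delta$. We then pad $S_e$ within $e$ to size exactly $\rho\Delta+1$ when $|e| \geq \rho\Delta+1$, so that the random-extension analysis can be applied uniformly across all $S$.

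Next I would show that with high probability, for every edge $e$ at least one of the random extensions assigned to $S = S_e$ (extended) yields a query $Q = S \cup R$ that isolates $e$: $e \subseteq Q$ and no other edge of $H$ is contained in $Q$. The probability that $e \subseteq Q$ is $p^{|e|-\rho\Delta-1}$; by the identifier property together with antichain, $|e' \setminus S| \geq 1$ for every $e' \neq e$, bounding the collision probability by $\sum_{e'\neq e} p^{|e' \setminus S|}$. Using $m \leq n\Delta/d_{\min}$ and $|e| \leq \rho d_{\min}$, some value of $p$ in the grid balances the ``edge included'' and ``collision'' terms and yields isolation probability at least $1/\poly(\log n)$ per trial, so $O(\log^2 n)$ repetitions succeed whp and a union bound over edges completes the isolation claim.

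The main obstacle will be the decoding step and the uniform probability analysis across varying edge sizes. For decoding, for each $S$ of size $\rho\Delta+1$ I would infer a candidate edge $\hat e_S = S \cup T_S$, where $T_S$ consists of the vertices $v \in V \setminus S$ whose conditional YES frequency in $\{S \cup R_j\}_j$ is significantly elevated by $v \in R_j$; when $S$ identifies a true edge $e$, we have $\Pr[\mathrm{YES} \mid v \in R_j]/\Pr[\mathrm{YES} \mid v \notin R_j] \approx 1/p$ for $v \in e \setminus S$ and approximately $1$ for $v \notin e$, giving a clean statistical separation. A union bound over the $\binom{n}{\rho\Delta+1}$ choices of $S$ and the grid of $p$-values is what forces the second $\log n$ factor in the query complexity, and a final pruning step across different identifier subsets outputs the edge set of $H$ exactly.
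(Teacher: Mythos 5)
Your design (deterministic ``identifier'' subsets of size $\rho\Delta+1$ contained in each edge, extended by random $R$'s) is genuinely different from the paper's $\textsc{FindLowDegreeEdges}$, which simply draws $(2n)^{\rho\Delta}\log^2 n$ i.i.d.\ random samples and then probes them with $S_i \setminus \{v\}$. Your identifier lemma is correct and a nice structural observation: starting from $\{v\}$ there are at most $\Delta-1$ blockers, each greedy insertion of a vertex in $e\setminus e'$ kills at least one blocker and cannot create new ones, so $|S_e|\le\Delta$; padding $S_e$ inside $e$ preserves uniqueness.

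However, the quantitative core of your argument fails. You assign only $O(\log^2 n)$ queries to each padded identifier $S_e$ and claim a per-trial isolation probability of $\Omega(1/\poly\log n)$. This cannot be achieved. Write $d^-=d/\rho$. Conditioned on $e\subseteq Q=S\cup R$ (which already places $e\setminus S\subseteq R$), the probability that another edge $e'$ is in $Q$ is $p^{|e'\setminus e|}$, \emph{not} $p^{|e'\setminus S|}$: the vertices of $e'$ inside $e\setminus S$ come for free once you condition. Since $S\subseteq e$, $|e'\setminus e|\le |e'\setminus S|$, so your bound is systematically too optimistic. Even ignoring this, a naive union bound does not close: there can be on the order of $\Delta$ edges $e'$ with $|e'\setminus e|=1$ (take $|e'\cap e|$ close to $d^-$; the degree budget $\sum_{e'}|e'\cap e|\le(\Delta-1)|e|$ permits this), and for any $p$ with $p^{|e|-\rho\Delta-1}=\Omega(1/\poly\log n)$ you need $p\to 1$ as $d$ grows, at which point $\P(e'\subseteq Q\mid e\subseteq Q)=p$ is itself near $1$. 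The paper's Lemma~\ref{lemma:failure-prob-bn} shows, via a careful pair of mathematical programs that exploit the global constraints $\sum i\,a_{ij}\le(\Delta-1)d$ and $\sum j\,a_{ij}\le\Delta n$, that $\P(\text{no collision}\mid e\subseteq S)\ge(\log n/2n)^{(\Delta-1)\rho}$, which is polynomially small; this is tight up to logs, and shows that even with the identifier's savings of $p^{-(\rho\Delta+1)}$ on the inclusion side, the per-trial isolation probability stays polynomially small. With only $O(\log^2 n)$ trials per identifier, the expected number of isolating hits for a fixed $e$ is $o(1)$, and the union bound over edges has nothing to work with. The paper gets around exactly this by giving itself $(2n)^{\rho\Delta}\log^2 n$ random samples that are shared across all edges, not $O(\log^2 n)$ samples that are dedicated to each. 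Your statistical decoding step is a further (and under-specified) complication, but it is downstream of this gap, which is the real issue: the sentence ``some value of $p$ in the grid balances the terms and yields isolation probability at least $1/\poly(\log n)$'' is precisely the step that needs a Lemma~\ref{lemma:failure-prob-bn}-style argument and does not hold as stated.
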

%This algorithm takes $\rho, d, \Delta$ as input but can be easily extended to accept any \textit{upper bounds} $\bar{\rho} \geq \rho, \bar{d} \geq d, \bar{\Delta} \geq \Delta$ as input instead, at the expense of $\O((2n)^{\bar{\rho} \bar{\Delta} +2} \log^2n)$ queries.
This query complexity is independent of the maximum size $d$ of the edges and is polynomial in the number of vertices $n$ when the maximum degree $\Delta$ and the edge size ratio $\rho$ are constant. Our learning algorithms rely on novel constructions of collections of queries that satisfy a simple property that we  call unique-edge covering, which is a general approach for constructing learning algorithms with a query complexity that does not grow exponentially in the size of the edges. We believe that  an interesting direction for future work is to identify, in addition to hypermatchings, other relevant families of hypergraphs that are learnable with $\poly(n,m)$ queries, even when both the maximum edge size $d$ and the edge size ratio $\rho$ are non-constant.

\paragraph{Technical overview.} Previous work on learning a hypergraph $H = (V, E)$ relies on constructing  a collection of sets of vertices called an independent covering family \citep{Angluin2006} or a cover-free family \citep{Abasi2015,gao2006construction, hwang2006pooling,abasi2014exact}. These families aim to identify \emph{non-edges}, which are sets of vertices  $T \subseteq V$
that do not  contain an edge $e \in E$. More precisely, both of these families are a collection $\S$ of sets of vertices such that every non-edge set   $T$ of size $d$ is contained in at least one set $S \in \S$ that does not contain an edge. These families have a minimum size that is exponential in $d$ and these approaches require a number of queries that is at least the size of these families. In contrast to these existing approaches that focus on non-edges, we construct a collection $\S$ of sets of vertices such that every \emph{edge} $e$ is contained in at least one set $S \in \S$ that does not contain any other edge of $H$. We call such a collection a \emph{unique-edge covering family}. Since the number of edges in a  hypergraph $H$ with maximum degree $\Delta$ is at most $\Delta n$, there exists unique-edge covering families  whose sizes depend on the  maximum degree $\Delta$ of $H$ instead of the maximum edge size $d$.

The algorithm for learning a hypermatching $M$ proceeds in phases. In each phase, we construct a unique-edge covering family of a subgraph of $M$ containing all edges of size that is in a specified range. This unique-edge covering family is constructed with i.i.d. sampled sets that contain each vertex, independently, with probability depending on the edge size range. The edge size range is widened at the end of each phase. One main challenge with hypermatchings is to obtain a near-linear query complexity. We do so by developing a subroutine which, given a set $S$ that covers a unique edge of size at most $s$, finds this unique edge with $\O(s \log |S|)$ queries. For the lower bound for hypermatchings, there is a simple construction that shows that there are no non-adaptive algorithms for learning hypermatchings with $\poly(n)$ queries. The technical part of interest is extending this construction and its analysis to hold for $\Omega(\log \log n)$ rounds. Finally, for low-degree near-uniform hypergraphs $H$, we construct a unique-edge covering family in a similar manner to those for hypermatchings ($\Delta = 1$). The main technical difficulty for hypergraphs with maximum degree $\Delta > 1$ is in analyzing the number of samples required to construct a unique-edge covering family, which is significantly more involved than for hypermatchings due to overlapping edges.

\paragraph{Related work.} The problem of learning a hidden hypergraph with edge-detecting queries was first introduced by \citet{Torney1999} for complex group testing, %where we have a set of $n$ items, and $m$ defective groups that are known as positive
% complexes, where we can choose an arbitrary group $S \subseteq [n]$ and ask whether $S$ contains at least
% one positive complex. 
where this
problem is also known as exact learning from membership queries of a monotone DNF with at most $m$ monomials, where each monomial contains at most $d$ variables \citep{angluin1988queries,angluin2004learning,bshouty2018exact}.

Regarding hardness results, non-adaptive algorithms for learning hypergraphs with $m$ edges of size at most $d$ require $\Omega(N(m, d) \log n)$ queries where $N(m,d) = \frac{m+d}{\log \binom{m+d}{d}}\binom{m+d}{d}$ \citep{abasi2014exact,hwang2006pooling}. \citet{angluin2004learning} show that $\Omega((2m/d)^{d/2})$ queries are required by any (adaptive) algorithm  for learning general hypergraphs, and then extended this lower bound in \citep{Angluin2006} to $\Omega((2m/(\delta + 2))^{1+\delta/2})$ for  learning near-uniform hypergraphs with maximum edge size difference $\delta = d - \min_{e \in H}|e|$. %\snote{But the second cited paper comes before the first cited one?} \onote{the 2008 paper appeared first in COLT04. Should we cite the 2004 version?}
%Angluin and Chen \citep{Angluin2006}  show that $\Omega((2m/(\delta + 2))^{1+\delta/2})$ edge-detecting queries are required to identify a hypergraph drawn from the class of hypergraphs with $n$ vertices and $m$ edges, and where the difference
%between the maximum and minimum edge sizes is $\delta$. 

Regarding adaptive algorithms, \citet{Angluin2006} give an algorithm that learns a hypergraph with $\O(2^{\O((1+\frac{\delta}{2})d)}\cdot m^{1+\frac{\delta}{2}} \cdot poly(\log n))$ queries in $\O((1+\delta)\min\{2^d(\log m+d)^2,(\log m+d)^3\})$ rounds with high probability. When $m < d$,  \citet{abasi2014exact} give a randomized algorithm that achieves a query complexity of $\O(dm \log n + (d/m)^{m-1+o(1)})$ and show an almost matching $\Omega(dm \log n + \left({d}/{m}\right)^{m-1})$ lower bound. When $m \geq d$,  they present a randomized algorithm that asks $(cm)^{d/2+0.75}+ dm \log n$  queries for some constant $c$, almost matching the   $\Omega((2m/d)^{d/2})$ lower bound of \citep{angluin2004learning}.  \citet{Chodoriwsky2015} develop an adaptive algorithm with $\O(m^d \log n)$ queries.  Adaptive algorithms with  $\O\left(md(\log_2{n} +(md)^d)\right)$ queries are constructed in  \cite{Dyachkov2016}. Regarding non-adaptive algorithms, \citet{chin2013non} and \citet{Abasi2015} give deterministic non-adaptive algorithms with an almost optimal  query complexity $N(m, d)^{1+o(1)}\log n$.  The problem of  learning a hypergraph when a fraction of the queries are incorrect is studied in \citet{chen2008upper} and \citet{abasi2018error}. We note that, to the best of our knowledge, all existing algorithms  require a number of queries that is exponential in $d$, or exponential in $m$ when $m < d$. We develop the first algorithms using $\poly(n)$ queries for learning non-trivial families  of hypergraphs that have arbitrarily large edge sizes and  a super-constant number of edges $m$.

Finally, learning a matching  has been studied in the context of graphs $(d=2)$ \citep{alon2004learning, beigel2001optimal}. In particular \citet{alon2004learning} provide a non-adaptive algorithm with $\O(n\log n)$ queries. To the best of our knowledge, there is no previous work on learning hypermatchings, which generalizes matchings to hypergraphs of maximum edge size $d > 2$. Our lower bound shows that, in contrast to algorithms for learning matchings, algorithms for learning hypermatchings require multiple adaptive rounds. Similar to the algorithm in \cite{alon2004learning} for learning matchings, our algorithm for learning hypermatchings has a near-linear query complexity.

\vspace{-3mm}

\section{Preliminaries}

\vspace{-1mm}

A hypergraph $H = (V,E)$ consists of a set of $n$ vertices $V$ and a set of $m$ edges $E \subseteq 2^V$. We abuse notation and write $e\in H$ for edges $e \in E$.
%which is a subset of
% the power set of $V$ ($E \subseteq 2^V$). %Overall the paper, we use $[n]$ to denote the set $\{1,\ldots,n\}$. %We set $n = |V|$ to be the number of vertices $H$ and $|E| = m$ to be the number of edges. We assume that for every two different edges $e$ and $e'$ of $H$, we have $e \not\subseteq e'$ and $e'\not\subseteq e$.
For the problem of learning a hypergraph $H$, the learner is given $V$ and  an edge-detecting oracle $Q_H$. Given a query $S \subseteq V$, the  oracle  answers $Q_H(S) = 1$ if $S$ contains an edge $e$, i.e. there exists $e \in H$ such that $e \subseteq S$; otherwise, $Q_H(S) = 0$. The goal is to learn the edge set $E$ using a small number of queries. When queries can be evaluated in parallel, we are interested in algorithms with low adaptivity. The \textit{adaptivity} of an algorithm is measured by the number of sequential rounds it makes where, in each round, queries are evaluated in parallel. An algorithm is non-adaptive if its adaptivity is $1$. 

The degree of a vertex $v$ is the number of edges $e \in H$ such that $v \in e$. The maximum degree $\Delta$ of $H$ is the maximum degree of a vertex $v \in V$. When $\Delta = 1$, every vertex is in at most one edge and $H$ is called a hypermatching, which we also denote by $M$. The rank $d$ of $H$ is the maximum size of an edge $e \in H$, i.e., $d := \max_{e \in H}|e|$. The edge size ratio of a hypergraph $H$ is  $d / \min_{e \in H} |e|$. A graph is $\rho$-near-uniform  and is uniform if $ d / \min_{e \in H} |e| \leq \rho$ and $ d / \min_{e \in H} |e| = 1$ respectively.

\vspace{-3mm}

\section{Learning Hypermatchings}
\label{sec:hypermatchings}
\vspace{-1mm}

In this section, we study the problem of learning hypermatchings, i.e., hypergraphs of maximum degree $\Delta = 1$. In Section \ref{subsection:learning_hypermatching}, we present an  algorithm that, with high probability, learns a hypermatching with $\O(n \log^5 n)$ queries in $\O(\log^3 n)$ rounds. The number of rounds can be improved to $\O(\log n)$, at the expense of an $\O(n^{3} \log^3 n)$ query complexity. In Section \ref{subsection:hardness_hypermatching}, we show that there is no $o(\log \log n)$-round algorithm that learns hypermatchings with $\poly(n)$ queries. Some proofs are deferred to Appendix \ref{appendix:learning_hypermatching} and \ref{appendix:hardness_hypermatching}.

\vspace{-1mm}
\subsection{Learning algorithm for hypermatchings}\label{subsection:learning_hypermatching}

%We present an adaptive algorithm that, with high probability learns any non-uniform matching in $poly(\log n)$ rounds and $poly(n)$ queries. 

%Our algorithm learns the edges of the hypermatching in increasing size. We first start by learning all the edges of size 1 and 2. This step can be done using $O(n^2)$ queries and is referred to as \textsc{FindSingletonsAndPairs} in Algorithm 2. Next, we remove the vertices already learned from future queries, and in every round, we learn the edges of sizes in intervals of the form $[d,\rho d]$ where $\rho$ is a parameter of the algorithm. The subroutine that learns edges of size between $d$ and $\rho d$ is referred to as $\textsc{FindDisjointEdges}$ and is presented in Algorithm 1.

A central concept we introduce for our learning algorithms is the definition of a unique-edge covering family of a hypergraph $H$, which is a collection of sets such that every edge $e \in H$ is contained in at least one set that does not contain any other edge.

\begin{definition}
A collection $\S \subseteq 2^V$ is a \textbf{unique-edge covering family} of $H$ if, for every $e \in H$, there exists $S \in \S$ s.t. $e$ is the unique edge contained in $S$, i.e. $e \subseteq S$ and $e' \not \subseteq S$ for all $e' \in H, e' \neq e$.
\end{definition}

\paragraph{Efficiently searching for the unique edge in a set.} We first show that, given a unique-edge covering family $\S$ of a hypermatching $M$, we can learn $M$. We observe that a set of vertices $S \subseteq V$ contains a unique edge if and only if $Q_M(S) = 1$ and $Q_M(S \setminus \{v\}) = 0$ for some $v \in S$ and that if it contains a unique edge, this edge is $e = \{v \in S: Q_M(S \setminus \{v\}) = 0\}$. This observation immediately leads to the following simple algorithm called $\textsc{FindEdgeParallel}$. 

% \newpage

\begin{algorithm}
	\caption{$\textsc{FindEdgeParallel(S)}$, returns $e$ if $S$ contains a unique edge $e$}
		\begin{algorithmic}[1]
	\INPUT{set $S \subseteq V$ of vertices}
	 \STATE{ \textbf{if} $Q_M(S) = 1$ and $\exists v \in S$ s.t. $Q_M(S \setminus \{v\}) = 0$  \textbf{then return} $\{v \in S: Q_M(S \setminus \{v\}) = 0\}$}
	 \STATE{\textbf{return} None}
	\end{algorithmic} 
	\label{alg:findedgeparallel}
\end{algorithm}	

The main lemma for \textsc{FindEdgeParallel} follows immediately from the above observation.
\begin{restatable}{rLem}{lemfuep}
\label{lem:fuep} For any hypermatching $M$ and set $S \subseteq V$,
\textsc{FindEdgeParallel} is a non-adaptive algorithm with $|S| + 1$ queries that, if $S$ contains a unique edge $e$, returns $e$  and None otherwise.
\end{restatable}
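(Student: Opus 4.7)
The plan is to verify the three claims of the lemma separately: the query count and non-adaptivity, correctness when $S$ contains a unique edge, and correctness when $S$ does not.

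For the query and adaptivity claim, I would simply enumerate the queries made by \textsc{FindEdgeParallel}: the single query $Q_M(S)$ together with the $|S|$ queries $Q_M(S \setminus \{v\})$ for $v \in S$, giving $|S|+1$ queries in total. Since each of these queries is determined solely by the input $S$ and not by the answer to any other query, they can all be issued in one round, so the algorithm is non-adaptive.

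For the case in which $S$ contains a unique edge $e$, I would first note that $e \subseteq S$ gives $Q_M(S)=1$. Then I would split on $v \in S$: if $v \in e$, then $e \not\subseteq S\setminus\{v\}$, and no other edge $e' \neq e$ of $M$ lies in $S$ (by uniqueness), hence none lies in $S\setminus\{v\}$ either, so $Q_M(S\setminus\{v\})=0$; if $v \notin e$, then $e \subseteq S\setminus\{v\}$, so $Q_M(S\setminus\{v\})=1$. Hence $\{v \in S : Q_M(S\setminus\{v\}) = 0\}$ coincides exactly with $e$, and the algorithm returns $e$.

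For the case in which $S$ does not contain a unique edge, there are two sub-cases. If $S$ contains no edge, then $Q_M(S)=0$ and the algorithm returns None. The more interesting sub-case is when $S$ contains two or more distinct edges $e_1, e_2, \dots, e_k$ of $M$; here I would invoke the defining property of a hypermatching, namely that the edges of $M$ are pairwise vertex-disjoint. For any $v \in S$, $v$ belongs to at most one $e_i$, so at least one edge, say $e_j$ with $j\neq i$, satisfies $e_j \subseteq S\setminus\{v\}$, giving $Q_M(S\setminus\{v\})=1$ for every $v \in S$. The guard of the \textbf{if} statement therefore fails and the algorithm returns None, as required. The only conceptual step that uses the hypothesis beyond bookkeeping is this vertex-disjointness argument, and I expect it to be the main (and only genuine) point of the proof.
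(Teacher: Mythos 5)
Your proof is correct and follows essentially the same case analysis as the paper's: the unique-edge case uses $Q_M(S)=1$ together with $\{v: Q_M(S\setminus\{v\})=0\}=e$, and the multiple-edge case uses pairwise vertex-disjointness of the hypermatching to show every query $Q_M(S\setminus\{v\})$ returns $1$. Your write-up is slightly more explicit than the paper's (in particular in spelling out why $v\notin e$ gives answer $1$), but there is no substantive difference.
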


\begin{proof}
It is clear that $\textsc{FindEdgeParallel}(S)$ makes at most $|S| + 1$ queries. If $S$ does not contain an edge, then $Q_M(S) = 0$ and the algorithm returns None. If $S$ contains at least two edges, then there is no intersection between the edges because $M$ is a matching. Therefore, for every $v \in S$, $Q_M(S \setminus \{v\}) = 1$, and the algorithm returns None. The only case left is when $S$ contains a unique edge $e$. In this case $Q_M(S) = 1$ and $e = \{ v \in S, \ Q_M(S \setminus \{v\} = 0\}$. In this case, $\textsc{FindEdgeParallel}(S)$ return $e$.
\end{proof}

In order to obtain a near-linear query complexity for learning a hypermatching,  the query complexity of $\textsc{FindEdgeParallel}$ needs to be improved. Next, we describe an $\O(\log |S|)$-round algorithm, called \textsc{FindEdgeAdaptive}, which finds the unique edge $e$ in a set $S$ with $\O(s \log |S|)$ queries, assuming $|e| \leq s$. The algorithm recursively partitions  $S$ into two arbitrary sets $S_1$ and $S_2$ of equal size. If $Q_M(S_1) =  Q_M(S_2) = 1$, then $S$ contains at least two edges. If $Q_M(S_1) = 1$ and  $Q_M(S_2) = 0$ (or similarly $Q_M(S_1) = 0$ and  $Q_M(S_2) = 1$), then, assuming $S$ contains a single edge, this edge is contained in $S_1$ and we recurse on $S_1$.  The  most interesting case is if $Q_M(S_1) = Q_M(S_2) = 0$. Assuming $S$ contains a single edge $e$, this implies that both $S_1$ and $S_2$ contain vertices in $e$ and we recurse on both $S_1$ and $S_2$. When recursing on $S_1$, the set $S_2$ needs to be included in future queries  to find vertices in $e \cap S_1$ (and vice-versa when recursing on $S_2$). The algorithm thus also specifies an additional argument $T$ for the recursive calls. This argument is initially the empty set, in the case where $Q_M(S_1) = Q_M(S_2) = 0$ the set $S_2$ is added to $T$ when recursing on $S_1$ and vice-versa, and $T$ is included in all queries in the recursive calls. 

\begin{algorithm}
	\caption{$\textsc{FindEdgeAdaptive}(S, s)$, returns $e$ if $S$ contains a unique edge $e$ and $|e| \leq s$}
% 	\label{alg:1round}
	\begin{algorithmic}[1]
	\INPUT{set $S \subseteq V$,  edge size $s$}
	\STATE{\textbf{if} $Q_M(S) = 0$ \textbf{then return}  None}
	\STATE{$\mathcal{S} \leftarrow \{(S, \emptyset)\}, \hat{e} \leftarrow \{\}$}
	\STATE{\textbf{while} $|\mathcal{S}| > 0$ \textbf{do}}
	\STATE{\quad \textbf{if} $|\mathcal{S}| > s$ \textbf{then return} None}
	\STATE{\quad $\mathcal{S}' \leftarrow \{\}$}
	\STATE{\quad \textbf{for} each $(S', T) \in \mathcal{S}$  \textbf{do} (in parallel)}
	\STATE{\quad \quad $S_1, S_2 \leftarrow $ partition $S'$ into two sets of size $|S'|/ 2$ }
	\STATE{\quad \quad \textbf{if} $|S'| = 1$ \textbf{then} add $v \in S'$ to $\hat{e}$}
	    \STATE{\quad \quad \textbf{else if} $Q_M(S_1 \cup T) = 1$ and $Q_M(S_2 \cup T) = 1$ \textbf{then return} None}
	    \STATE{\quad \quad \textbf{else if} $Q_M(S_1 \cup T) = 1$ and $Q_M(S_2 \cup T) = 0$ \textbf{then} add $(S_1, T)$ to $\mathcal{S}'$}
	    \STATE{\quad \quad \textbf{else if} $Q_M(S_1 \cup T) = 0$ and $Q_M(S_2 \cup T) = 1$ \textbf{then} add $(S_2, T)$ to $\mathcal{S}'$}	 
	    \STATE{\quad \quad \textbf{else if} $Q_M(S_1 \cup T) = 0$ and $Q_M(S_2 \cup T) = 0$ \textbf{then}  add $(S_1, S_2 \cup T), (S_2, S_1 \cup T)$ to $\mathcal{S}'$}
	  \STATE{\quad $\mathcal{S} \leftarrow \mathcal{S}'$}
	  \STATE{\textbf{if} $|\hat{e}| > s$, or $Q_M(\hat{e}) = 0,$ or $Q_M(\hat{e} \setminus \{v\}) = 1$ for some $v \in \hat{e}$ \textbf{then return}  None}
	 \RETURN $\hat{e}$
	\end{algorithmic} 
	\label{alg:findvertices}
\end{algorithm}

Let $\S^i = \{(S^i_1, T^i_1), \ldots, (S^i_\ell, T^i_\ell)\}$ be the $\S$ at the beginning of the $i$th iteration of the while loop of \textsc{FindEdgeAdaptive}$(S,s)$. In the case where $S$ contains a single edge $e$ and $|e| \leq s$, the algorithm maintains two invariants (formally stated in Lemma \ref{lem:fvconditions}; proof in Appendix \ref{appendix:learning_hypermatching}). The first is that, for all vertices $v \in e$ and all iterations $i$, either $v\in \hat{e}$ or there exists a unique $j$ such that $v$ is contained in set $S^i_j$. This invariant makes sure that every vertex $v \in e$ will eventually be added to $\hat{e}$. The second is that all sets  $S^i_j$ contain at least one vertex $v \in e$. This invariant explains why, in the base case $|S^i_j| = 1$, we add the single vertex in $S^i_j$ to the solution $\hat{e}$. Additionally, since by construction, $S^i_1, \ldots, S^i_l$ are disjoint, the second invariant also implies that when $|\S^i| > s$, we have that either set $S$ contains more than one edge or that it contains a single edge of size greater than $s$, in which case the algorithm returns None.  This stopping condition ensures that at most $2s$ queries are performed at each iteration of the while loop. 
% and that the algorithm uses at most $2s \log_2 |S| + (s + 1)$ total queries.

% We denote the pairs $(S', T) \in \S$ at iteration $i$ of the while loop by $\S^i = \{(S^i_j, T^i_j)\}_j$. In the case where $S$ contains a single edge $e$ and $|e| \leq s$, the algorithm maintains two invariants. The first is that all sets  $S^i_j$ contain at least one vertex $v \in e$. This invariant explains why, in the base case $|S^i_j| = 1$, we add the single vertex in $S^i_j$ to the solution $\hat{e}$. The second invariant is that, for all vertices $v \in e$ and all iterations $i$, there exists a unique $j$ such that $v$ is contained in set $S^i_j$.  Thanks to both invariants, when $|\S^i| > s$, we have that either set $S$ contains more than one edge or that it contains a single edge of size greater than $s$, in which case the algorithm returns None. This stopping condition ensures that at most $2s$ queries are performed at each iteration and that the algorithm uses at most $2s \log_2 |S| + (s + 1)$ total queries.

%We note that for any hypermatching $M$ and set $S$ of vertices, if $Q_M(S) = 1$ and $Q_M(S \setminus \{v\}) = 0$ for some $v \in S$, then set $S$ contains a single edge, which is $e = \{v \in S: Q_M(S \setminus \{v\}) = 0\}$. The main idea of the algorithm is to adaptively samples sets of vertices $S_i$ such that, with high probability, every edge is contained in at least one sample which does not contain any other edge. For 
\begin{restatable}{rLem}{lemfvconditions}
\label{lem:fvconditions}
Assume there is a unique edge $e \in M$ such that $e \subseteq S$ and that $|e| \leq s$, then, for every vertex $v \in e$ and at every iteration $i$ of the while loop of \textsc{FindEdgeAdaptive}$(S,s)$ we have that (1) either $v \in \hat{e}$ or $v \in S^i_j$ for a unique $j \in [\ell]$ and (2) $e \cap S^i_j \neq \emptyset$ for all $j \in [\ell]$.
\end{restatable}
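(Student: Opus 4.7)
The plan is to prove the two invariants by induction on the iteration index $i$, after strengthening the inductive hypothesis with two auxiliary invariants that are needed to close the case analysis: \textbf{(3)} $S^i_j \cap T^i_j = \emptyset$ and $S^i_j \cup T^i_j \subseteq S$, and \textbf{(4)} $e \subseteq S^i_j \cup T^i_j$. The base case $i = 1$ is immediate, since $\mathcal{S}^1 = \{(S,\emptyset)\}$, $\hat{e} = \emptyset$, and $e \subseteq S$ by hypothesis.

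For the inductive step, fix $(S', T) = (S^i_j, T^i_j) \in \mathcal{S}^i$ and consider the four branches of the inner loop. If $|S'| = 1$, then by (2) the lone vertex of $S'$ lies in $e$ and is moved to $\hat{e}$, preserving (1). If $Q_M(S_1 \cup T) = 1$ and $Q_M(S_2 \cup T) = 0$, the positive query together with (3) and the uniqueness of $e$ in $S$ forces $e \subseteq S_1 \cup T$; combined with $S_2 \cap (S_1 \cup T) = \emptyset$ (using $S_1 \cap S_2 = \emptyset$ and (3)), this yields $e \cap S_2 = \emptyset$, and hence $e \cap S_1 = e \cap S' \neq \emptyset$ by (2), so all four invariants pass to the new pair $(S_1, T)$. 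The symmetric sub-case is identical. If $Q_M(S_1 \cup T) = Q_M(S_2 \cup T) = 0$, invariant (4) gives $e \subseteq S_1 \cup S_2 \cup T$ while each zero answer implies $e \not\subseteq S_k \cup T$, so $e$ meets both $S_1$ and $S_2$; hence (2) passes to both children $(S_1, S_2 \cup T)$ and $(S_2, S_1 \cup T)$, with (3) and (4) inherited from $S_1 \cap S_2 = \emptyset$ and $S' \cap T = \emptyset$. The remaining branch, $Q_M(S_1 \cup T) = Q_M(S_2 \cup T) = 1$, must be ruled out entirely: both positives combined with uniqueness force $e \subseteq (S_1 \cup T) \cap (S_2 \cup T) = T$ (using $S_1 \cap S_2 = \emptyset$), which contradicts $e \cap S' \neq \emptyset$ from (2) paired with $S' \cap T = \emptyset$ from (3), so the algorithm never actually takes this branch under the lemma's hypotheses.

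The uniqueness part of (1) at iteration $i+1$ transfers automatically: each new $S^{i+1}_{j'}$ is a subset of one specific $S^i_j$, so by (1) at iteration $i$ a vertex $v \in e \setminus \hat{e}$ can only appear in the children of the unique $S^i_{j^*}$ containing it, and the case analysis above shows it appears in exactly one child (or is added to $\hat{e}$). I expect the main obstacle to be spotting that the auxiliary invariants (3) and (4) are precisely what is needed both to rule out the ``both queries $1$'' branch and to guarantee that at least one half of a split in the ``both queries $0$'' branch still meets $e$; without carrying these strengthenings through the induction, the inductive step cannot be closed from (1) and (2) alone.
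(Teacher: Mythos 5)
Your proof is correct and follows essentially the same route as the paper's: induction over the while-loop iterations with a strengthened hypothesis, where your auxiliary invariant $e \subseteq S^i_j \cup T^i_j$ (combined with $S^i_j \cup T^i_j \subseteq S$ and the uniqueness of $e$ in $S$) is equivalent to the paper's auxiliary invariant $Q_M(S^i_j \cup T^i_j) = 1$. The only difference is that you carry the disjointness and containment bookkeeping explicitly, which the paper leaves implicit.
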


We show that if $\textsc{FindEdgeAdaptive}(S, s)$ returns $\hat{e}$, then $\hat{e}$ is an edge. If there is a unique edge $e \subseteq S$ and $|e| \leq s$, then the algorithm returns $\hat{e} = e$.
 
\begin{restatable}{rLem}{lemfvcorrect}
\label{lem:fvcorrect}
For any $S \subseteq V$ and $s \in [n]$,
if there is a unique edge $e \in M$ such that $e \subseteq S$ and this edge is such that $|e| \leq s$, then $\textsc{FindEdgeAdaptive}(S, s)$ returns the edge $e$. Otherwise, it either returns None or an edge $e \in M$. $\textsc{FindEdgeAdaptive}(S, s)$ uses at most $2s \log_2 |S| + (s + 1)$ queries in at most  $\log_2 |S|$ rounds.
\end{restatable}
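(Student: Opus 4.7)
The plan is to split correctness into two cases: (i) the favorable case in which $S$ contains a unique edge $e \in M$ of size at most $s$, where I will lean directly on the invariants of Lemma~\ref{lem:fvconditions}; and (ii) the catch-all case, where I will argue that the only things the algorithm can return are $\textsf{None}$ or a genuine edge of $M$. The complexity bound will then fall out of the size-$s$ cap on $|\S^i|$ enforced by line~4 and the halving of set sizes.

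In the favorable case, I would first observe that at every iteration the sets $\{S^i_j\}_j$ are pairwise disjoint by construction, since each arises from partitioning a parent set into disjoint halves. Combined with invariant~(2) of Lemma~\ref{lem:fvconditions}, which says each $S^i_j$ intersects $e$, this forces $|\S^i| \le |e| \le s$, so the early-exit in line~4 never triggers. The line~9 branch also never fires, because $e$ is the unique edge in $S$, so at most one of $S_1 \cup T$ and $S_2 \cup T$ can contain an edge. After at most $\log_2 |S|$ halvings every surviving $S^i_j$ has size $1$; by invariant~(1) each vertex of $e$ is assigned to $\hat{e}$ (so $e \subseteq \hat{e}$), and by invariant~(2) the singleton that triggers line~8 always lies in $e$ (so $\hat{e} \subseteq e$). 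Hence $\hat{e} = e$, and the final verification passes because $Q_M(e) = 1$ and deleting any vertex of the unique edge $e$ destroys it.

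In the catch-all case, the algorithm either short-circuits at line~1, line~4, line~9, or the final verification and returns $\textsf{None}$, or else reaches line~15 with some $\hat{e}$ that passed $Q_M(\hat{e}) = 1$ and $Q_M(\hat{e} \setminus \{v\}) = 0$ for every $v \in \hat{e}$. These two conditions force $\hat{e}$ to be an edge of $M$: the first guarantees some $e' \in M$ with $e' \subseteq \hat{e}$, while the second forces $v \in e'$ for every $v \in \hat{e}$, giving $\hat{e} \subseteq e'$, so $\hat{e} = e' \in M$. Thus the only possible outputs are $\textsf{None}$ or an edge of $M$.

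For the complexity, each iteration of the while loop processes at most $s$ pairs (guaranteed by line~4) and performs two queries per pair in parallel, yielding at most $2s$ queries per round; since halving drives set sizes to $1$ in at most $\log_2 |S|$ rounds, the loop contributes at most $2s \log_2 |S|$ queries and $\log_2 |S|$ adaptive rounds. The initial check $Q_M(S)$ plus the final verification---one call $Q_M(\hat{e})$ and at most $|\hat{e}| \le s$ calls $Q_M(\hat{e} \setminus \{v\})$---contribute the remaining $s + 1$ queries (folding the initial check into the first round of the loop) without increasing the adaptive round count beyond $\log_2|S|$. The main obstacle is more bookkeeping than insight: one must verify that propagating the companion sets $T$ preserves invariant~(2), so that each query $Q_M(S_i \cup T)$ genuinely tests whether $e$ intersects $S_i$; but this is precisely what Lemma~\ref{lem:fvconditions} supplies, so the heavy lifting has already been front-loaded.
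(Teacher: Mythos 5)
Your proof is correct and follows essentially the same structure as the paper's: appeal to the invariants of Lemma~\ref{lem:fvconditions} for the favorable case (each $v\in e$ halves down to a singleton and is added, each $v\notin e$ is excluded by invariant~(2)), a direct argument that the final verification on line~14 certifies $\hat{e}\in M$ in the catch-all case, and the identical $2s$-per-round times $\log_2|S|$-rounds query accounting. The only cosmetic differences are that you fold the ``$S$ contains no edge'' case into the catch-all (the paper treats it separately) and you add the helpful explicit observation that disjointness of the $S^i_j$ plus invariant~(2) bounds $|\S^i|\le|e|\le s$, which the paper leaves implicit.
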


\begin{proof}
First, assume that there is a unique edge $e \in M$ such that $e \subseteq S$ and that $|e| \leq s$. Consider $v \in S$. If $v \in e$, then by Lemma~\ref{lem:fvconditions}, we have that at every iteration $i$ of the while loop, either $v \in \hat{e}$ or $v \in S^i_j$ for some $j \in \ell$. Since $|S^i_j| = |S^{i-1}_{j'}|/2$ for all $i, j, j'$, at iteration $i^{\star} = \log_2 n$, either $v \in \hat{e}$ or $v \in S^{i^{\star}}_j$ for some $j$ and $|S^{i^{\star}}_j| = 1$, in which case $v$ is then added to $e$ by the algorithm. Next, if $v \not \in e$, since $e \cap S^i_j \neq \emptyset$ for all $i, j$ by Lemma~\ref{lem:fvconditions}, $v$ is never added to $\hat{e}$. Thus, \textsc{FindEdgeAdaptive}$(S,s)$ returns the edge $e$

% It remains to show that $\textsc{FindVertices}(V', d)$ returns None in any of the following cases: $V'$ does not contain any edges, $V'$ contains at least two edges, $V'$ contains a unique edge $e$ of size $|e| > d$.

Assume now that $S$ does not contain any edges. In this case, every query is answered by zero, and \textsc{FindEdgeAdaptive}$(S,s)$ returns None.  Finally, assume that $S$  contains at least two edges. If \textsc{FindEdgeAdaptive}$(S,s)$ does not return None, then Step 14 ensures that the returned edge $\hat{e}$ has size at most $s$. Step 14 also ensures that $\hat{e}$ contains at least one edge. If $\hat{e}$ strictly contains an edge, then there is a vertex $v \in \hat{e}$ such that $Q_M(\hat{e} \setminus \{v\}) = 1$, in which case \textsc{FindEdgeAdaptive}$(S,s)$ would have returned None. Therefore, $\hat{e}$ is an edge of $M$ of size less than $s$.

We now show that \textsc{FindEdgeAdaptive}$(S,s)$ runs in $\log_2 |S|$ rounds and makes at most $2s\log_2 |S| + (s+1)$ queries. At every iteration of the while loop, \textsc{FindEdgeAdaptive}$(S,s)$ makes less than $2s$ queries (recall that if $|\S| > s$ then the algorithm returns None). We show that the number of iterations of the while loop is less than $\log_2 |S|.$ At every iteration of the while loop, for every couple $(S,T)$ in $\S$, the size of $S$ is divided by 2. After $\log_2 |S|$ iterations, either $\S$ is empty or for every couple $(S,T)$ in $\S$ we have $|S| = 1$. This guarantees that no sets are added to $\S'$. Therefore, the number of iterations of the while loop is less than $\log_2 |S|$. This also shows that the adaptive complexity of \textsc{FindEdgeAdaptive}$(S,s)$ is less than $\log_2 |S|$. After the while loop, we make at most $s+1$ queries
and the total number of queries is less than $2s \log |S| + s +1$.
\end{proof}

\paragraph{Constructing a unique-edge covering family.} Next, we give an algorithm called \textsc{FindDisjointEdges}  that, for any hypermatching $M$, constructs a unique-edge covering family $\S$ of the $\alpha$-near-uniform hypermatching $M_{s, \alpha, V'}$. $M_{s, \alpha, V'}$ is defined as the subgraph of $M$ over edges of size between $s/\alpha$ and $s$ and over vertices in $V'$ that are not in an edge of size less than $s/\alpha$. The  unique-edge covering family $\S$ is constructed with i.i.d. samples that contain each vertex in $V'$, independently, with probability $n^{-\alpha/s}$. The algorithm then calls a \textsc{FindEdge} subroutine, which is either \textsc{FindEdgeParallel} or \textsc{FindEdgeAdaptive}, on samples  that contain at least one edge.

%The main component of the algorithm is a subroutine \textsc{FindDisjointEdges} which, given  parameter $\rho > 1$,  edge size $s < n$, and vertices $V' \subseteq V$, learns all the edges $e \subseteq V'$ of size $|e| \in [d,  \rho d]$. \textsc{FindDisjointEdges} first constructs $n^{\rho}\log{n}$ i.i.d. samples, each of which contains each vertex $v \in V'$  with probability $n^{-1/d}$, independently.  The idea is to ensure that for every edge $e$, there exists a sample $S_i$ that contains $e$ and no other edges from the $M$. When this is the case, we have $Q_M(S_i) = 1$ and the vertices of $e$ are the vertices $v \in S_i$ such that $Q_M(S_i \setminus \{v\}) = 0$. 

\begin{algorithm}
	\caption{$\textsc{FindDisjointEdges}(s, \alpha, V', \textsc{FindEdge})$, returns edges of size between $\frac{s}{\alpha}$ and $ s$}
% 	\label{alg:1round}
	\begin{algorithmic}[1]
	\INPUT{edge size $s$, parameter $\alpha \geq 1$, set of vertices $V' \subseteq V$, subroutine $\textsc{FindEdge}$}
	\STATE{$\hat{M} \leftarrow \emptyset$}
	\STATE{\textbf{for} $i = 1, \ldots, \ n^{\alpha}\log^2{n}$ \textbf{do} (in parallel)}
	    \STATE{\quad $S_i \leftarrow $ set of independently sampled vertices from $V'$, each with probability $n^{-\alpha/s}$.}
	    \STATE{\quad  \textbf{if} $Q_M(S_i) = 1$ \textbf{then}}
	\STATE{\quad \quad $e \leftarrow \textsc{FindEdge}(S_i,s)$}
	\STATE{\quad  \quad \textbf{if} $e$ is not None \textbf{then} add $e$ to $\hat{M}$}
	\RETURN $\hat{M}$.
	\end{algorithmic} 
	\label{alg:adaptive_new}
\end{algorithm}

We show that $n^{\alpha} \log^2 n$ such samples suffice to construct a unique-edge covering family $\S$ of the $\alpha$-near-uniform hypermatching $M_{s,\alpha,V'}$.

\begin{restatable}{rLem}{lemmafindedges} 
\label{lemma:find_disjoint_edjes}
Assume $s/\alpha \geq 2$, $\alpha \geq 1$, that the $\textsc{FindEdge}$ subroutine uses at most $q$ queries in at most $r$ rounds, and let $\S = \{S_i\}_i$ be the $n^{\alpha}\log^2 {n}$ samples, then $\textsc{FindDisjointEdges}$ is an $(r+1)$-adaptive algorithm with $n^{\alpha} \log^2 n + 2\alpha qn^{\alpha}\log^2n/s$ queries such that, with probability $1 - \O(n^{-\log n})$, $\S$ is a unique-edge covering family of the hypermatching $M_{s, \alpha, V'} = \{e \in M : e \subseteq V', s/\alpha \leq |e| \leq s \}$ and thus $\hat M = M_{s, \alpha, V'}$.
%then, with probability $1 - \O(n^{-\log n})$, $\textsc{FindDisjointEdges}(d, \rho, V', )$ learns all edges $e \subseteq V'$ such that $|e| \in \left[d, \lfloor \rho d \rfloor \right]$.
%Assume $s \geq 3$ and $\rho > 1$, then, with probability $1 - e^{-\Omega(\log^2 n)}$, $\textsc{FindDisjointEdges}(d, \rho, V', )$ learns all edges $e \subseteq V'$ such that $|e| \in \left[d, \lfloor \rho d \rfloor \right]$.
\end{restatable}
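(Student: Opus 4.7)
The proof splits into three parts---adaptivity, query count, and the unique-edge covering property---and $\hat M = M_{s,\alpha,V'}$ follows from the covering property together with the correctness guarantees of $\textsc{FindEdge}$.

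For adaptivity, the $N := n^{\alpha}\log^2 n$ queries $Q_M(S_i)$ of Step~4 depend only on the independently sampled sets $S_i$ and can all be issued in parallel in a single round. The subsequent $\textsc{FindEdge}(S_i,s)$ calls in Step~5 are independent of each other and each uses at most $r$ rounds, so they can be parallelised in $r$ further rounds, giving a total of $r+1$.

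For the query count, write the total as $N + qC$ where $C := |\{i : Q_M(S_i)=1\}|$. By hypothesis, $V'$ contains no vertex of an edge smaller than $s/\alpha$, so every edge of $M$ lying inside $V'$ has size at least $s/\alpha$; since $M$ is a matching, there are at most $\alpha n/s$ such edges, each contained in a given $S_i$ with probability $p^{|e|}\le p^{s/\alpha}=n^{-1}$. A union bound gives $\Pr[Q_M(S_i)=1]\le \alpha/s$, hence $\E[C] \le \alpha N/s$. A Chernoff bound on the independent Bernoullis $Q_M(S_i)$ gives $C\le 2\alpha N/s$ except with probability $n^{-\omega(\log n)}$, which, multiplied by $q$, yields the claimed additive term.

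The heart of the proof is the covering property. Fix $e \in M_{s,\alpha,V'}$. I would lower bound the probability that a single $S_i$ uniquely covers $e$ by a product of (i)~$\Pr[e\subseteq S_i] = p^{|e|} \ge p^s = n^{-\alpha}$, and (ii)~the conditional probability that no other edge $e' \in M_{s,\alpha,V'}$ lies in $S_i$. Because $M$ is a matching, every such $e'$ is disjoint from $e$, so the events ``$e'\subseteq S_i$'' and ``$e\subseteq S_i$'' are independent. A union bound over the $\le \alpha n/s$ candidates $e'$, each contributing at most $p^{s/\alpha} = n^{-1}$, bounds~(ii) from below by $1 - \alpha/s \ge 1/2$ (using $s/\alpha \ge 2$). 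Hence each $S_i$ uniquely covers $e$ with probability at least $n^{-\alpha}/2$, and by independence across the $S_i$'s, $\Pr[\text{no }S_i\text{ uniquely covers }e] \le (1-n^{-\alpha}/2)^{N}\le e^{-\log^2 n/2}=n^{-\Omega(\log n)}$. A union bound over the $\le n$ edges of $M_{s,\alpha,V'}$ yields total failure probability $O(n^{-\log n})$. Once $\S$ is a unique-edge covering family, Lemmas~\ref{lem:fuep} and~\ref{lem:fvcorrect} guarantee that the corresponding $\textsc{FindEdge}(S_i,s)$ call returns $e$, giving $\hat M = M_{s,\alpha,V'}$.

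The main obstacle is the independence step in~(ii): the clean product bound depends crucially on $M$ being a matching, so that distinct edges are vertex-disjoint and their inclusion events factor. Without this assumption, the events ``$e\subseteq S_i$'' and ``$e'\subseteq S_i$'' couple through shared vertices and the simple product bound breaks---this is exactly the extra difficulty the authors flag for the low-degree near-uniform case, where a more delicate analysis is needed.
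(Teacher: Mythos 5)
Your treatment of the covering property --- the heart of the lemma --- is correct and is essentially the paper's argument in a slightly tighter packaging. The paper proceeds in two steps (first, Chernoff shows each edge $e$ with $|e|\in[s/\alpha,s]$ lies in at least $\log^2 n/2$ samples; second, conditioned on $e\subseteq S_i$ the sample contains another edge with probability at most $\alpha/s\le 1/2$, so all $\log^2 n/2$ samples fail only with probability $(\alpha/s)^{\log^2 n/2}$), whereas you fold both into a single per-sample success probability $p^{|e|}(1-\alpha/s)\ge n^{-\alpha}/2$ and use independence across the $N$ samples. Both routes rest on exactly the same key fact you identify: $\P(e'\subseteq S_i\mid e\subseteq S_i)=\P(e'\subseteq S_i)$ because distinct edges of a matching are vertex-disjoint. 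One small imprecision: in step~(ii) the union bound must range over \emph{all} edges of $M$ contained in $V'$ (including those of size greater than $s$), not only over $e'\in M_{s,\alpha,V'}$; since every edge inside $V'$ has size at least $s/\alpha$ by hypothesis, the same $\alpha n/s\cdot n^{-1}$ bound applies, so this is cosmetic.

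There is, however, a genuine gap in your query count. You bound the number $C$ of positive samples by $2\alpha N/s$ with $N=n^{\alpha}\log^2 n$, i.e.\ $C\le 2\alpha n^{\alpha}\log^2 n/s$, and then assert that $qC$ ``yields the claimed additive term'' $2\alpha q n\log^2 n/s$. These differ by a factor of $n^{\alpha-1}$: your derivation only matches the stated bound when $n^{\alpha}=\O(n)$, i.e.\ for the choice $\alpha=1/(1-1/(2\log n))$, and does \emph{not} give the lemma's bound for the $\alpha=2$ instantiation used with \textsc{FindEdgeParallel} in Theorem~\ref{thm:learning_hypermatching_new}, where it would inflate the query complexity by a factor of $n$. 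The paper reaches the stated $2\alpha n\log^2 n/s$ by a different counting: it bounds the number of positive samples by (number of edges contained in $V'$, at most $\alpha n/s$) times (maximum number of samples containing any fixed edge, argued to be $2\log^2 n$ via Chernoff), rather than by $N\cdot\P(Q_M(S_i)=1)$. You would need to switch to this per-edge incidence count (and note that it, too, requires care for edges of size strictly between $s/\alpha$ and $s$, whose expected sample-membership count exceeds $\log^2 n$ when $\alpha>1$) to obtain the additive term as stated.
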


% \begin{proof}[Proof Sketch] %We first note that a sample $S_i$ contains a single edge $e \in M$ if and only if $Q_M(S_i) = 1$ and $\exists v \in S_i$ s.t. $Q_M(S_i^v) = 0$. 
% We show that (1) every edge $e \in M$ is contained in between $(\log^2 n)/2$  and $2\log^2 n$ samples w.h.p. (by the Chernoff bound) and (2) conditioned on $e \subseteq S_i$, the probability that $S_i$ contains another edge is at most $\alpha/s$ (by a union bound over all other edges). This implies that, with probability $1 - e^{-\Omega(\log^2 n)}$, every edge $e$ is contained in at least one sample that does not contain any other edge. Since there are at most $\alpha n/s$ edges in a hypermatching of minimum edge size $s/\alpha$, the total number of samples $S$ such that $Q_M(S) = 1$ is at most $(2\alpha n\log^2 n)/(s)$ and the query complexity is thus $n^{\alpha} \log^2 n + 2\alpha q n\log^2 n/s$.
% \end{proof}

\begin{proof}
We know that a sample $S_i$ contains at least an edge if and only if $Q_M(S_i) = 1$. We also know from Lemma \ref{lem:fuep} and Lemma \ref{lem:fvcorrect} that if $S_i$ contains a unique edge $e$ of size less than $s$, then the subroutine $\textsc{FindEdge}(S_i, s )$ will return $e$. Therefore, to learn all edges of size in $\left[s/\alpha, s \right]$, we only need to make sure that each such edge appears in at least one sample that does not contain any other edges, in other words, that $\S$ is a unique-edge covering family for $M_{s,\alpha,V'} = \{ e \in M: e \subseteq V', \ s/\alpha \leq |e| \leq  s$\}. Below, we show that it is the case w.p. $1-O(n^{-\log n}) = 1 -e^{-\Omega(\log^2 n)}$.

We first show that with high probability, each edge $e$ of size $|e| \in \left[s/\alpha, s \right]$ is contained in at least $\log^2 n/2$ sample $S_i$'s. We use $X_e$ to denote the number of samples containing $e$, then we have
$\E(X_e) = n^{\alpha} \log^2 n \cdot n^{-\frac{\alpha|e|}{s}} \geq n^{\alpha} \log^2 n \cdot n^{-\alpha} = \log^2 n.$ By Chernoff bound, we have
$P(X_e \leq \log^2 n/2) \leq n^{-\log n/8}.$
As there are at most $\alpha n/s$ edges of size between $s/\alpha$ and $s$, by a union bound, $P(\exists e \in M \text{ s.t. } |e|\in \left[s/\alpha, s \right], X_e \leq \log^2 n/2) \leq \frac{\alpha n}{s}n^{-\log n/8} \leq n^{-(\log n/8-1)}.$ and we get $P(\forall e \in M \text{ s.t. } |e|\in \left[s/\alpha, s \right], X_e \geq \log^2 n/2) \geq 1 - n^{-(\log n/8-1)}.$ From now on we condition on the event that all edges $e$ whose size is between $s/\alpha$ and $s$ are included in at least $\log^2 n/2$ samples. We show below that given $e \subseteq S_i$ for a sample $S_i$, the probability that there exists another edge of size at least $s$ in $S_i$ is upper bounded by $\alpha/s$. Recall that $M$ is the hidden matching we want to learn. We abuse notation and use $e' \in M \cap S_i$ to denote that an edge $e'$ is both in the matching $M$ and included in the sample $S_i$. We have
\begin{align*}
\P(\exists e' \in M \cap S_i, e' \neq e \ | \ e \subseteq S_i) & \leq  \sum_{e'\in M, e' \neq e} \P(e' \subseteq S_i \ | \ e \subseteq S_i) \\
& =  \sum_{e'\in M, e' \neq e} \P(e' \subseteq S_i)
\leq  \frac{\alpha n}{s} \cdot (n^{-\frac{\alpha}{s}})^{\frac{s}{\alpha}}  = \frac{\alpha}{s},
\end{align*}
where the first inequality uses a union bound, the first equality is due to the fact that $M$ is a matching and thus $e\cap e' = \emptyset$ and that vertices are sampled independently, and the second inequality follows because the total number of remaining edges is upper bounded by $\alpha n/s$ and each edge is in $S_i$ with probability at most $(n^{-\frac{\alpha}{s}})^{\frac{s}{\alpha}}$. As each edge $e$ with size between $s/\alpha$ and $s$ is contained in at least $\log^2 n/2$ samples, we have 
\begin{align*}
\P(\forall S_i \text{ s.t. } e \subseteq S_i, \exists \ e' \in M \cap S_i, e' \neq e) \leq (\frac{s}{\alpha})^{-\log^2 n /2} \leq n^{-\log n/4},
\end{align*}
where the last inequality is since $s/\alpha \geq 2$. By another union bound on the  edges of size between $s/\alpha$ and $ s $, $
\P(\exists e \text{ s.t. } |e| \in \left[s/\alpha, s\right], \forall S_i \text{ s.t. } e \subseteq S_i, \exists e' \in M \cap S_i, e' \neq e) \leq n^{-(\log n/2-1)}.$
We can thus conclude that with probability at least $1-O(n^{-\log n}) = 1 - e^{-\Omega(\log^2 n)}$, for all $e \in M$ with size between $s/\alpha$ and $s$, there exists at least one sample $S_i$ that contains $e$ but no other remaining edges. By Lemma \ref{lem:fuep} and Lemma \ref{lem:fvcorrect}, $e$ is returned by $\textsc{FindEdge}(S_i,s)$, and $e$ is added to the matching $\hat{M}$ that is returned by $\textsc{FindDisjointEdges}(s,\rho,V')$.

Next, we show that $\textsc{FindDisjointEdges}$ is an $(r+1)$-adaptive algorithm that requires $n^{\alpha} \log^2 n +2\alpha q n^{\alpha}
\log^2 n/s$ queries. We first observe that $\textsc{FindDisjointEdges}$ makes only parallel calls to $\textsc{FindEdges}$ (after verifying that $Q_M(S_i) = 1$). Therefore, since $\textsc{FindEdges}$ runs in at most $r$ rounds, we get that $\textsc{FindDisjointEdges}$ runs in at most $r+1$ rounds. To prove a bound on the number of queries, we first argue using a Chernoff bound that with probability $1 - e^{-\Omega(\log^2 n)}$, every edge $e$ of size at least $s/\alpha$ is in at most $2n^{\alpha - 1}\log^2 n$ samples $S_i$. Therefore there are at most $\frac{2\alpha n^{\alpha}}{s} \log^2n$ samples $S_i$ such that $Q_M(S_i) = 1$. For every one of these samples, we call $\textsc{FindEdges}(S_i,s)$, which makes $q$ queries by assumption. Therefore, with high probability $1 - e^{-\Omega(\log^2 n)}$, $\textsc{FindDisjointEdges}$ makes $\O(n^{\alpha} \log^2 n + \alpha q  \frac{2n^{\alpha}}{s} \log^2n)$ queries.
\end{proof}

\paragraph{The main algorithm for hypermatchings.} The main algorithm first  finds edges of size $1$ by brute force with $\textsc{FindSingletons}$. It then iteratively learns the edges of size between $s/\alpha$ and $ s$ by calling $\textsc{FindDisjointEdges}(s,\alpha, V')$, where $V'$ is the set of vertices that are not in edges learned in previous iterations. At the end of each iteration, the algorithm increases $s$. We obtain the main theorem for learning hypermatchings, the proof of which is given in Appendix \ref{appendix:learning_hypermatching}.

\begin{algorithm}
	\caption{\textsc{FindMatching}$(\alpha, \textsc{FindEdge})$, learns a hypermatching.}
% 	\label{alg:1round}
	\begin{algorithmic}[1]
	\INPUT{parameter $\alpha \geq 1$, subroutine $\textsc{FindEdge}$}
	\STATE{$\hat{M} \leftarrow \textsc{FindSingletons}, s \leftarrow 2\alpha$}
	\STATE{\textbf{while} $s < n$ \textbf{do}}
	   \STATE{\quad $V' \leftarrow V \setminus \{v : v \in e \text{ for some } e \in \hat{M}\}$}
	    \STATE{\quad $\hat{M} \leftarrow \hat{M} \cup \textsc{FindDisjointEdges}(s,\alpha, V', \textsc{FindEdge})$}
	    \STATE{\quad $s \leftarrow \lfloor   \alpha s \rfloor + 1$}
	\RETURN $\hat{M}$.
	\end{algorithmic} 
	\label{alg:adaptive_new_main}
\end{algorithm}

\vspace{-2mm}

\begin{restatable}{rThm}{thmlearninghypermatchingnew}\label{thm:learning_hypermatching_new} Let $M$
be a hypermatching, then $\textsc{FindMatching}$ learns $M$ w.h.p either
%and assume the $\textsc{FindEdge}$ subroutine uses at most $q$ queries in at most $r$ rounds. Then, $\textsc{FindMatching}$ learns $M$ w.h.p., in $O(r\log n / \log \rho)$ rounds and with at most $\O(n^{\rho}\log^3 n/ \log \rho + 2n\log^4n /\log \rho)$  with \textsc{FindEdgeAdaptive}, and $\O(n^{\rho} \log^3 n/ \log \rho + 2n^2\log^3n / \log \rho)$
%queries with \textsc{FindEdgeParallel}. In particular, $\textsc{FindMatching}$ learns $M$
 in $\O(\log^3 n)$ rounds using $\O(n \log^5 n)$ queries,  with $\alpha = 1 / (1 - 1/(2\log n))$ and \textsc{FindEdgeAdaptive}, or
 in $\O(\log n)$ rounds using $\O(n^{3} \log^3 n)$ queries, with $\alpha = 2$ and \textsc{FindEdgeParallel}.\end{restatable}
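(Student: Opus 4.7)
The plan is to combine three ingredients: (i) the guarantee from Lemma~\ref{lemma:find_disjoint_edjes} applied to each iteration of the while loop, (ii) a geometric covering argument showing that the intervals $[s/\alpha, s]$ traversed by the algorithm cover every possible edge size, and (iii) a careful bookkeeping of the query/round bounds for the two parameter choices.

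\textbf{Correctness.} First, $\textsc{FindSingletons}$ finds all edges of size one (one query per vertex). Then, at iteration $i$, the algorithm knows the invariant that every vertex already matched in $\hat M$ is excluded from $V'$, so the restricted hypermatching $M_{s_i,\alpha,V'}$ is exactly $\{e\in M\setminus \hat M : s_i/\alpha\le |e|\le s_i\}$. Applying Lemma~\ref{lemma:find_disjoint_edjes} (which requires $s_i/\alpha\ge 2$, satisfied since $s_0=2\alpha$ and the updates are non-decreasing), each call succeeds w.p.\ $1-O(n^{-\log n})$. The update $s_{i+1}=\lfloor\alpha s_i\rfloor+1$ satisfies $s_{i+1}/\alpha\le s_i+1/\alpha\le s_i+1$ and $s_{i+1}\ge s_i+1$, so the integer ranges $[s_i/\alpha,s_i]$ chain together without gaps and eventually reach $n$. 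A union bound over the $O(\log_\alpha n)$ iterations (at most $O(\log^2 n)$ even in case 1) gives overall success probability $1-O(n^{-\log n}\log^2 n)=1-o(1)$.

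\textbf{Round and query bounds, case 1} ($\alpha = 1/(1-1/(2\log n))$ and \textsc{FindEdgeAdaptive}). The number of iterations is $\log_\alpha n = (\log n)/\log\alpha$, and since $\log\alpha=\log(1+1/(2\log n-1))=\Theta(1/\log n)$, we have $O(\log^2 n)$ iterations. Each call to \textsc{FindEdgeAdaptive} (Lemma~\ref{lem:fvcorrect}) uses $q=O(s\log n)$ queries in $r=O(\log n)$ rounds, so Lemma~\ref{lemma:find_disjoint_edjes} gives per-iteration cost $O(\log n)$ rounds and $n^\alpha\log^2 n+O(n\log^3 n)$ queries. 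The key calculation is $n^\alpha=n\cdot n^{1/(2\log n-1)}=O(n)$, using $n^{1/\log n}=2$, so each iteration costs $O(n\log^3 n)$ queries. Multiplying by $O(\log^2 n)$ iterations yields the stated $O(\log^3 n)$ rounds and $O(n\log^5 n)$ queries.

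\textbf{Round and query bounds, case 2} ($\alpha=2$ and \textsc{FindEdgeParallel}). Now $s$ at least doubles each iteration, giving $O(\log n)$ iterations. \textsc{FindEdgeParallel} is non-adaptive with $q=|S|+1\le n+1$ queries (Lemma~\ref{lem:fuep}), so Lemma~\ref{lemma:find_disjoint_edjes} yields $O(1)$ rounds and $n^2\log^2 n+O(n^2\log^2 n/s)=O(n^2\log^2 n)$ queries per iteration. Summing over $O(\log n)$ iterations gives $O(\log n)$ rounds and $O(n^2\log^3 n)$ queries.

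\textbf{Main obstacle.} The delicate part is the case~1 query bound: establishing that the seemingly exponential factor $n^\alpha$ is in fact linear in $n$ relies on the specific choice $\alpha=1/(1-1/(2\log n))$, which is calibrated precisely so that $n^{\alpha-1}=n^{1/(2\log n-1)}=O(1)$ while simultaneously keeping $\log_\alpha n=O(\log^2 n)$. Balancing these two effects, together with absorbing the $O(s\log n)$ subroutine cost of \textsc{FindEdgeAdaptive} against the $1/s$ factor in Lemma~\ref{lemma:find_disjoint_edjes}, is what makes the near-linear query complexity go through.
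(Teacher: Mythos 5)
Your proposal is correct and follows essentially the same structure as the paper's proof: invoking Lemma~\ref{lemma:find_disjoint_edjes} per iteration, bounding the number of iterations by $\log_\alpha n$, computing per-iteration cost from the subroutine's $(q,r)$ parameters, and then showing $n^\alpha = O(n)$ and $1/\log\alpha = O(\log n)$ for the specific choice $\alpha = 1/(1 - 1/(2\log n))$ (the paper does this via Claim~\ref{claim:1exponential_inequality}). Your explicit verification that the intervals $[s_i/\alpha, s_i]$ chain without gaps is a small point of care that the paper leaves implicit, but otherwise the bookkeeping, the calculation $n^{\alpha-1} = O(1)$, and the two-case analysis all coincide with the paper's argument.
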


%\subsection{Old Analysis}
%\begin{restatable}{rLem}{lemmaada} 
% \label{lemma:ada1}
% Assume $d^- \geq 3$ and $d^+ = \left\lfloor \frac{d^-}{1 - \frac{1}{2\log n}} \right \rfloor$, then, with probability $1-O(n^{-\log n})$, $\textsc{FindDisjointEdges}(d^-, d^+, V')$ learns all edges $e \subseteq V'$ such that $|e| \in [d^-, d^+]$.
% \end{restatable}

%\begin{restatable}{rLem}{lemmaada} 
%\label{lemma:ada1}
%Assume $d \geq 3$ and $\rho > 1$, then, with probability $1 - e^{-\Omega(\log^2 n)}$, $\textsc{FindDisjointEdges}(d, \rho, V')$ learns all edges $e \subseteq V'$ such that $|e| \in \left[d, \lfloor \rho d \rfloor \right]$.
%\end{restatable}

% \begin{algorithm}
% 	\caption{$\textsc{FindDisjointEdges}$, finds all edges over $V'$ of size between $d$ and $\rho d.$}
% % 	\label{alg:1round}
% 	\begin{algorithmic}[1]
% 	\INPUT{edge size $d$, parameter $\rho > 1$, vertices $V'$}
% 	\STATE{$\hat{M} \leftarrow \emptyset$}
% 	\STATE{\textbf{for} $i = 1, \ldots, \ n^{\rho}\log{n}$ \textbf{do} (in parallel)}
% 	    \STATE{\quad $S_i \leftarrow $ set of independently sampled vertices from $V'$, each with probability $n^{-\frac{1}{d}}$.}
% 	\STATE{ \quad Query $Q_M(S_i)$ and $Q_M(S_i \setminus \{v\})$ for all  $v \in S_i$.}
% 	\STATE{\quad \textbf{if} $Q_M(S_i) = 1$ and $\exists v \in S_i$ s.t. $Q_M(S_i \setminus \{v\}) = 0$ \textbf{then}}
% 	\STATE{\quad \quad Add  $\{v \in S_i: Q_M(S_i \setminus \{v\}) = 0\}$ to $\hat{M}$}
% 	\RETURN $\hat{M}$.
% 	\end{algorithmic} 
% 	\label{alg:adaptive_new}
% \end{algorithm}

\vspace{-6mm}
\subsection{Hardness of learning hypermatchings}\label{subsection:hardness_hypermatching}
\vspace{-1mm}
In this section, we show that the adaptive complexity of learning a hypermatching with $\poly(n)$ queries is $\Omega(\log\log n)$. This result is in sharp contrast to matchings (where edges are of size $d = 2$) for which there exists a non-adaptive learning algorithm
with $\O(n \log n)$ queries \citep{alon2004learning}. 
\vspace{-5mm}
\paragraph{Warm-up: hardness for non-adaptive learning.} As a warm-up, we present a simple construction of a family of hypermatchings for which there is no non-adaptive learning algorithm with $\poly(n)$ queries. Each hypermatching $M_P$ in this family depends on a partition $P = (P_1, P_2, P_3)$ of the vertex set $V$ into three parts $P_1, P_2, P_3$ where $|P_1| = n - (\sqrt{n} + 1)$, $|P_2| = 1$, and $|P_3| = \sqrt{n}$. $M_P$ contains $(n - (\sqrt{n} + 1))/2$ edges of size $2$ which form a perfect matching over $P_1$ and one edge of size $\sqrt{n}$ which contains all the vertices in $P_3$. The only vertex in $P_2$ is not contained in any edges in $M_P$. The main idea of the construction is that, after one round of $\poly(n)$ queries,  vertices in $P_2$ and $P_3$ are indistinguishable to the learning algorithm. However, the algorithm needs to distinguish vertices in $P_3$ from the vertex in $P_2$ to learn the edge $P_3$.

%Part $P_1$ contains $n - (\sqrt{n} + 1)$ vertices  such that $|P_1| = n - (\sqrt{n} + 1)$ and $P_1$ contains a perfect matching $M_1$ with edges of size 2, $|P_2| = 1$, and $P_3$ is an edge of the matching s.t. $|P_3| = \sqrt{n}$. Thus, $M_P = M_1 \cup \{P_3\}$.

%To argue that learning hypermatchings non-adaptively requires an exponential number of queries, we fix  the number of vertices $n$, we construct a 

\vspace{-1.5mm}
\begin{restatable}{rThm}{thmhardnessone}\label{thm:hardness}
There is no non-adaptive algorithm with $\poly(n)$ query complexity that can learn every non-uniform hypermatching with probability  $\omega(1/\sqrt{n})$.
\end{restatable}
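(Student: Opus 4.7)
The plan is to apply Yao's principle and analyze the uniform distribution over the family of hypermatchings $M_P$: sample $V_0 := V \setminus P_1$ uniformly from $(\sqrt{n}+1)$-subsets of $V$, sample $M_1$ (the matching on $P_1$) as a uniformly random perfect matching, and sample $v^* \in V_0$ uniformly, setting $P_2 = \{v^*\}$ and $P_3 = V_0 \setminus \{v^*\}$. Fix a deterministic non-adaptive algorithm with $q = \poly(n)$ queries $S_1, \dots, S_q$; it suffices to show that its expected success probability under this distribution is $O(1/\sqrt{n})$.

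The key structural observation is that most queries reveal nothing about $v^*$. Call $S_i$ \emph{informative} (w.r.t.\ $V_0, M_1$) if $|S_i \cap V_0| = \sqrt{n}$, so a unique vertex $u(S_i) \in V_0$ is missing from $S_i$, and $S_i$ contains no edge of $M_1$. Since $P_3 \subseteq S_i$ iff $V_0 \setminus \{v^*\} \subseteq S_i$, a short case analysis on $|V_0 \setminus S_i|$ and on whether $S_i$ contains an $M_1$-edge shows that $Q_{M_P}(S_i)$ is constant in $v^*$ unless $S_i$ is informative, in which case $Q_{M_P}(S_i) = 1$ iff $v^* = u(S_i)$. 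Letting $U = \{u(S_i) : S_i \text{ informative}\}$ and $k = |U|$, the answer vector partitions $V_0$ into $k$ singletons (each revealing its unique $v^*$) and one block of size $\sqrt{n}+1-k$ (indistinguishable). The algorithm's success probability conditional on $(V_0, M_1)$, taken over $v^*$ uniform in $V_0$, is therefore exactly $(k+1)/(\sqrt{n}+1)$.

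It remains to bound $\mathbb{E}[k]$. I would split queries by size. For $s_i := |S_i| > (n+\sqrt{n}-1)/2$, the event $|S_i \cap V_0| = \sqrt{n}$ forces $|S_i \cap P_1| = s_i - \sqrt{n} > |P_1|/2$, so by pigeonhole $S_i$ contains an $M_1$-edge and is never informative. For the remaining queries, a direct hypergeometric computation gives
\[
P(|S_i \cap V_0| = \sqrt{n}) = (\sqrt{n}+1) \cdot \frac{n-s_i}{n-\sqrt{n}} \prod_{j=0}^{\sqrt{n}-1} \frac{s_i - j}{n - j} \leq (\sqrt{n}+1)\left(\frac{s_i}{n}\right)^{\sqrt{n}} \leq e\,(\sqrt{n}+1)\,2^{-\sqrt{n}},
\]
using monotonicity of $(s_i-j)/(n-j)$ in $j$, the bound $s_i/n \leq 1/2 + 1/(2\sqrt{n})$, and $(1 + 1/\sqrt{n})^{\sqrt{n}} \leq e$. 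Summing over $q = \poly(n)$ queries yields $\mathbb{E}[k] \leq q \cdot O(\sqrt{n} \cdot 2^{-\sqrt{n}}) = o(1)$, so the expected success probability is $(\mathbb{E}[k]+1)/(\sqrt{n}+1) = O(1/\sqrt{n})$, and Yao's principle concludes.

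The main obstacle is the pigeonhole observation eliminating large queries: without it, queries with $s_i$ close to $n$ would have non-negligible probability of satisfying $|S_i \cap V_0| = \sqrt{n}$, and the naive union bound over $q = \poly(n)$ such queries would not decay. Coupling this structural pruning with the hypergeometric tail estimate is what makes the informativeness count $o(1)$ in expectation, rather than merely $\poly(n)$.
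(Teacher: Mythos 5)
Your proposal is correct and follows essentially the same route as the paper: the identical hard distribution over partitions $(P_1,P_2,P_3)$, the same case split on query size with the pigeonhole argument forcing large queries to contain an $M_1$-edge, a concentration/tail bound showing small queries almost never isolate exactly one vertex of $P_2\cup P_3$, and the same final observation that the isolated vertex cannot be identified among $\sqrt{n}+1$ candidates. The only differences are cosmetic (an exact hypergeometric computation and an expected count of informative queries, versus the paper's Chernoff bound and union bound over all queries).
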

\vspace{-1.5mm}
\begin{proof}
 Let $\mathcal{P}_3$ be the set of all possible partitions $(P_1,P_2,P_3)$ such that $|P_1| = n - (\sqrt{n} + 1)$, $|P_2| = 1$, and $|P_3| = \sqrt{n}$. We consider  a uniformly random partition $P$ from $\mathcal{P}_3$, a matching $M_P$ and a non-adaptive algorithm $\A$ that  asks a collection $\S$ of $\poly(n)$ non-adaptive queries.  The main lemma (Lemma \ref{lemma:non_adaptive_hardness_1})  shows that, with high probability, for all queries $S \in \S$, the answer $Q_{M_P}(S)$ is independent of which $\sqrt{n}$ vertices in $P_2 \cup P_3$ constitute the edge $P_3$. The analysis of this lemma consists of two cases. If the query $S$ is small ($|S| < (n + \sqrt{n}+1)/2$), then we show that  $S$ contains less than  $\sqrt{n}$ vertices from $P_2 \cup P_3$, w.h.p. over the randomization of $P$, by the Chernoff bound. Therefore $S$ does not contain $P_3$ and $Q_{M_P}(S)$ is independent of the partition of $P_2 \cup P_3$ into $P_2$ and $P_3$. If $S$ is large ($|S| \geq (n + \sqrt{n}+1)/2$), then $S$ contains an edge of size two from $P_1$. Thus $Q_{M_P}(S) = 1$ and $Q_{M_P}(S)$ is independent of the partition of $P_2 \cup P_3$ into $P_2$ and $P_3$.
 
In Lemma \ref{lemma:non_adaptive_hardness_2}, we  argue that if all queries $Q_{M_P}(S)$ of $\A$ are independent of the partition of $P_2 \cup P_3$ into $P_2$ and $P_3$, then, with high probability, the matching returned by this algorithm is not equal to $M_P$. By the probabilistic method, there is a matching $M_P$ with $P \in \mathcal{P}_3$ that $\A$ does not successfully learn with probability $1 - \O(1/\sqrt{n})$.
\end{proof}

\vspace{-2mm}

\paragraph{Hardness of learning in $o(\log \log n)$ rounds.} The main technical challenge  is to generalize the construction and the analysis of the hard family of hypergraphs for non-adaptive learning algorithms to a construction and an analysis which holds over  $\Omega(\log \log n)$ rounds. In this construction, each hypermatching $M_P$  depends on a partition $P = (P_0, P_1, \ldots, P_R)$ of the vertex set $V$ into $R+1$ parts. For each $i \in \{0,\ldots R\}$, $M_P$ contains $|P_i|/d_i$ edges of size $d_i$  which form a perfect matching over $P_i$. We set the sizes such that $d_0 = 3$, $d_{i+1} = 3\log^{2} n \cdot d_i^2$,  and  $|P_{i}| = 3\log^{2} n \cdot d_i^2$. 

The main idea of the construction is that after $i$ rounds of queries, vertices in $\cup_{j = i+1}^R P_i$ are indistinguishable to any algorithm. However, the algorithm needs to distinguish vertices in $P_{j}$ from vertices in $P_{j'}$, for all $j \neq j'$, to learn $M_P$. Informally, since the edges in $P_{j'}$ have a larger size than edges in $P_{j}$ for $j' > j$, an algorithm can learn only one part $P_j$ at each round of queries, the one with edges of minimum size among all parts that have not yet been learned in previous rounds.

\vspace{-2mm}

\begin{restatable}{rThm}{thmhardnesslog}
\label{thm:hardnesslog}
There is no $(\log\log n - 3)$-adaptive algorithm with $poly(n)$ query complexity which can learn every non-uniform hypermatching with  probability $e^{-o(\log^2 n)}$.
\end{restatable}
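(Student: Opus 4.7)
The plan is a Yao-type argument. Draw $P$ uniformly from the family of partitions $(P_0,\ldots,P_R)$ with the prescribed sizes, together with uniformly random perfect $d_j$-matchings inside each $P_j$; fix any deterministic $(\log\log n - 3)$-adaptive algorithm $\A$ making $\poly(n)$ queries, and write $U_i := V \setminus \bigcup_{j \leq i} P_j$ and $\sigma_i$ for the tuple consisting of $P_0, \ldots, P_i$ and the matchings at levels $0, 1, \ldots, i$. The goal is to show $\P_P[\A(M_P) = M_P] \leq e^{-\Omega(\log^2 n)}$.

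The main claim, proved by induction on $i \geq 0$, is that with probability at least $1 - i \cdot e^{-\Omega(\log^2 n)}$ over $P$, the algorithm's transcript through round $i$ is a deterministic function of $\sigma_i$; equivalently, any two instances $P, P'$ agreeing on $\sigma_i$ produce identical transcripts up to round $i$. The base case $i=0$ is vacuous. For the inductive step, condition on the event that the round-$i$ transcript is a function of $\sigma_i$: each of the $\poly(n)$ queries $S$ in round $i+1$ is then itself a function of $\sigma_i$. It suffices, by a union bound over these queries, to prove the following technical lemma: for any fixed $\sigma_i$ and any such $S$, with probability at least $1 - e^{-\Omega(\log^2 n)}$ over the remaining randomness of $P$, the answer $Q_{M_P}(S)$ depends only on $\sigma_{i+1}$.

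To prove the technical lemma I would case-split on $k := |S \cap U_i|$. In the low-overlap case $k/|U_i| \leq 1 - 1/(2 d_{i+1})$, by the symmetry of the construction each level-$j$ edge is marginally a uniform $d_j$-subset of $U_i$, so the expected number of level-$j$ edges inside $S$ is at most $m_j (k/|U_i|)^{d_j} \leq m_j e^{-d_j/(2 d_{i+1})}$, where $m_j := |P_j|/d_j$. Since $d_j/d_{i+1} \geq d_{i+2}/d_{i+1} = 3\log^2 n \cdot d_{i+1}$ for $j \geq i+2$, this is at most $m_j e^{-(3/2)\log^2 n \cdot d_{i+1}}$; summing over $j = i+2,\ldots,R$ (with $R = O(\log\log n)$) and applying the union bound shows that $S$ contains no level-$\geq i+2$ edge with probability at least $1 - e^{-\Omega(\log^2 n)}$, and then $Q_{M_P}(S)$ is determined by $\sigma_{i+1}$. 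In the high-overlap case $k/|U_i| > 1 - 1/(2 d_{i+1})$, a Chernoff bound on the hypergeometric distribution gives $|P_{i+1} \setminus S| < m_{i+1}$ with probability at least $1 - e^{-\Omega(m_{i+1})} \geq 1 - e^{-\Omega(\log^2 n)}$; since each vertex of $P_{i+1} \setminus S$ kills at most one of the $m_{i+1}$ disjoint matching edges at level $i+1$, at least one level-$(i+1)$ edge lies entirely in $S$, forcing $Q_{M_P}(S) = 1$ regardless of the levels above $i+1$.

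Applying the induction at $i = \log\log n - 3$, the algorithm's output is, with probability $1 - e^{-\Omega(\log^2 n)}$, a function of $\sigma_{\log\log n - 3}$ alone, and in particular independent of the uniformly random matching at level $R \geq \log\log n - 2$, whose support has size at least $2^{3\log^2 n}$. The success probability is therefore at most $e^{-\Omega(\log^2 n)}$, and Yao's principle lifts this to randomized algorithms. The hard part will be calibrating the single threshold $1/(2 d_{i+1})$ to work in both halves of the case split: it must be small enough that the level-$\geq i+2$ tail vanishes super-polynomially in the low-overlap regime, yet large enough to force a level-$(i+1)$ edge in the high-overlap regime. The doubly-exponential growth $d_{i+1} = 3\log^2 n \cdot d_i^2$ is precisely what provides the gap $d_{i+2}/d_{i+1} = \Omega(\log^2 n \cdot d_{i+1})$ needed for both halves to succeed at the same threshold uniformly in $i$.
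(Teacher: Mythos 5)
Your proposal is correct, and its skeleton — a uniformly random layered partition, an induction showing the round-$i$ transcript is (w.h.p.) a measurable function of the already-resolved levels, a size threshold splitting each query into a ``small'' case (no edge from deeper levels) and a ``large'' case (forced to contain an edge at the first unresolved level), and a final counting argument over the completions of the last level — is the same as the paper's. Where you genuinely diverge is in the large-query case, which is the paper's hardest step: the paper bounds $\P(\forall j,\ e_j \not\subseteq S)$ directly, and because the matching edges are dependent it needs a separate negative-correlation lemma (Lemma~\ref{claim:uniform_edges}) comparing the true matching to i.i.d.\ uniformly sampled edges, followed by binomial-coefficient estimates (Claim~\ref{claim:binomial_bounds}, Lemma~\ref{lemma:large_query}). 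Your argument replaces all of this with a single hypergeometric Chernoff bound on $|P_{i+1}\setminus S|$ plus a deterministic pigeonhole: if fewer than $m_{i+1}$ vertices of $P_{i+1}$ are missing from $S$, then since the $m_{i+1}$ matching edges are disjoint and each uncovered edge needs its own missing vertex, some edge lies entirely in $S$. This is strictly more elementary, sidesteps the dependence issue entirely, and your calibration of the single threshold $1-1/(2d_{i+1})$ against the recursion $d_{i+2}/d_{i+1}=3\log^2 n\cdot d_{i+1}$ checks out on both sides of the split (the low-overlap first-moment bound $\sum_{j\ge i+2} m_j e^{-d_j/(2d_{i+1})}=e^{-\Omega(\log^2 n)}$ and the concentration scale $m_{i+1}=3\log^2 n\cdot d_{i+1}=\Omega(\log^2 n)$). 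The only cosmetic differences are an indexing shift (your round $i+1$ resolves level $i+1$ where the paper's round $i$ resolves level $i$) and your final count over level-$R$ matchings rather than the paper's count of splittings of $P_{R-1}\cup P_R$; both give the needed $e^{-\Omega(\log^2 n)}$ bound on the success probability.
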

\vspace{-2mm}
\begin{proof} [Proof Sketch, full proof in Appendix~\ref{appendix:learning_hypermatching}.] Let $\mathcal{P}_R$ be the set of all  partitions $(P_0, \ldots ,P_R)$ such that $|P_i| =  3\log^{2} n \cdot d_i^2$, and let $P$ be  a uniformly random partition from $\mathcal{P}_R$, and  $\A$ be an  algorithm with $\poly(n)$  queries. In Lemma~\ref{lemma:induction_loglogn_hardness}, we show that if vertices in $P_{i}, P_{i+1},  \ldots, P_R$ are indistinguishable to $\A$ at the beginning of round $i$ of queries, then vertices in $P_{i+1},  \ldots, P_R$ are indistinguishable at the end of round $i$.

%We first show that the construction has $\Theta(\log \log n)$ partitions (Claim \ref{claim:size_of_partition}). Next, we suppose that after $i$ rounds, we learned all the edges $P_0 \cup \ldots \cup P_{i-1}$ but $P_i \cup P_{i+1} \ldots \cup P_R$ is indistinguishable. We consider a query $S$ of the $(i+1)$-th round. 
The proof of Lemma~\ref{lemma:induction_loglogn_hardness} consists of two parts. First, Lemma \ref{lemma:small_query} shows that if $S$ is small ($|S| \leq (1-1/d_i)\sum_{j \geq i} |P_j|)$ and is independent of partition of $\cup_{j = i}^R P_i$ into $P_i, \ldots, P_R$, then w.h.p., for every $j \geq i+1$, $S$ does not contain any edge contained in $P_j$. Second, Lemma \ref{lemma:large_query} shows that if a query $S$ has a large size ($|S| \geq (1-1/d_i)\sum_{j \geq i} |P_j|)$ and is independent of partition of $\cup_{j = i}^R P_i$ into $P_i, \ldots, P_R$, then w.h.p. $S$ contains at least one edge from the matching on $P_i$ which implies $Q_{M_P}(S) = 1$. Proving Lemma \ref{lemma:large_query} is quite involved. Because the edges of a matching are not independent from each other, a Chernoff bound analysis is not enough to show that the probability that a query does not contain any edge from $P_i$ is small. We therefore provide an alternative method to bound this probability (Lemma \ref{claim:uniform_edges}).
In the end, in both cases, we show that $Q_{M_P}(S)$ is independent of the partition of $\cup_{j = i+1}^R P_i$ into $P_{i+1}, \ldots, P_R$. Finally, we conclude the proof of Theorem~\ref{thm:hardnesslog} similarly as for Theorem~\ref{thm:hardness}.
\end{proof}

\vspace{-5mm}

\section{Learning  Low-Degree Near-Uniform Hypergraphs}\label{sec:bounded_degree}
\vspace{-2mm}

In this section, we give an algorithm for learning low-degree near-uniform hypergraphs. The algorithm is non-adaptive and has  $\O\left((2n)^{\rho \Delta + 1}\log^2 n)\right)$ query complexity for learning a hypergraph $H(V,E)$ of maximum degree $\Delta$ and edge size ratio $\rho$. It generalizes $\textsc{FindDisjointEdges}$ and  constructs a unique-edge covering family, but its analysis is completely different, and significantly more challenging, due to overlapping edges. Full proofs are deferred to Appendix \ref{appendix:hypergraphs}.

%study the problem of learning low-degree near-uniform hypergraphs, i.e., when the maximum degree $\Delta$ is constant, the maximum edge size $d^+$ and minimum edge size $d^-$ are not too far apart. We focus on near-uniform hypergraphs because even when $\Delta = 2$, for every $\delta \geq 0$ one can find $d^+$, $d^-$ with $\delta = d^+ - d^-$ such that no (non-)adaptive algorithm could learn this family of hypergraphs with $O(2^\delta)$ queries. We present a one-round algorithm that learns any hypergraph with maximum degree $\Delta$, minimum edge size $d^-$, and maximum edge size $d^+$ with query complexity $O(n^{\Delta\frac{d^+}{d^-}+1}\log^2 n)$. When both maximum degree $\Delta$ and the ratio $\frac{d^+}{d^-}$ are constants, this translates to a $poly(n)$ query complexity even when $\delta = d^+ - d^-$ is unbounded.    

\paragraph{Near-uniformity is necessary.} The $\O(n \log^5 n)$ query complexity from the previous section  for $\Delta = 1$ holds for any hypermatching $M$, even those with edge size ratio $\rho = \Omega(n)$. In contrast, for general $\Delta$, we obtain an $\O(n^{\rho \Delta + 1}\log^2 n)$ query complexity. \citet{Angluin2006} show that $\Omega((2m/d)^{d/2})$  queries are required to, even fully adaptively, learn hypergraphs of maximum edge size $d$. Their hardness result holds for a family of hypergraphs of maximum degree $\Delta = 2$ and edges of size $2$ or $d$, so with $\rho = d/2$. Thus, it implies that $\Omega((m/\rho)^{\rho})$ queries are required to learn hypergraphs  even when $\Delta = 2$, i.e., an exponential dependence on $\rho$ is required.

%This exponential dependence on $\alpha$ is necessary for $\Delta > 1$

%with a family of hypergraphs with $\Delta = 2$ and edge size difference $\delta =  \max_{e \in H}|e| - \min_{e \in H}|e|$. In terms of $\alpha$, 

\paragraph{The algorithm.}  \textsc{FindLowDegreeEdges}  constructs a unique-edge covering family of the hypergraph $H$ similarly to \textsc{FindDisjointEdges}, which requires a larger  number of samples  when $\Delta > 1$. In addition, steps $5$-$6$ are needed to ensure that intersections of edges are not falsely identified as edges since, when $S_i$ contains two or more edges that all overlap in $S'_i \subseteq S_i$, we have that $S'_i = \{v \in S_i: Q_H(S_i \setminus \{v\}) = 0\}$. 
    
%The idea of Algorithm \ref{alg:1round-boundeddegree-nu} is similar to that of Algorithm \ref{alg:adaptive_new} for non-uniform hypermatching. By adjusting the sampling probability $p$ and the number of samples, we make sure that if every edge in the hypergraph is of size $d \in [d^-, d^+]$ and the maximum degree of each vertex is bounded by $\Delta$, then with high probability for every edge in the hypergraph, there exists at least one sample that contains only this edge and no other edge from the hypergraph. The additional for-loop at the end of Algorithm \ref{alg:1round-boundeddegree-nu} is to ensure that intersections of edges are not falsely identified as edges. This step is not needed in the previous hypermatching setting because edges are all disjoint.

% Todo: emphasize difference in algorithm3 compared to algortihm1

% Todo: make algorithm3 more compact, similarly as for algorithm 1 and 2

\begin{algorithm}[H]
	\caption{\textsc{FindLowDegreeEdges}, learns a $\rho$-near-uniform hypergraph with max degree $\Delta$}
	\label{alg:1round-boundeddegree-nu}
	\begin{algorithmic}[1]
	\INPUT{edge size ratio $\rho$, maximum edge size $d$, maximum degree $\Delta \geq 2$}
	\STATE{$\hat{H} \leftarrow \emptyset$}
	\STATE{\textbf{for} $i = 1, \ldots, \ (2n)^{\rho\Delta}\log^2 n $ \textbf{do} (in parallel)}
	    \STATE{ \quad $S_i \leftarrow $ set of independently sampled vertices, each with probability $p = (2n)^{-\frac{\rho}{d}}$.}
	\STATE{\quad \textbf{if}  $Q_H(S_i) = 1$ and $\exists v$ s.t. $Q_H(S_i \setminus \{v\}) = 0$ \textbf{then} add $\{v \in S_i: Q_H(S_i \setminus \{v\}) = 0\}$ to $\hat{H}$}

	\STATE{\textbf{for} $e \in \hat{H}$ \textbf{do} (in parallel)}
	    \STATE{\quad  \textbf{if} $\exists \ e' \in \hat{H}$ s.t. $e \subset e'$ \textbf{then} remove $e$ from $\hat{H}$}
	  \RETURN{$\hat{H}$}
	\end{algorithmic} 
\end{algorithm}

Recall that for hypermatchings, 
\textsc{FindDisjointEdges} is used as a subroutine in an adaptive procedure that learns edges of any size. For hypergraphs, however, we cannot  call \textsc{FindLowDegreeEdges} iteratively and remove vertices in learned edges after each call because a large edge could overlap with a small edge.

\paragraph{The analysis.} The main technical challenge in this section is to analyze the sample complexity required to obtain w.h.p. a unique-edge covering family. The main lemma bounds the probability that,  conditioned on a sample $S$ containing an edge $e$, this sample contains another edge $e'$.

%To show that Algorithm \ref{alg:1round-boundeddegree-nu} returns the hypergraph with high probability, we will show the following:
%\begin{enumerate}
%    \item With high probability, for every edge, there exists at least one sample that contains that edge and no other edges.
%    \item If a sample contains more than one edge, we will learn at most the intersection of the edges, and this intersection will be discarded in the last for-loop of the algorithm
%\end{enumerate}
%The second point is immediate. The analysis is devoted to showing the first point.   The main lemma is the following.

\begin{lemma} \label{lemma:failure-prob-bn}
If a sample set $S$ is constructed by independently sampling each vertex w.p. $p = (2n)^{-\frac{\rho}{d}}$ from a hypergraph $H$ with maximum degree $\Delta \geq 2$, maximum edge size $d$, edge size ratio $\rho$,  $n \geq 100$, then for any edge $e \in H$,
$\P(\exists e' \subseteq S, e' \in H, e' \neq e \ | \ e \subseteq S) \leq 1-\left(\frac{\log n}{2n}\right)^{(\Delta - 1)\rho}.$
\end{lemma}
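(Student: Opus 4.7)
The plan is to apply the FKG inequality. Conditioning on $e \subseteq S$, the remaining randomness is a product measure on $V \setminus e$ (each vertex included independently with probability $p$), and each event $\{e' \not\subseteq S\}$ for $e' \neq e$ is decreasing in $S \cap (V\setminus e)$. FKG then yields
\[ \P\Big(\,\forall\, e' \in H,\, e' \neq e :\, e' \not\subseteq S \ \Big|\ e \subseteq S\Big) \ \geq\ \prod_{e' \in H,\, e' \neq e} \bigl(1 - p^{|e' \setminus e|}\bigr), \]
so it suffices to lower bound the right-hand product by $(\log n/(2n))^{(\Delta-1)\rho}$. As a one-line preliminary, $1 - e^{-x} \geq x/2$ for $x \leq 1$ and $1 - e^{-x} \geq 1 - 1/e$ otherwise, applied to $x = \rho\log(2n)/d$, together with $d \leq n$, $\rho \geq 1$, $n \geq 100$, shows $1 - p = 1 - (2n)^{-\rho/d} \geq \log n/(2n)$, which is the right per-factor scale.

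To control the product, partition $\{e' \neq e\}$ into disjoint edges (with $e' \cap e = \emptyset$), ``heavy'' intersecting edges with $|e' \setminus e| \leq B^\star$ for a threshold $B^\star \approx d/(2\rho)$, and ``light'' intersecting edges. For disjoint edges, $|e' \setminus e| = |e'| \geq d/\rho$ so $p^{|e' \setminus e|} \leq 1/(2n)$, and since the edge-count bound $m \leq n\Delta\rho/d$ holds (by $\sum_{e'} |e'| \leq n\Delta$ divided by $\min|e'| \geq d/\rho$), the disjoint product is at least $(1 - 1/(2n))^m = \Omega(1)$. For light intersecting edges, $p^{|e' \setminus e|} \leq p^{B^\star} \leq (2n)^{-1/2} \leq 1/2$, so $\log(1/(1-p^{|e' \setminus e|})) \leq 2 p^{|e' \setminus e|}$ applies; the sum chain
\[ \sum_{e' \text{ intersecting}} p^{|e' \setminus e|} \ \leq\ \sum_{e'} |e' \cap e|\, p^{|e' \setminus e|} \ =\ \sum_{v \in e}\sum_{e' \ni v,\, e' \neq e} p^{|e' \setminus e|} \ \leq\ d(\Delta-1)p \]
(each $v \in e$ lies in at most $\Delta-1$ other edges, and each $p^{|e' \setminus e|} \leq p$), combined with the refinement $p^{|e' \setminus e|} \leq (2n)^{-1/2}$ available for light edges and the edge-count bound for the large-$d$ case, bounds the light product from below by a constant.

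The heart of the argument is the heavy-intersecting group. The structural inequality $\sum_{e' \text{ intersecting}} |e' \cap e| \leq d(\Delta-1)$ (summing degrees over vertices of $e$), combined with each heavy edge having $|e' \cap e| \geq d/\rho - B^\star \geq d/(2\rho)$, caps the number of heavy edges at $2(\Delta-1)\rho$. For each such edge $(1 - p^{|e' \setminus e|}) \geq 1 - p$, giving a heavy product factor at least $(1-p)^{2(\Delta-1)\rho}$. Combined with the disjoint and light contributions, the full product is at least $\Omega(1) \cdot (1-p)^{2(\Delta-1)\rho}$.

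The main obstacle is matching the target exponent $(\Delta-1)\rho$ against the factor-of-two $2(\Delta-1)\rho$ coming from the heavy count. In the regime $d \leq \rho\sqrt{2n\log n}$, a sharper version of the preliminary calculation gives $1 - p \geq \sqrt{\log n/(2n)}$, so $(1-p)^{2(\Delta-1)\rho} \geq (\log n/(2n))^{(\Delta-1)\rho}$ directly. In the complementary regime $d > \rho\sqrt{2n\log n}$, the edge-count bound $m \leq n\Delta\rho/d$ becomes the binding constraint and tightens the count of all intersecting edges down to $O((\Delta-1)\rho)$, so the weaker $1 - p \geq \log n/(2n)$ suffices. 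Calibrating the threshold $B^\star$ and the split between the two regimes so that the heavy count, the disjoint factor, and the light factor are all simultaneously controlled is the delicate technical step of the argument.
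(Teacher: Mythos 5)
Your first step is a genuine improvement over the paper's approach to the same inequality: invoking FKG for the product measure conditioned on $e\subseteq S$ immediately yields
$\P(\forall e'\neq e:\, e'\not\subseteq S \mid e\subseteq S)\ \geq\ \prod_{e'\neq e}(1-p^{|e'\setminus e|})$,
which is exactly what the paper's Lemma~\ref{lemma:success_probability_bounded} establishes by a hands-on induction with Bayes' rule. The paper then does \emph{not} attack this product directly; instead it introduces the variables $a_{ij}$ (number of edges of size $j$ meeting $e$ in $i$ vertices), minimizes the product subject to the two degree constraints, and solves the resulting optimization $f^\bullet$ in closed form (Lemmas~\ref{lemma:aij-ai},~\ref{lemma:closed-form-nb},~\ref{lemma:exp-nb},~\ref{lemma:f-lp-2},~\ref{lemma:lp2ub-nb}). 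The closed-form minimizer is degenerate: all mass sits at $|e'\setminus e|=1$ (maximally overlapping edges) plus disjoint edges — there are no edges of intermediate overlap in the worst case. Your decomposition into disjoint / light-intersecting / heavy-intersecting is a different route that does not exploit this degeneracy, and that is where the gap appears.

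The problem is that the constants in your sketch do not close, and the ``delicate calibration'' you defer at the end is not merely fussy — the bounds as stated fail. With $B^\star\approx d/(2\rho)$ the heavy count is $\leq 2(\Delta-1)\rho$, so the heavy factor is $(1-p)^{2(\Delta-1)\rho}$. In the regime $d\approx\rho\sqrt{2n\log n}$ one has $1-p$ only \emph{barely} above $\sqrt{\log n/(2n)}$, so $(1-p)^{2(\Delta-1)\rho}$ exceeds the target $(\log n/(2n))^{(\Delta-1)\rho}$ by at most a small constant raised to $(\Delta-1)\rho$. Meanwhile the disjoint factor is $\geq e^{-\Theta(\Delta\rho/d^-)}$ and the light factor, via $L\leq\min\{(\Delta-1)d,\,\Delta n\rho/d\}\cdot(2n)^{-1/2}$, can only be shown to be $\geq e^{-\Theta(\Delta\sqrt{\rho})}$ (the minimum peaks at $d\approx\sqrt{\Delta n\rho/(\Delta-1)}$, giving $L=\Theta(\sqrt{\Delta(\Delta-1)\rho})$). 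Concretely, at $n=100$, $\rho=1$, $\Delta=10$, $d\approx 30$ your bounds give heavy $\approx(1.1)^{9}\cdot(\log n/(2n))^{9}$ but light $\geq e^{-4.6}$, so the heavy slack $(1.1)^9\approx 2.4$ does not cover the light loss $e^{-4.6}\approx 0.01$. The underlying product is in fact much larger than your bound (the worst case actually has no light edges at all), so this is a lossiness-of-proof gap rather than a counterexample; but as written, the regime split with $B^\star= d/(2\rho)$ cannot recover $(\log n/(2n))^{(\Delta-1)\rho}$. To repair it you would need either to shrink $B^\star$ toward $1$ (so the heavy count approaches $(\Delta-1)d/(d^--1)$, the paper's value) while handling the now-not-so-small light contributions more carefully, or to mimic the paper's optimization argument which identifies the extremal configuration directly.
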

\textbf{Proof.} [Proof Sketch, full proof in Appendix~\ref{appx:lemma_failure_prob}.] Given an edge $e \in S$,
the proof of Lemma~\ref{lemma:failure-prob-bn} relies on analyzing  the following two mathematical programs, whose optimal values are used to bound $\P(\exists e' \subseteq S, e' \in H, e' \neq e \ | \ e \subseteq S)$.

\noindent\begin{minipage}{.5\linewidth}
\begin{alignat*}{2}
 & \min\limits_{a_{ij}} \hspace{.1cm} && \prod\limits_{j=d/\rho}^{d}\prod\limits_{i=0}^{j-1} (1-p^{j-i})^{a_{ij}} \\
 &  \hspace{.1cm} \text{s.t.} && \sum_{j = d/\rho}^{d}\sum_{i = 0}^{j-1} i\cdot  a_{ij} \leq (\Delta -1)d\\
& && \sum_{j = d/\rho}^{d}\sum_{i = 0}^{j-1} j\cdot  a_{ij} \leq \Delta n\\
& && \hspace{.1cm} a_{ij} \geq 0.
\end{alignat*}
\end{minipage}%
\begin{minipage}{.5\linewidth}
  \begin{alignat*}{2}
& \max\limits_{a_{ij}} \hspace{.1cm} && \sum_{j = d/\rho}^{d}\sum_{i = 0}^{j-1} a_{ij} \cdot p^{j-i} \\
& \hspace{.1cm} \text{s.t.}  && \sum_{j = d/\rho}^{d}\sum_{i = 0}^{j-1} i\cdot  a_{ij} \leq (\Delta -1)d\\
& && \sum_{j = d/\rho}^{d}\sum_{i = 0}^{j-1} j\cdot  a_{ij} \leq \Delta n\\
& && \hspace{.1cm} a_{ij} \geq 0. 
\end{alignat*}
\end{minipage}

%& & LP^\bullet(\Delta, p, d^+, d^-) := \quad\quad\quad\\
% f^\bullet(\Delta, p, d^+, d^-) := \quad\quad\quad\\

% \begin{alignat*}{3}
% % \label{lp-bounded-degree}
% & & LP^\bullet(\Delta, p, d^+, d^-) := \quad\quad\quad\\
%  & \max\limits_{a_{ij}} & \sum_{j = d^-}^{d^+}\sum_{i = 0}^{j-1} a_{ij} \cdot p^{j-i} \\
%  &  \text{s.t.} \quad & \sum_{j = d^-}^{d^+}\sum_{i = 0}^{j-1} i\cdot  a_{ij} \leq (\Delta -1)d^+\\
% & & \sum_{j = d^-}^{d^+}\sum_{i = 0}^{j-1} j\cdot  a_{ij} \leq \Delta n\\
% & & a_{ij} \geq 0.
% \end{alignat*}

% \begin{alignat*}{3}
%  & & f^\bullet(\Delta, p, d^+, d^-) := \quad\quad\quad\\
%  & \min\limits_{a_{ij}} & \prod\limits_{j=d^-}^{d^+}\prod\limits_{i=0}^{j-1} (1-p^{j-i})^{a_{ij}} \\
%  &  \text{s.t.} \quad & \sum_{j = d^-}^{d^+}\sum_{i = 0}^{j-1} i\cdot  a_{ij} \leq (\Delta -1)d^+\\
% & & \sum_{j = d^-}^{d^+}\sum_{i = 0}^{j-1} j\cdot  a_{ij} \leq \Delta n\\
% & & a_{ij} \geq 0.
% \end{alignat*}
The variables $a_{ij}$ denote the number of edges of size $j$ intersecting $e$ in $i$ vertices and we assume $|e| = d$. The two programs have identical constraints. The first constraint is that the sum, over edges $e' \neq e$, of the size of the intersection of $e$ and $e'$ has to be at most $(\Delta - 1) d$ since $e$ has size $d$ and each vertex has degree at most $\Delta$. The second constraint is that the sum, over edges $e' \neq e$, of the size of $e'$ has to be at most $\Delta n$ since there are $n$ vertices of degree at most $\Delta.$ 

For the objectives, we note that $p^{j-i}$ is the probability that a fixed collection of $j-i$ vertices is in a sample. Thus, $1 - p^{j-i}$ is the probability that an edge $e'$ of size $j$ intersecting $e$ in $i$ vertices is not contained in a sample $S$, conditioned on $e \subseteq S$. If events $e_1 \subseteq S | e \subseteq S$ and $e_2 \subseteq S| e \subseteq S$ were independent for all $e_1, e_2 \neq e$ (which is not the case), the probability that there is no other edge in $S$, conditioned on $e \in S$ would be equal to the objective of the left program. The objective of the right program is an upper bound on $\P(\exists e' \subseteq S, e' \in H, e' \neq e \ | \ e \subseteq S)$ by a union bound over all edges $e' \neq e$. We denote the optimal value to the left and right programs by   $f^\bullet(\Delta, p, d, \rho)$ and $LP^\bullet(\Delta, p, d, \rho)$. The main steps of the proof are

%The objectives of the left program is the probability that all 

%We denote the optimal value to the left and right programs by   $f^\bullet(\Delta, p, d^+, d^-)$ and $LP^\bullet(\Delta, p, d^+, d^-)$. The objective of the left program is the probability that all edges $e' \neq e$

%$1-f^\bullet(\Delta, p, d^+, d^-)$

%, $LP^\bullet(\Delta, p, d^+, d^-)$ gives a union bound on LHS of \eqref{eq:lp-f} -- the probability of the unfavorable event that there exists another edge in sample set $S$ given the sample contains edge $e$ -- over $e'\neq e \in H$; $1-f^\bullet(\Delta, p, d^+, d^-)$ gives an upper bound on LHS of \eqref{eq:lp-f} assuming for all $e' \neq e \in H$, the events that $e' \in S$ are independent. 

\begin{enumerate}
% [leftmargin=*]
\item In Lemma~\ref{lemma:success_probability_bounded}, we show that the unfavorable event is more likely to happen with the edge independence assumption:
\begin{equation}\label{eq:failure-prob-bn-1}\P(\exists e' \subseteq S, e' \in H, e' \neq e \ | \ e \subseteq S) \leq 1-f^\bullet(\Delta,p,d, \rho)\end{equation}
by using  induction and Bayes rule to formalize the intuition that the events $\{e' \not \subseteq S\}$ for $e' \neq e$ are positively correlated because edges could share common vertices.  
\item In Lemma~\ref{lemma:exp-nb}, we show that for $\Delta \geq 2$,
\begin{equation}\label{eq:failure-prob-bn-2}f^\bullet(\Delta,p,d, \rho) \geq f^\bullet(2,p,d/\rho, 1)^{\rho(\Delta - 1)}\end{equation}
by deriving a closed-form optimal solution of $f^\bullet(\Delta,p,d, \rho)$, which maximizes $a_{ij}$ for $i \in \{0, d/\rho -1\}, j = d/\rho$ and sets the rest $a_{ij} = 0$.
\item In Lemma~\ref{lemma:f-lp-2}, we show that
\begin{equation}\label{eq:failure-prob-bn-3}f^\bullet(2,p,d/\rho, 1) \geq 1 - LP^\bullet(2,p,d/\rho, 1).\end{equation}
\item In Lemma~\ref{lemma:lp2ub-nb}, we show that 
\begin{equation}\label{eq:failure-prob-bn-4}1-LP^\bullet(2,p,d/\rho, 1) \geq \frac{\log n}{2n}.\end{equation}
by deriving a closed-form solution to $LP^\bullet(2, p, d/\rho, 1)$ and showing that  it can be upper bounded by $\frac{d/\rho}{d/\rho-1}n^{-\rho/{d}}+\frac{1}{d/\rho}$ and that $\frac{d/\rho}{d/\rho-1}n^{-\rho/{d}}+\frac{1}{d/\rho}\leq 1 - \frac{\log n}{2n}$ for $n \geq 100$.
\item Combining Eqs.\eqref{eq:failure-prob-bn-1}, \eqref{eq:failure-prob-bn-2}, \eqref{eq:failure-prob-bn-3}, \eqref{eq:failure-prob-bn-4}, we get the desired bound. \hfill  
$\blacksquare$
%\begin{equation*}\P(\exists e' \in H \cap S, e \neq e' \ | \ e \in S) \leq 1-\left(\frac{\log n}{2n}\right)^{(\Delta-1) \frac{d^+}{d^-}}.\end{equation*}
\end{enumerate}

We are now ready to present the main result for this section (see Appendix \ref{appx:proof_main_thm} for the full proof). 
% \enote{Does this not hold when $\Delta = 1$? Can we remove the "maximum edge size $d$" from the statement?} \snote{Lemma 16 holds only for $\Delta \geq 2$ so we need it here. Just suggested changes to add back $\Delta \geq 2$ in some previous statements. I think we can remove ``maximum edge size $d$'' from the statement.}

\begin{restatable}{rThm}{thmmain}
\label{thm:main} For any $\rho$-near-uniform hypergraph $H$ with maximum degree $\Delta \geq 2$, maximum edge size $d$,  and number of vertices $n \geq 100$, \textsc{FindLowDegreeEdges} correctly learns $H$ with probability  $1-o(1)$. Furthermore, it is non-adaptive and makes at most $\O((2n)^{\rho\Delta+1}\log^2 n)$ queries.
\end{restatable}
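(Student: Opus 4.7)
The plan is to argue in three steps: (i) verify that the algorithm is non-adaptive and meets the stated query budget; (ii) show that the $N := (2n)^{\rho\Delta}\log^2 n$ random sets $S_1,\dots,S_N$ form a unique-edge covering family of $H$ with probability $1-o(1)$ by combining Lemma~\ref{lemma:failure-prob-bn} with a Chernoff-plus-union-bound argument; and (iii) check that the filtering in Steps~5--6 correctly discards any spurious set produced by a sample that contains multiple overlapping edges. The heart of the argument is (ii); step (iii) is a short structural check.

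For (i), every query the algorithm makes has the form $Q_H(S_i)$ or $Q_H(S_i\setminus\{v\})$ with $v\in S_i$ for a sample $S_i$ drawn in Step~3, so the entire query set is determined before any answer is observed; Steps~5--6 only perform set-inclusion checks on the returned sets and require no further queries. Each sample contributes at most $|S_i|+1$ queries, and $|S_i|\sim\mathrm{Bin}(n,p)$ with $p=(2n)^{-\rho/d}$, so $\mathbb{E}|S_i|\le n$. A Chernoff bound on each $|S_i|$ together with a union bound over the $N$ samples yields total query count $\O(Nn)=\O((2n)^{\rho\Delta+1}\log^2 n)$ with high probability.

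For (ii), fix an edge $e\in H$. Since $|e|\le d$ we have $\Pr[e\subseteq S_i]\ge p^d=(2n)^{-\rho}$, and Lemma~\ref{lemma:failure-prob-bn} gives $\Pr[\text{no other edge of } H \text{ lies in } S_i \mid e\subseteq S_i]\ge (\log n/(2n))^{(\Delta-1)\rho}$. Multiplying, the probability that $S_i$ uniquely covers $e$ is at least $(\log n)^{(\Delta-1)\rho}/(2n)^{\rho\Delta}$, so across the $N=(2n)^{\rho\Delta}\log^2 n$ i.i.d.\ samples the expected number of uniquely-covering samples is at least $\log^2 n$. A Chernoff bound makes the failure probability at most $e^{-\Omega(\log^2 n)}$ for each $e$, and since $|E|\le\Delta n$, a union bound yields the unique-edge covering property with probability $1-\Delta n\cdot e^{-\Omega(\log^2 n)}=1-o(1)$. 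Whenever a sample $S_i$ uniquely contains an edge $e$, the set $\{v\in S_i:Q_H(S_i\setminus\{v\})=0\}$ evaluates exactly to $e$, and $e$ is added to $\hat H$ in Step~4.

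For (iii), if a sample $S_i$ contains two or more edges of $H$ but the trigger $\exists v:Q_H(S_i\setminus\{v\})=0$ still fires, the set added in Step~4 equals the common intersection of all edges of $H$ contained in $S_i$, which is a proper subset of each of them. Since each such edge is itself added to $\hat H$ by step (ii) with high probability, Steps~5--6 remove this spurious intersection, while genuine edges survive (under the standard assumption that $H$ is an antichain, which holds without loss of generality because non-maximal edges are invisible to the edge-detecting oracle). Combining (i)--(iii) proves the theorem; as noted, the main obstacle is the probabilistic estimate in (ii), which rests entirely on Lemma~\ref{lemma:failure-prob-bn} whose proof sketch was given above.
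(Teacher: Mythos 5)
Your proposal is correct and follows essentially the same route as the paper: non-adaptivity plus an $\O(n)$-queries-per-sample count for the budget, Lemma~\ref{lemma:failure-prob-bn} combined with a Chernoff-and-union-bound argument to show the samples form a unique-edge covering family, and the observation that a multi-edge sample contributes only the common intersection of its edges, which Steps~5--6 discard as a proper subset of a genuinely learned edge. The only (harmless) difference is bookkeeping: you apply Chernoff directly to the i.i.d.\ indicators that a sample \emph{uniquely} covers $e$, whereas the paper first shows each edge lies in many samples and then bounds the conditional failure probability per sample; your explicit antichain remark also makes precise an assumption the paper leaves implicit.
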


We note that if the exact parameters $\rho$, $d$, and $\Delta$ are unknown and only  upper bounds $\bar{\rho} \geq \rho$, $\bar{d} \geq d$, $\bar{\Delta} \geq \Delta$ are available, then, if $\bar{\rho}$ is large enough so that  $\bar{d}/\bar{\rho} \leq \min_{e \in H} |e|$, we have that \textsc{FindLowDegreeEdges} with inputs $\bar{\rho}, \bar{d}, \bar{\Delta}$  also learns $H$ with probability $1-o(1)$. %Even when the condition $\bar{d}/\bar{\rho} \leq \min_{e \in H} |e|$ is not satisfied, the edges learned by the algorithm with any input are guaranteed to be edges in the true hypergraph $H$, i.e., the algorithm learns  $\hat{H}$  such that $\hat{H} \subseteq H$. Thus, the learner can perform $\bar{d} \leq n$ (parallel) calls to \textsc{FindLowDegreeEdges}, each with input $(\bar{\rho}, k, \bar{\Delta})$ for $k \in [\bar{d}]$, return the union of all learned edges, and be guaranteed to learn $H$. The corresponding query complexity is $\O((2n)^{\bar{\rho}\bar{\Delta}+2}\log^2 n)$.

\newpage

\section*{Acknowledgements}We thank Meghan Pantalia and Matthew Ulgherait for helpful discussions on finding drug combinations that reduce cancer cell viability.

\bibliographystyle{plainnat}
\bibliography{bibliography}

\newpage

\appendix

\section{Missing Analysis for Hypermatchings (Section~\ref{sec:hypermatchings})}
\label{appendix:hypermatching}

\subsection{Missing analysis for algorithm for learning hypermatchings (Section~\ref{subsection:learning_hypermatching})}
\label{appendix:learning_hypermatching}

\lemfuep*

\begin{proof}
It is clear that $\textsc{FindEdgeParallel}(S)$ makes at most $|S| + 1$ queries. If $S$ does not contain an edge, then $Q_M(S) = 0$ and the algorithm returns None. If $S$ contains at least two edges, then there is no intersection between the edges because $M$ is a matching. Therefore, for every $v \in S$, $Q_M(S \setminus \{v\}) = 1$, and the algorithm returns None. The only case left is when $S$ contains a unique edge $e$. In this case $Q_M(S) = 1$ and $e = \{ v \in S, \ Q_M(S \setminus \{v\} = 0\}$. In this case, $\textsc{FindEdgeParallel}(S)$ return $e$.
\end{proof}

Let $\S^i = \{(S^i_1, T^i_1), \ldots, (S^i_\ell, T^i_\ell)\}$ be $\S$ at the beginning of the $i$th iteration of the while loop of \textsc{FindEdgeAdaptive}$(S,s)$. To prove Lemma \ref{lem:fvcorrect}, we need the following three loop invariants in \textsc{FindEdgeAdaptive}. 
% \begin{restatable}{lemmma}{lemfvconditions}
% \label{lem:fvconditions}
% Assume there is a unique edge $e \in M$ such that $e \subseteq S$ and that $|e| \leq s$, then, for every vertex $v \in e$ and at every iteration $i$ of the while loop of \textsc{FindEdgeAdaptive}$(S,s)$ we have that
% \begin{enumerate}
%     \item either $v \in \hat{e}$ or $v \in S^i_j$ for some $j \in [\ell]$
%     \item $Q_M(S^i_j \cup T^i_j) = 1$ for all $j \in [\ell]$
%     \item $e \cap S^i_j \neq \emptyset$ for all $j \in [\ell]$
% \end{enumerate}
% \end{restatable}
\lemfvconditions*
\begin{proof}
To simplify the proof, we actually show the following three loop invariants, where the second is an intermediary step to show the third invariant:

\begin{enumerate}
    \item either $v \in \hat{e}$ or $v \in S^i_j$ for some $j \in [\ell]$
    \item $Q_M(S^i_j \cup T^i_j) = 1$ for all $j \in [\ell]$
    \item $e \cap S^i_j \neq \emptyset$ for all $j \in [\ell]$
\end{enumerate}
We prove the three invariants by induction on $i$.

\vspace{2mm}

Base case: At the beginning of the first iteration, $\S^1 = \{(S,\emptyset)\}$ and $\hat{e} = \emptyset$, and we have the following 1) For every $v \in e$ we have $v \in S$. 2) $Q_M(S\cup \emptyset) = 1.$ 3) $e \cap S = e \neq \emptyset$.

 Assume that the statement of the lemma holds at the beginning of iteration $i \geq 1$. We will show that it holds at the beginning of iteration $i+1$. Let $\S^{i+1} = \{(S^{i+1}_1, T^{i+1}_1), \ldots, (S^{i+1}_k, T^{i+1}_k)\}$ be $\S$ at the beginning of the $(i+1)$th iteration of the while loop of \textsc{FindEdgeAdaptive}$(S,s)$.
 
 \begin{enumerate}
     \item Consider a vertex $v \in e$. If at the beginning of the $i$th iteration we had $v \in \hat{e}$, then $v \in \hat{e}$ at the beginning of the $(i+1)$th iteration as well. Suppose now that $v \in S^i_j$ for some $j \in [\ell]$ at the start of iteration $i$. If $|S^i_j| = 1$ then $v$ is added to $\hat{e}$ at Step 8. Assume $|S^i_j| > 1$, and let $T = T^i_j$. Then $S^i_j$ is partitioned into two sets $S_1$ and $S_2$. If $Q_M(S_1 \cup T) = 1$ and $Q_M(S_2 \cup T) = 1$ then this contradicts the fact that $S$ contains a unique edge. In fact, without loss of generality, we can assume that $v \in S_1$. Therefore $S_2\cup T$ cannot include $e$, and the only way $Q_M(S_2 \cup T) = 1$ is if $S$ contains another edge. If $Q_M(S_1 \cup T) = 1$ and $Q_M(S_2 \cup T) = 0$ then we know that $v \in S_1$ and $(S_1,T)$ is added to $\S^{i+1}$. If $Q_M(S_1 \cup T) = 0$ and $Q_M(S_2 \cup T) = 1$ then we know that $v \in S_2$ and $(S_2,T)$ is added to $\S^{i+1}$. If $Q_M(S_1 \cup T) = 0$ and $Q_M(S_2 \cup T) = 0$ then both  $(S_1, S_2 \cup T)$ and $(S_2, S_1 \cup T)$ are added to $\S^{i+1}$, and we have either $v \in S_1$ or $v \in S_2$.
     \item From steps 9 to 11 in \textsc{FindEdgeAdaptive}$(S,s)$, a couple $(S^{i+1}_j, T^{i+1}_j)$ is only added to $\S^{i+1}$ when $Q_M(S^{i+1}_j\cup T^{i+1}_j) = 1$. In step 12, we have $Q_M(S_1 \cup T) = 0$ and $Q_M(S_2 \cup T) = 0$, and both  $(S_1, S_2 \cup T)$ and $(S_2, S_1 \cup T)$ are added to $\S^{i+1}$. But we know from the induction assumption that at the beginning of iteration $i$ we had $(S_1 \cup S_2,T) \in \S^i$, therefore by 2. $Q_M(S_1 \cup S_2 \cup T) = 1$. This ensures that also in this case, $(S^{i+1}_j, T^{i+1}_j)$ is only added to $\S^{i+1}$ when $Q_M(S^{i+1}_j\cup T^{i+1}_j) = 1$.
     \item We know from the induction hypothesis $e \cap S^i_j \neq \emptyset$ for all $j \in [\ell]$. For a fixed $j \in [\ell],$ $S^i_j$ contains a vertex $v$ from $e$. In the while loop, $S^i_j$ is partitioned into $S_1$ and $S_2$, such that $v$ is either in $S_1$ or in $S_2$. If $Q_M(S_1 \cup T^i_j) = 1$, then $v \in S_1$, $(S_1,T^i_j)$ is added to $\S^{i+1}$ and $S_1 \cap e \neq \emptyset$. Similarly, if $Q_M(S_2 \cup T^i_j) = 1$, then $v \in S_2$, $(S_2,T^i_j)$ is added to $\S^{i+1}$ and $S_2 \cap e \neq \emptyset$. Assume now that $Q_M(S_1 \cup T^i_j) = Q_M(S_2 \cup T^i_j) = 0$, then $(S_1, S_2 \cup T^i_j)$ and $(S_2,S_1 \cup T^i_j$) are added to $\S^{i+1}$ and we need to show that $S_1 \cap e \neq \emptyset$ and $S_2 \cap e \neq \emptyset$ . We know from 2. that $Q_M(S_1\cup S_2\cup T^i_j) = 1$. Therefore $Q_M(S_1 \cup T^i_j) = Q_M(S_2 \cup T^i_j) = 0$ implies that there is a vertex of $e$ both in $S_1$ and $S_2$, therefore $S_1 \cap e \neq \emptyset$ and $S_2 \cap e \neq \emptyset$. This concludes this part and shows that whenever $(S_j^{i+1}, T_j^{i+1})$ is added to $\S^{i+1}$, we have $S_j^{i+1} \cap e \neq \emptyset$.
 \end{enumerate}
\end{proof}

\lemmafindedges*
\begin{proof}
We know that a sample $S_i$ contains at least an edge if and only if $Q_M(S_i) = 1$. We also know from Lemma \ref{lem:fuep} and Lemma \ref{lem:fvcorrect} that if $S_i$ contains a unique edge $e$ of size less than $s$, then the subroutine $\textsc{FindEdge}(S_i, s )$ will return $e$. Therefore, to learn all edges of size in $\left[s/\alpha, s \right]$, we only need to make sure that each such edge appears in at least one sample that does not contain any other edges, in other words, that $\S$ is a unique-edge covering family for $M_{s,\alpha,V'} = \{ e \in M: e \subseteq V', \ s/\alpha \leq |e| \leq  s$\}. Below, we show that it is the case with probability $1-O(n^{-\log n}) = 1 -e^{-\Omega(\log^2 n)}$.

We first show that with high probability, each edge $e$ of size $|e| \in \left[s/\alpha, s \right]$ is contained in at least $(\log^2 n)/2$ sample $S_i$'s. We use $X_e$ to denote the number of samples containing $e$, then we have
$$\E(X_e) = n^{\alpha} \log^2 n \cdot n^{-\frac{\alpha|e|}{s}} \geq \log^2 n .$$
By Chernoff bound, we have
$$P(X_e \leq \log^2 n/2) \leq P(X_e \leq \E(X_e)/2) \leq e^{-\frac{\E(X_e)}{8}} \leq n^{-\frac{\log n}{8}} .$$
As there are at most $\alpha n/s$ edges of size between $s/\alpha$ and $s$, by a union bound, we have

\[P(\exists e \in M \text{ s.t. } |e|\in \left[s/\alpha, s \right], X_e \leq (\log^2 n)/2) \leq \frac{\alpha n}{s}n^{-\log n/8} \leq n^{-(\log n/8-1)}.\]
Subsequently, 
\[P(\forall e \in M \text{ s.t. } |e|\in \left[s/\alpha, s \right], X_e \geq (\log^2 n)/2) \geq 1 - n^{-(\log n/8-1)}.\]

% $$ P(X \leq (1-\delta)\mu \leq e^{-\frac{\delta^2 \mu}{2}}$$

% E[Xe]=nαlog2n⋅n−α|e|/s≤nα−1log2n

% using the inequality |e|≥s/α. Then, by using this version of Chernoff Bound 

% Pr(X≥(1+δ)EH)≤e−EH⋅δ2/3,

% where δ>0, EH is a upper bound on the expected value of any random variable X, i set δ=1, X=Xe, EH=nα−1log2n and arrive at 

% Pr(Xe≥2nα−1log2n)≤e−nα−1log2n/3≤e−log2n/3.

% Since in a hypermatching there are at most αn/s edges with size at least s/α, by a union bound, we have

%  Pr(∃e s.t. |e| ≥s/α , Xe≥2nα−1log2n)≤αnse−log2n/3=e−Ω(log2n).

% Therefore, with high probability 1−e−Ω(log2n), every edge with size at least s/α is contained in at most 2nα−1log2n samples.

% As a summary, i believe for the result 2log2n there has to be an additional factor nα−1. This factor will not go away if we choose α>1 and therefore the query complexity has to be slightly worse.

From now on we condition on the event that all edges $e$ whose size is between $s/\alpha$ and $s$ are included in at least $(\log^2 n) / 2$ samples. We show below that given $e \subseteq S_i$ for a sample $S_i$, the probability that there exists another edge of size at least $s$ in $S_i$ is upper bounded by $n^{-1/n}$. Recall that $M$ is the hidden matching we want to learn. We abuse notation and use $e' \in M \cap S_i$ to denote that an edge $e'$ is both in the matching $M$ and included in the sample $S_i$. We have
\begin{align}
\P(\exists e' \in M \cap S_i, e' \neq e \ | \ e \subseteq S_i) \leq & \sum_{e'\in M, e' \neq e} \P(e' \subseteq S_i \ | \ e \subseteq S_i)\label{eq:ad1}\\
= & \sum_{e'\in M, e' \neq e} \P(e' \subseteq S_i)\label{eq:ad2}\\
\leq & \frac{\alpha n}{s} \cdot (n^{-\frac{\alpha}{s}})^{\frac{s}{\alpha}}\label{eq:ad3}\\
= & \frac{\alpha n}{s} \cdot n^{-{\frac{\alpha s}{\alpha s}}} = \frac{\alpha}{s}. \notag
\end{align}
where Eq.\eqref{eq:ad1} uses union bound. Eq.\eqref{eq:ad2} is due to the fact that $M$ is a matching and thus $e\cap e' = \emptyset$ and vertices are sampled independently. Eq.\eqref{eq:ad3} follows because the total number of remaining edges is upper bounded by $\alpha n/s$ and each edge is in $S_i$ with probability at most $(n^{-\frac{\alpha}{s}})^{\frac{s}{\alpha}}$.\\

As each edge $e$ with size between $s/\alpha$ and $s$ is contained in at least $(\log^2 n)/2$ samples, we have that
\begin{align*}
\P(\forall S_i \text{ s.t. } e \subseteq S_i, \exists \ e' \in M \cap S_i, e' \neq e) \leq (\frac{s}{\alpha})^{-\log^2 n /2} \leq n^{-\log n/4},
\end{align*}

where the last inequality follows because $s/\alpha \geq 2$.

By another union bound on all edges of size between $s/\alpha$ and $ s $ (at most $n$ of them), we have that 
\begin{align*}
\P(\exists e \text{ s.t. } |e| \in \left[s/\alpha, s\right], \forall S_i \text{ s.t. } e \subseteq S_i, \exists e' \in M \cap S_i, e' \neq e) \leq n^{-(\log n/2-1)}. 
\end{align*}
We can thus conclude that with probability at least $1-O(n^{-\log n}) = 1 - e^{-\Omega(\log^2 n)}$, for all $e \in M$ with size between $s/\alpha$ and $s$, there exists at least one sample $S_i$ that contains $e$ but no other remaining edges. By Lemma \ref{lem:fuep} and Lemma \ref{lem:fvcorrect}, $e$ is returned by $\textsc{FindEdge}(S_i,s)$ and $e$ is added to the matching $\hat{M}$ that is returned by $\textsc{FindDisjointEdges}(s,\alpha,V')$.\\

Next, we show that $\textsc{FindDisjointEdges}$ is an $(r+1)$-adaptive algorithm with $n^\alpha \log^2 n + 2\alpha q n^\alpha \log^2 n/s$ queries. We first observe that $\textsc{FindDisjointEdges}$ makes only parallel calls to $\textsc{FindEdges}$ (after verifying that $Q_M(S_i) = 1$). Therefore, since $\textsc{FindEdges}$ runs in at most $r$ rounds, we get that $\textsc{FindDisjointEdges}$ runs in at most $r+1$ rounds. To prove a bound on the number of queries, we first argue using a Chernoff bound that with probability $1 - e^{-\Omega(\log^2 n)}$, every edge $e$ of size at least $s/\alpha$ is in at most $2 n^{\alpha-1} \log^2 n$ samples $S_i$. For a fixed edge $e$ we have:

% $$ P(X > (1+\delta) UB) \leq P(X > (1+\delta) \mu) \leq e^{\frac{-\delta^2 \mu}{2+\delta}} $$

% $$\E(X_e) = n^{\alpha (1-\frac{|e|}{s})} \log^2 n  = \mu \geq \log^2 n $$

% $$ UB =  n^{\alpha-1} \log^2 n$$

\begin{align*}
    P(X_e > 2 n^{\alpha-1} \log^2 n) & \leq P(X_e > 2\cdot \E(X_e))\\
    & \leq  e^{\frac{-\E(X_e)}{3}} \\
    &  \leq  e^{\frac{-\log^2 n}{3}} = e^{-\Omega(\log^2 n)}.
\end{align*}

A union bound on the at most $n$ edges yields that every edge of size at least $s/\alpha$ is in at most $2 n^{\alpha-1} \log^2 n$ samples with probability $1-e^{-\Omega(\log^2 n)}$. Therefore, there are at most $\frac{2\alpha n}{s} n^{\alpha - 1} \log^2n$ samples $S_i$ such that $Q_M(S_i) = 1$ For every one of these samples, we call $\textsc{FindEdges}(S_i,s)$, which makes $q$ queries by assumption. Therefore, with high probability $1 - e^{-\Omega(\log^2 n)}$, $\textsc{FindDisjointEdges}$ makes $\O(n^{\alpha} \log^2 n + \alpha q  \frac{2n^{\alpha}}{s} \log^2n)$ queries.
\end{proof}

The following claim is a technical result we need to prove Theorem \ref{thm:learning_hypermatching_new}.

\begin{claim}\label{claim:1exponential_inequality}
For $0 \leq x < 1$ we have $\frac{1}{1-x} \geq e^x$.
\end{claim}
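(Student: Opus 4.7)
The plan is to reduce the inequality $\frac{1}{1-x} \ge e^x$ to the standard exponential bound $e^y \ge 1+y$. First I would multiply both sides by $1-x$, which is strictly positive on the given range $0 \le x < 1$, and rewrite the claim equivalently as $1 \ge (1-x)e^x$, or, dividing through by $e^x$, as $e^{-x} \ge 1-x$.

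At that point the work is done by a one-line appeal to the elementary inequality $e^y \ge 1+y$, valid for all real $y$, with $y = -x$. This inequality itself follows from the convexity of the exponential (its graph lies above the tangent line at $0$) or, if one prefers a self-contained derivation, from noting that $g(y) := e^y - 1 - y$ satisfies $g(0) = 0$ and $g'(y) = e^y - 1$, which is nonnegative for $y \ge 0$ and nonpositive for $y \le 0$, so $g$ attains its minimum $0$ at $y = 0$.

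There is no real obstacle here; the only thing to be careful about is the direction of the rearrangement, specifically that $1-x > 0$ on $[0,1)$ so the inequalities are preserved when multiplying. An alternative route, if one wanted to avoid any appeal to a named inequality, would be a direct Taylor-series comparison $\frac{1}{1-x} = \sum_{k\ge 0} x^k \ge \sum_{k\ge 0} \frac{x^k}{k!} = e^x$, using that $\frac{1}{k!} \le 1$ for every $k \ge 0$ and that each $x^k \ge 0$ on the interval in question; this is just as short and avoids invoking $e^y \ge 1+y$ as a black box.
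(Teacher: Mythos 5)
Your proof is correct, and it takes a more direct route than the paper's. The paper considers $f(x) = \frac{1}{1-x} - e^x$, computes $f'$, and shows $f'(x) \ge 0$ on $[0,1)$ (concluding via $f(0)=0$); amusingly, the key step in bounding $f'$ is the inequality $1-x \le e^{-x}$, which is exactly the rearranged form of the claim itself, so the paper is in effect proving the statement by differentiating it and re-invoking it. You instead observe that, since $1-x>0$ on the stated range, the claim is equivalent to $e^{-x}\ge 1-x$ and follows in one line from the standard bound $e^y \ge 1+y$ (with $y=-x$), for which you also supply a self-contained convexity/monotonicity derivation. Your alternative via the series comparison $\sum_{k\ge 0} x^k \ge \sum_{k\ge 0} x^k/k!$ for $x\in[0,1)$ is equally valid and equally short. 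Both of your arguments buy the same conclusion with less machinery and without the circular flavor of the paper's derivative computation; the only point of care, which you correctly flag, is that $1-x>0$ so the rearrangement preserves the inequality's direction.
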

\begin{proof}
Consider the function $f(x) = \frac{1}{1-x} - e^x$ for $x < 1$. The derivative of $f$ is
\[ f'(x) = \frac{1}{(1-x)^2} - e^x = \frac{1}{1-x} \cdot \left(\frac{1}{1-x} - (1-x)e^x\right).\]

Because $1-x \leq e^{-x}$, we have 
$(1-x)e^x \leq 1$. Therefore,
\[ f'(x) \geq \frac{1}{1-x} \cdot \left(\frac{1}{1-x} - 1\right) \geq 0,\]
where the last inequality follows from the assumption that $0 \leq x < 1$.
\end{proof}

\thmlearninghypermatchingnew*

\begin{proof}
\textsc{FindSingletons} learns, with probability 1, all the edges of size $s = 1$. For edges of size greater than 2, from Lemma~\ref{lemma:find_disjoint_edjes}, we know that each call of $\textsc{FindDisjointEdges}(s,\alpha, V', \textsc{FindEdge})$ fails to find all edges in $V'$ of size between $s/\alpha$ and $s$ with probability at most $e^{-\Omega(\log^2 n)}$. As there are at most $n$ different edge sizes, $\textsc{FindDisjointEdges}$ is called at most $n$ times (we show below that it is actually called at most $\log n / (\log \alpha)$ times). Thus, the probability that at least one of the calls fails is upper bounded by 
\[n \cdot e^{-\Omega(\log^2 n)} = e^{\log n -\Omega(\log^2 n)}) = e^{-\Omega(\log^2 n)}.\]
We can thus conclude that the probability that all calls are successful is at least $1-e^{-\Omega(\log^2 n)}$.

Next, we show that $\textsc{FindMatching}$ makes $\O(\frac{\log n}{\log \alpha})$ calls to $\textsc{FindDisjointEdges}$.

We use $s_t$ to denote the value of $s$ after the $t^{\text{th}}$ call of $\textsc{FindDisjointEdges}$ for $t = 1, 2, \ldots, T$ with $T$ being the number of adaptive rounds of $\textsc{FindMatching}$ and set $s_0 = 2\alpha$. We show that the algorithm needs less than $O(\frac{\log n}{\log \alpha})$ rounds to go over all the possible edge sizes of the matching. In step 5 of $\textsc{FindMatching}$, we update $s_t$ as follows:
    \[ s_t \leftarrow  \lfloor \alpha s_{t-1} \rfloor + 1\]
    Therefore 
    \begin{align} \label{eq:ada-1}
        s_t & \geq \alpha s_{t-1},
    \end{align}
    and
    \begin{align} \label{eq:ada-2}
        s_t & \geq (\alpha)^t s_{0},
    \end{align}
    
Since the maximum $s$ we input to $\textsc{FindDisjointEdges}$ is $n$, we have that 
\[n \geq s_T \geq 2 (\alpha)^{T+1}\] and subsequently
\[T \leq \frac{\log n - \log 2}{\log \alpha}.\]

Therefore, $\textsc{FindMatching}$ makes $\O(\frac{\log n}{\log \alpha})$ calls to $\textsc{FindDisjointEdges}$. From Lemma \ref{lemma:find_disjoint_edjes}, we know that with high probability $1-e^{-\Omega(\log^2 n)}$, $\textsc{FindDisjointEdges}(s,\alpha,V', \textsc{FindEdge})$ runs in $r+1$ rounds and  makes $\O(n^{\alpha} \log^2 n + 2\alpha q n^{\alpha} \log^2 n / s)$ queries when \textsc{FindEdges}$(S,s)$ runs in $r$ rounds and makes $q$ queries. From Lemma \ref{lem:fuep} and Lemma \ref{lem:fvcorrect} we know that for the subroutines \textsc{FindEdgeAdaptive} and \textsc{FindEdgeParallel} we have $r = \O(\log n), q = \O(s\log n)$ and $r = 1,q = \O(n)$ respectively. Therefore $\textsc{FindDisjointEdges}$ runs in $\O(\log n)$ rounds and makes $\O(n^{\alpha} \log^2 n + 2n^{\alpha}\log^3n)$ queries with \textsc{FindEdgeAdaptive} and in 2 rounds $\O(n^{\alpha} \log^2 n + 2n^{\alpha + 1}\log^2n)$ with \textsc{FindEdgeParallel}.
We conclude that with high probability, 
\begin{itemize}
    \item 
 $\textsc{FindMatching}$ runs in $\O(\frac{\log^2 n}{\log \alpha})$ rounds and  makes $\O(n^{\alpha} \frac{\log^3 n}{\log \alpha} + 2\alpha n^{\alpha} \frac{\log^4n}{\log \alpha})$ queries with \textsc{FindEdgeAdaptive}, \item$\textsc{FindMatching}$ runs in $\O(\frac{\log n}{\log \alpha})$ rounds and makes  $\O(n^{\alpha} \frac{\log^3 n}{\log \alpha} + 2n^{\alpha+1}\frac{\log^3n}{\log \alpha})$
queries with \textsc{FindEdgeParallel}.
\end{itemize}

Now we turn our attention to particular values of $\alpha$.
\begin{itemize}
    \item Suppose now that $\alpha = 1/(1 - \frac{1}{2\log n})$ and that we use \textsc{FindEdgeAdaptive}. We get then $\alpha = 1 + o(1)$. Furthermore, by Claim \ref{claim:1exponential_inequality}, we have 
\begin{align*}
     \log \alpha & = \log \frac{1}{1 - \frac{1}{2\log n}}\\
     & \geq \frac{1}{2\log n}.
\end{align*}
Therefore we get that $\frac{1}{\log \alpha} = O(\log n)$. We also observe that for $n \geq 3$ 
    \begin{align*}
     \alpha & = 1 + \frac{\frac{1}{2\log n}}{1 -\frac{1}{2\log n}}\\
     & \leq 1 + \frac{1}{\log n}.
     \end{align*}

Therefore $n^{\alpha} \leq n \cdot n^{\frac{1}{\log n}} = \O(n)$. We finally conclude that when $\alpha = 1/(1 - \frac{1}{2\log n})$ and we use the subroutine \textsc{FindEdgeAdaptive}, $\textsc{FindMatching}$ runs in $\O(\log^2 n/\log \alpha) = \O(\log^3 n)$ adaptive rounds and makes $\O(n\log^4 n + n \log^5 n) = \O(n \log^5 n)$ queries in total.
\item With $\alpha = 2$ and \textsc{FindEdgeParallel}, we have $r = 1$ and we get that $\textsc{FindMatching}$ learns any hypermatching w.h.p. in $O(\log n)$ rounds and with at most $\O(n^{3} \log^3 n)$ queries.
% \item If the hypermatching $M$ has rank $d$ and is $\rho$-near uniform, then by Lemma \ref{lemma:find_disjoint_edjes}, calling \textsc{FindDisjointEdges}$(d,\rho,V)$ directly with \textsc{FindEdgeParallel} with high probability learns $M$ non-adaptively and with at most $\O(n^{\rho}\log^2n + 2\rho n^2 \log^2 n)$ queries.
\end{itemize}
\end{proof}
% \begin{proof}
% When $\alpha = 2$, it is immediate from Lemma \ref{lemma:ada3} that Algorithm \ref{alg:adaptive_new_main} runs in $O(\log n)$ adaptive rounds and makes $O(n^3\log^2 n)$ queries in total. Suppose now that $\alpha = 1/(1 - \frac{1}{2\log n})$, then $\alpha = 1 + o(1)$. Furthermore, by Claim \ref{claim:1exponential_inequality}, we have 
% \begin{align*}
%      \log \alpha & = \log \frac{1}{1 - \frac{1}{2\log n}}\\
%      & \geq \frac{1}{2\log n}.
% \end{align*}
% Therefore we get that $\frac{1}{\log \alpha} = O(\log n)$. We finally conclude that when $\alpha = 1/(1 - \frac{1}{2\log n})$, Algorithm \ref{alg:adaptive_new_main} runs in $O(\log n/\log \alpha) = O(\log^2 n)$ adaptive rounds and makes $O(n^{\alpha} \frac{\log^2 n}{\log \alpha}) = O(n^{1+o(1)}\log^3 n)$ queries in total.
% \end{proof}

\subsection{Missing analysis for hardness of learning hypermatchings (Section~\ref{subsection:hardness_hypermatching})}
\label{appendix:hardness_hypermatching}

\subsubsection{Lower bound for non-adaptive algorithms}
To argue that learning hypermatchings non-adaptively requires an exponential number of queries, we fix  the number of vertices $n$, we construct a family of matchings $M_P$ which depend on a partition $P$ of the vertex set $[n]$ into 3 parts $P_1,P_2,P_3$ such that 
\begin{itemize}
        \item $|P_1| = n - (\sqrt{n} + 1)$.
         \item $P_1$ contains a perfect matching $M_1$ with edges of size 2.
        \item $|P_2| = 1$
        \item $P_3$ is an edge of the matching s.t. $|P_3| = \sqrt{n}$.
        \item $M_P = M_1 \cup \{P_3\}$.
        \end{itemize}

We denote all such partitions by $\mathcal{P}_3$. For a partition $P \in \mathcal{P}_3$, multiple matchings are possible, depending on the perfect matching $M_1$. We use $M_P$ to denote a random matching from all the possible matchings satisfying the properties above. The main idea of the construction is that after one round of queries, elements in $P_2$ and $P_3$ are indistinguishable to any algorithm with polynomial queries. However, a learning algorithm needs to distinguish elements in $P_3$ from the element in $P_2$ to learn the edge $P_3$.

Before presenting the proof, we formalize what it means for a set of elements to be indistinguishable. Since the next  definition is also used in the next subsection, we consider an arbitrary family $\mathcal{P}_R$ of partitions $P = (P_0, \ldots, P_R)$. Given a partition $(P_0, \ldots, P_R)$, we denote by $P_r$: the union of parts $P_i$ such that $i \geq r$, i.e. $P_r := \cup_{i=r}^R P_i$. Informally, we say that queries $Q_{M_P}(S)$ are independent of the partition of $P_r$: if the values $Q_{M_P}(S)$ of these queries do not contain information about which elements are in $P_r$, or $P_{r+1}, \ldots ,$ or $P_R$.

\begin{definition}
Given a family of partitions $\mathcal{P}_R$ and a partition $P = (P_0,\ldots,P_R) \in \mathcal{P}_R$, let $P'$ be
a partition chosen uniformly at random from $\{ (P_0',\ldots,P_R') \in \mathcal{P}_3 \ : \ P_0' = P_0, \ldots P'_{r-1} = P_{r-1} \}$. Let $M_{P'}$ be a matching on $P'$ such that $M_P$ and $M_{P'}$ are equal on $P_r := \cup_{i=r}^R P_i$. A query $Q_{M_P}(S)$ is independent from $P_r := \cup_{i=r}^R P_i$ if $Q_{M_P}(S) = Q_{M_{P'}}(S)$ with high probability over $P'$.
\end{definition}

For example, in the partition $(P_1,P_2,P_3)$ above, any query $Q_{M_P}(S)$ such that $S$ contains an edge from $M_1$ is independent of the partition of $P_2 \cup P_3$ since this query will always answer 1.

\textbf{Analysis of the construction.} We consider a non-adaptive algorithm $\mathcal{A}$, a uniformly random partition $P = (P_1, P_2, P_3)$ in $\mathcal{P}_3$, and a matching $M_P$. We argue that with high probability over both the randomization of $P$ and the decisions of $A$, the matching returned by $\mathcal{A}$ is not equal to $M_P$. The analysis consists of two main parts.

The first part argues that, with high probability, for all queries $S$ made by a non-adaptive algorithm, $Q_{M_P}(S)$ is independent of the partition of $P_2 \cup P_3$.

\begin{lemma}\label{lemma:non_adaptive_hardness_1}
Let $P$ be a uniformly random partition from $\mathcal{P}_3$. For any collection $\mathcal{S}$ of $\poly(n)$ non-adaptive queries, with high probability $1-e^{-\Omega(n)}$, for all $S \in \mathcal{S}$, $S$ is independent of the partition
of $P_2 \cup P_3$.\end{lemma}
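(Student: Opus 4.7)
The plan is to handle each query $S \in \mathcal{S}$ separately via a case split on $|S|$, show independence with high probability in each case, and then union bound over the $\poly(n)$ queries. Fix the threshold $\tau = (n+\sqrt{n}+1)/2$ and consider an arbitrary $S \in \mathcal{S}$. The goal for this $S$ is to show that, with high probability over the random partition $P$, the answer $Q_{M_P}(S)$ coincides with $Q_{M_{P'}}(S)$ for every repartition $P' = (P_1, P_2', P_3')$ of $P_2 \cup P_3$ into a singleton $P_2'$ and a $\sqrt{n}$-set $P_3'$.

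First, for large queries ($|S| > \tau$), I would argue deterministically by pigeonhole. We have $|S \cap P_1| \geq |S| - |P_2 \cup P_3| > \tau - (\sqrt{n}+1) = |P_1|/2$, and $M_1$ is a perfect matching on $P_1$ containing $|P_1|/2$ edges. At least one edge of $M_1$ must lie entirely in $S$: otherwise each of the $|P_1|/2$ edges would contribute at least one endpoint to $P_1 \setminus S$, forcing $|P_1 \setminus S| \geq |P_1|/2$, a contradiction with $|S \cap P_1| > |P_1|/2$. Because $M_1$ is determined by the fixed part $P_1$ and is untouched by repartitioning $P_2 \cup P_3$, the very same edge witnesses $Q_{M_{P'}}(S) = 1$ for every repartition $P'$, and hence $Q_{M_P}(S) = Q_{M_{P'}}(S) = 1$ with probability one.

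Second, for small queries ($|S| \leq \tau$), I would argue by concentration. Since $P$ is uniform in $\mathcal{P}_3$, the set $P_2 \cup P_3$ is a uniformly random $(\sqrt{n}+1)$-subset of $V$, so $|S \cap (P_2 \cup P_3)|$ is hypergeometric with mean $\mu = |S|(\sqrt{n}+1)/n \leq (\sqrt{n}+1)/2 + 1$. A Chernoff bound for the hypergeometric distribution (equivalently, for sums of negatively associated Bernoullis) yields
\[
\P\bigl(|S \cap (P_2 \cup P_3)| \geq \sqrt{n}\bigr) \;\leq\; e^{-\Omega(\sqrt{n})}.
\]
On the complementary event, $S$ contains strictly fewer than $\sqrt{n} = |P_3'|$ vertices of $P_2 \cup P_3$, so $P_3' \not\subseteq S$ for every repartition; hence $Q_{M_{P'}}(S)$ is determined purely by which edges of the (fixed) matching $M_1$ lie in $S$, a quantity invariant under repartitioning. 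Thus $Q_{M_P}(S) = Q_{M_{P'}}(S)$ with probability $1 - e^{-\Omega(\sqrt{n})}$.

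A union bound over the $\poly(n)$ queries in $\mathcal{S}$ preserves this failure rate, giving the claim (my analysis yields $1-e^{-\Omega(\sqrt{n})}$, which already suffices for the $\omega(1/\sqrt{n})$-success-probability lower bound in Theorem~\ref{thm:hardness}; the stated $e^{-\Omega(n)}$ looks like a mild overstatement). The main technical obstacle lies in the concentration step of the small-query case: because $P_2 \cup P_3$ is a uniformly random set of fixed size rather than an i.i.d.\ Bernoulli sample, the indicators $\mathbf{1}\{v \in P_2 \cup P_3\}$ are not independent, so one cannot invoke a vanilla Chernoff bound directly. The fix is to appeal to a Hoeffding-type inequality for sampling without replacement, or equivalently to the fact that these indicators are negatively associated so that the standard Chernoff bound still applies.
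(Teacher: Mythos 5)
Your proof is correct and takes essentially the same route as the paper's: a size split at $(n+\sqrt{n}+1)/2$, a deterministic pigeonhole/counting argument for large queries showing $S$ must swallow an edge of the fixed matching $M_1$, and a concentration bound (over the uniformly random $(\sqrt{n}+1)$-set $P_2 \cup P_3$) for small queries showing $|S \cap (P_2 \cup P_3)| < \sqrt{n}$ w.h.p., followed by a union bound. You are also right to flag both the hypergeometric/negative-association caveat that the paper silently glosses over with ``by the Chernoff bound,'' and the fact that the paper's own proof (and Theorem~\ref{thm:hardness}, which invokes it) yields $1-e^{-\Omega(\sqrt{n})}$ rather than the $1-e^{-\Omega(n)}$ stated in the lemma.
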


\begin{proof}
Consider any set $S \in \mathcal{S}$ which is independent of the randomization of $P$. There are two cases
depending on the size of $S$.
\begin{enumerate}
    \item $|S| \leq \frac{n + \sqrt{n} + 1 }{2}$. In this case, if $S$ does not contain at least $\sqrt{n}$ vertices from $P_2 \cup P_3$, then for any other partition $P'$ such that $P_1' = P_1$, we have $Q_{M_P}(S) = Q_{M_{P'}}(S)$. In fact, both $Q_{M_{P'}}(S)$ and $Q_{M_P}(S)$ will be equal 1 if and only if $S$ contains an edge from $M_1$. Next we show that, with high probability,  $S$ will contain fewer than $\sqrt{n}$ vertices from $P_2 \cup P_3$. Since $P_2 \cup P_3$ is a uniformly random set of size $\sqrt{n}+1$, by the Chernoff bound, $|S \cap (P_2 \cup P_3)| < \sqrt{n}+1$ with probability $1-e^{-\Omega(\sqrt{n})}$.

    We therefore get that $Q_{M_P}(S)=Q_{M_{P'}}(S)$ for any partition $P' =(P_1,P_2',P_3')$ and query $|S| \leq \frac{n + \sqrt{n}+1}{2}$ with probability $1-e^{-\Omega(\sqrt{n}))}$.
    \item $|S| > \frac{n + \sqrt{n} + 1 }{2}$. In this case, the set $S$ contains at least one matched edge from $P_1$. In fact, the number of edges in $M_1$ plus the size of $P_2 \cup P_3$ is
    \[ \frac{|P_1|}{2} + |P_2| + |P_3| = \frac{n -(\sqrt{n} + 1)}{2} + \sqrt{n} + 1 = \frac{n + \sqrt{n} + 1 }{2} < |S|.\]
    
    $S$ must therefore contain at least one edge from $M_1$, and we get that $Q_{M_P}(S)=Q^{P'}(S)$ for any partition $P' =(P_1,P_2',P_3')$ and query $|S| > \frac{n + \sqrt{n}+1}{2}$ with probability 1.
\end{enumerate}

By combining the two cases $|S| \leq \frac{n + \sqrt{n} + 1 }{2}$ and $|S| > \frac{n + \sqrt{n} + 1 }{2}$ , we get that with probability $1-e^{-\Omega(\sqrt{n}))}$, $Q_{M_P}(S)$ is independent of the partition of $P_2 \cup P_3$, and by a union bound, this holds for any collection of $poly(n)$ sets $S$.
\end{proof}

The second part of the analysis argues that if all queries $Q_{M_P}(S)$ of an algorithm are independent of the partition of $P_2 \cup P_3$, then, with high probability, the matching returned by this algorithm is not equal to $M_P$.

\begin{lemma}\label{lemma:non_adaptive_hardness_2}
Let $P$ be a uniformly random partition in $\mathcal{P}_3$ and $M_P$ a random matching over $P$. Consider an algorithm $\mathcal{A}$ such that all the queries $Q_{M_P}(S)$ made by $\mathcal{A}$ are independent of the partition of $P_2 \cup P_3$. Then, the (possibly randomized) matching $M$ returned by $\mathcal{A}$ is, with probability $1 -O(1/\sqrt{n})$, not equal to $M_P$.
\end{lemma}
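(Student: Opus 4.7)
The plan is to reduce success of $\mathcal{A}$ to a guessing problem whose best achievable probability is $1/(\sqrt{n}+1)$. First I would condition on the part $P_1$, which by construction also fixes the perfect matching $M_1$ on $P_1$. Conditioned on $P_1$, the set $U := V \setminus P_1$ has size $\sqrt{n}+1$, and $(P_2,P_3)$ is a uniformly random partition of $U$ into a singleton and a $\sqrt{n}$-subset; there are exactly $\binom{\sqrt{n}+1}{\sqrt{n}} = \sqrt{n}+1$ equally likely candidates for $P_3$. Since every edge of $M_P$ other than $P_3$ is determined by $P_1$, we have $M = M_P$ only if the unique size-$\sqrt{n}$ edge returned by $\mathcal{A}$ equals $P_3$, so it suffices to upper bound the probability of this latter event.

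Next I would argue that $\mathcal{A}$'s output is, up to a low-probability event, independent of the actual choice of $P_3$. I would fix $\mathcal{A}$'s internal randomness, so that its sequence of queries and its final output become deterministic functions of the transcript of oracle answers. For each of the $\poly(n)$ queries $S$ actually issued along the realized execution, the hypothesis that $Q_{M_P}(S)$ is independent of the partition of $P_2 \cup P_3$ gives $Q_{M_P}(S) = Q_{M_{P'}}(S)$ with high probability over a fresh re-sampled partition $P'$ of $U$ (with the same $P_1$). A union bound over these $\poly(n)$ queries then produces a coupling of $P$ with a fresh $P'$ under which the entire transcript $\mathcal{A}$ observes is the same whether the matching is $M_P$ or $M_{P'}$, except on an event $\mathcal{E}$ of probability $o(1)$.

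On the complement of $\mathcal{E}$, the transcript, and hence the output $M$, is a function of data whose joint distribution is invariant under relabeling the $\sqrt{n}+1$ possible choices of $P_3$. Since $P_3$ is uniform over these $\sqrt{n}+1$ values and, in the coupled model, is independent of the transcript on this high-probability event, the probability that the size-$\sqrt{n}$ edge of $M$ coincides with $P_3$ is at most $1/(\sqrt{n}+1)$. Combining with the $o(1)$ mass of $\mathcal{E}$ gives $\Pr[M = M_P] \leq 1/(\sqrt{n}+1) + o(1) = O(1/\sqrt{n})$, as required. The main delicate point is in the coupling step: one must first fix $\mathcal{A}$'s random tape (and then average at the end), so that the set of queries to which the union bound is applied is a fixed random object and the query-by-query independence assumption can be invoked cleanly, without introducing spurious dependence between the query sequence and the re-randomized partition $P'$.
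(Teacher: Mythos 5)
Your proof takes essentially the same approach as the paper: reduce success to guessing $P_3$ uniformly among $\sqrt{n}+1$ equiprobable candidates after observing a transcript that is, up to an $o(1)$ bad event, the same for all candidates, giving $\Pr[M=M_P]\le 1/(\sqrt{n}+1)+o(1)=O(1/\sqrt{n})$. The paper states the coupling step more tersely (asserting conditional independence of the output given $P_1$) but the content is the same; note only that conditioning on $P_1$ does not fix the perfect matching $M_1$ on $P_1$ (the paper takes $M_1$ random given $P_1$), so you should condition on $(P_1, M_1)$ rather than $P_1$ alone, which leaves the rest of your argument unchanged.
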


\begin{proof}
Consider an algorithm $\mathcal{A}$ such that all queries $Q_{M_P}(S)$ of $\mathcal{A}$ are independent of the partition of $P_2 \cup P_3$. Thus, the matching $M$ returned by $\mathcal{A}$ is conditionally independent of the randomization of the partition $P$ given $P_1$.

For the algorithm $\mathcal{A}$ to return $M_P$, it needs to learn the edge $P_3$. We distinguish two cases.
\begin{itemize}
    \item $\mathcal{A}$ does not return any edge that is included in $P_2 \cup P_3$. In this case, with probability 1, $M$ is not equal to $M_P$.
    \item $\mathcal{A}$ returns an edge that is included in $P_2 \cup P_3$. We know from the previous lemma that with probability $1 -e^{-\Omega(\sqrt{n})}$, all queries are independent from $P_2 \cup P_3$. Therefore, the edge returned by $\mathcal{A}$ is also independent of the partition $P_2 \cup P_3$. The probability that this edge is equal to $P_3$ is less than $1/\sqrt{n+1}$. Therefore, with probability $(1 -e^{-\Omega(\sqrt{n})})(1-1/\sqrt{n+1})= 1-O(1/\sqrt{n})$, the returned matching $M$ is not equal to $M_P$.
\end{itemize}

% In other words, we need that a query $S$ contains the edge $P_3$ without containing any other edge from $M_1$. But we know by the Chernoff bound that  $S$ either do not contain $P_3$ or contain another edge from $M_1$.
\end{proof}

By combining Lemma \ref{lemma:non_adaptive_hardness_1} and Lemma \ref{lemma:non_adaptive_hardness_2}, we get the hardness result for non-adaptive
algorithms.

\thmhardnessone*

\begin{proof}
Consider a uniformly random partition $P \in \mathcal{P}_3$, a matching $M_P$ and an algorithm $\mathcal{A}$ which queries $M_P$ . By Lemma \ref{lemma:non_adaptive_hardness_1}, after one round of queries, with probability $1 - e^{-\Omega(\sqrt{n})}$ over both the randomization of $P$ and of the algorithm, all the queries $Q_{M_P}(S)$ made by $\mathcal{A}$ are independent of the partition of $P_2 \cup P_3$. By Lemma \ref{lemma:non_adaptive_hardness_2}, this implies that, with probability $1 - O(1/\sqrt{n})$, the matching  returned by $\mathcal{A}$ is not a equal to $M_P$. By the probabilistic method, this implies that there exists a partition $P \in \mathcal{P}_3$ for which, with probability $1 - O(1/\sqrt{n})$, $\mathcal{A}$ does not return $M_P$ after one round of queries.
\end{proof}
% From the analysis above, we can conclude the following: given any 1-round algorithm that makes $poly(n)$ non-adaptive queries and let $P = (P_1, P_2, P_3)$ be uniformly sampled from $\mathcal{P}_3$, then with high probability the algorithm is unable to distinguish between $M_P$ and $M_{P'}$ for any $P' = (P_1, P_2', P_3') \in \mathcal{P}_3$.

\subsubsection{Lower bound for $o(\log \log n)$ adaptive algorithm}

Our idea is to construct a partition $P_0, \ldots, P_i, \ldots P_{R}$, such that every set $P_i$ is a perfect matching of edges of size $d_i$, and therefore has $|P_i|/d_i$ edges. We want to choose the size $|P_i|$ and $d_i$ ($d_i$ increasing) such that with probability $1 - e^{-poly(n)}$, after $i$ rounds of any adaptive algorithm that asks a polynomial number of queries, the partition $P_i \cup P_{i+1} \ldots P_R$ is indistinguishable.\\

%  Assume by contradiction that there exists a $O(\log \log n)$ rounds algorithm that makes poly many queries. That means that there exists constant $c$ such that that the algorithm makes at most $O(n^{c})$ queries. We show that there is a non-uniform matching on which this algorithm will fail to return the exact matching with high probability. Consider the following construction:\\

\noindent \textbf{Our construction}
\begin{itemize}
    \item $d_0 = 3$.
    \item $d_{i+1} = 3\log^{2} n \cdot d_i^2$.
    \item $|P_{i}| = 3\log^{2} n \cdot d_i^2$.
    \item $|P_{i+1}| = 3\log^{2} n  \cdot d_{i+1}^2 = 3\log^{2}n \cdot (3\log^{2} n \cdot d_i^2)^2 = 3\log^{2} n  \cdot |P_i|^2 $.
\end{itemize}

\noindent 
\textbf{Notation} : Let $\mathcal{P}_R$ be the set of partitions $P = (P_0,\ldots, P_R)$ satisfying the conditions above. For a fixed $i$, let $k_i = \frac{|P_i|}{d_i}$, and let $e_1^i, \ldots, e_j^i, \ldots, e^i_{k_i}$ be the random matching on $P_i$. Finally, let $n_i = \sum\limits_{l \geq i} |P_l|$\\

\begin{claim}\label{claim:size_of_partition}
The construction has $R + 1 = \Theta( \log \log n)$ partitions.
\end{claim}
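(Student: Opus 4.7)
The plan is to exploit the doubly exponential growth of the sequence $\{d_i\}$, with $R$ taken to be the largest integer for which the partition $(P_0,\ldots,P_R)$ fits into the $n$-vertex set $V$ (i.e., $\sum_{i=0}^R |P_i| \leq n$).

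First, setting $c := 3 \log^2 n$, the recursion becomes $d_{i+1} = c \cdot d_i^2$. Introducing $a_i := \log(c\, d_i)$, the recursion telescopes:
\[a_{i+1} = \log(c \cdot c\, d_i^2) = 2\log(c\, d_i) = 2 a_i,\]
hence $a_i = 2^i a_0$ with $a_0 = \log(c\, d_0) = \log(9 \log^2 n) = \Theta(\log \log n)$. Therefore $\log d_i = 2^i \cdot \Theta(\log \log n) - \log c$, so the $d_i$'s grow doubly exponentially in $i$.

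Next I would use that $|P_i| = c\, d_i^2$ and that consecutive part sizes satisfy
\[|P_{i+1}|/|P_i| \;=\; (d_{i+1}/d_i)^2 \;=\; (c\, d_i)^2 \;\geq\; c^2 d_0^2 \;=\; \poly(\log n).\]
This ratio makes the sum essentially equal to its final term: $\sum_{i=0}^R |P_i| = \Theta(|P_R|) = \Theta(c\, d_R^2)$. The constraint $\sum_{i=0}^R |P_i| \leq n$ thus fixes $R$, up to an additive constant, as the largest integer with $c\, d_R^2 \leq n$. Taking logarithms and substituting $\log d_R = 2^R a_0 - \log c$ yields $2^{R+1}\, a_0 - \log c \leq \log n$, i.e.,
\[2^{R+1} \;=\; \Theta\!\left(\frac{\log n}{\log \log n}\right),\]
and hence $R+1 = \log_2 \log n - \log_2 \log \log n + \Theta(1) = \Theta(\log \log n)$, matching a corresponding lower bound obtained by taking $R$ one larger would violate the constraint.

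The only real technical concern is bookkeeping of lower-order terms: $a_0 = \log(9 \log^2 n) = 2\log \log n + O(1)$ rather than exactly $2 \log \log n$, and the partial sum $\sum_i |P_i|$ is dominated by but not equal to $|P_R|$. Neither of these affects the asymptotic conclusion, so the main obstacle is purely notational, not conceptual.
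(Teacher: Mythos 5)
Your proof is correct and takes essentially the same route as the paper: both derive the doubly-exponential growth of $|P_i|$ and then solve $\sum_i |P_i| \approx n$ for $R$. Your substitution $a_i = \log(c\,d_i)$, which linearizes the recursion to $a_{i+1} = 2a_i$, is a slightly cleaner way of getting the closed form the paper derives by induction, and your observation that $\sum_i |P_i| = \Theta(|P_R|)$ sharpens the paper's cruder $\sum_i |P_i| \leq (R+1)|P_R|$, but the underlying argument is the same.
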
 

\begin{proof}
We first show that the construction has $R +1 \geq \log \log n$ partitions.

By induction we have that \[ |P_i| = (3\log^{2} n)^{2^i - 1} |P_0|^{2^i}\]

Therefore 
\begin{align*}
    n & \leq \sum\limits_{i=0}^R |P_i|\\
    & \leq (R+1) |P_R|\\
    & \leq (R+1) (3\log^{2} n)^{2^R - 1} |3\log^n d_0^2|^{2^R}\\
    & = (R+1) 9^{2^R} (3\log^{2} n)^{2^{R+1} - 1}\\
    & = 9^{2^{R+1}} (3\log^{2} n)^{2^{R+1}}\\
\end{align*}

This implies that 
\[ 2^{R+1} \log(27\log^2 n) \geq \log n,\]
and 
\[ R+1 \geq  \frac{\log \log n}{\log 2} - \frac{\log \left(\log(27\log^2 n)\right)}{\log 2} \geq \log \log n.\]

Next we show that $R = O( \log \log n)$.  We know that  $|P_R| \leq n$, which implies that $(3\log^{2}n)^{2^R - 1} \leq n$ and \[
 2^R - 1 \leq \frac{\log n}{ \log \log^6 n}.\]
 Therefore, 
 \[ R = O(\log \frac{\log n}{ \log \log^6 n}) = O(\log \log n).\]
\end{proof}

Suppose that at the beginning of round $i$, we learned $P_0 \cup \ldots \cup P_{i-1}$ but $P_i \cup P_{i+1} \ldots P_R$ is indistinguishable. Consider a query $S$ of the $i$-th round. We show that if the query $S$ is big, then with high probability $S$ contains an edge from $P_i$, and if $S$ is small, then $S$ does not contain any edge from the high partition $P_{i+1}, P_{i+2}, \ldots P_R.$ In both cases, querying $S$ gives an answer that is independent to $P_{i+1}, P_{i+2}, \ldots P_R$ with high probability over all the polynomial number of queries. Therefore, after the $i$-th round, $P_{i+1}, P_{i+2}, \ldots $ is indistinguishable.\\

\textbf{Outline of the proof.} Claim \ref{claim:binomial_bounds} is a technical result we need to bound the binomial coefficients in our proofs. Lemma \ref{claim:intersection_expectation} gives the expected size of the intersection of a query with a partition and presents a concentration result on the intersection when the queries are large. Lemma \ref{claim:uniform_edges} shows that for the purpose of bounding the probability that no edge of a partition is included in $S$, we can drop the constraint the edges are disjoint and think of them as being randomly and uniformly sampled (with replacement) from the partition. Lemma \ref{lemma:large_query} shows that $|S| \geq (1-\frac{1}{d_i})n_i$, then with probability $1 - e^{-\Omega(\log^2 n)}$, $S$ contains at least one edge from the matching on $P_i$. If $|S| \leq (1-\frac{1}{d_i})n_i$, then Lemma \ref{lemma:small_query} shows with probability $1 - e^{-\Omega(\log^2 n)}$, for every $l \geq i+1$, $S$ does not contain any edge from the matching on $P_l$. Combining the two lemmas (Lemma \ref{lemma:induction_loglogn_hardness}) shows that if $P$ is a uniformly random partition in $P_R$, and if all the queries made by an algorithm $\A$ in the previous $i$ rounds are independent of the partition of $P_i, \ldots, P_R$, then any collection of $poly(n)$ non-adaptive queries at round $i+1$, independent of $P_{i+1}, \ldots, P_R$
with probability $1 - e^{-\Omega(\log^2 n)}$.\\

\begin{claim}\label{claim:binomial_bounds}
If $k < \sqrt{n}$, then
\[\frac{n^k}{4(k!)} \le {n \choose k}\le \frac{n^k}{k!}\]
\end{claim}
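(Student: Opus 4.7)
The upper bound is essentially definitional. Write
\[
\binom{n}{k} \;=\; \frac{n(n-1)(n-2)\cdots(n-k+1)}{k!} \;=\; \frac{n^k}{k!}\prod_{i=0}^{k-1}\left(1-\frac{i}{n}\right).
\]
Since each factor $1-i/n \le 1$, the upper bound $\binom{n}{k} \le n^k/k!$ follows immediately, and the whole content of the claim lies in showing that the product $\prod_{i=0}^{k-1}(1-i/n)$ is at least $1/4$ whenever $k < \sqrt{n}$.

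The plan for the lower bound is to use the elementary inequality $1-x \ge e^{-2x}$, valid for $x \in [0,1/2]$. Since $k < \sqrt{n}$ implies $i/n < 1/\sqrt{n} \le 1/2$ (for $n\ge 4$; one can check the tiny cases by hand), I can apply this inequality to every factor and obtain
\[
\prod_{i=0}^{k-1}\left(1-\frac{i}{n}\right) \;\ge\; \prod_{i=0}^{k-1} e^{-2i/n} \;=\; \exp\!\left(-\tfrac{2}{n}\sum_{i=0}^{k-1} i\right) \;=\; \exp\!\left(-\tfrac{k(k-1)}{n}\right).
\]
Because $k < \sqrt{n}$, the exponent satisfies $k(k-1)/n < k^2/n < 1$, so $\prod_{i=0}^{k-1}(1-i/n) \ge e^{-1} > 1/4$. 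Multiplying both sides by $n^k/k!$ yields the lower bound.

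There is no real obstacle here: the claim is a standard calculation and the only choice is which elementary estimate to invoke. If one prefers to avoid the $1-x \ge e^{-2x}$ shortcut, an equivalent route is to take logarithms and use $\log(1-x) \ge -x - x^2$ for $x \in [0,1/2]$, giving $\sum_{i=0}^{k-1}\log(1-i/n) \ge -k^2/(2n) - k^3/(3n^2) \ge -1/2 - 1/(3\sqrt{n}) > -\log 4$, which yields the same conclusion.
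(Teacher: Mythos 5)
Your proof is correct. The decomposition $\binom{n}{k} = \frac{n^k}{k!}\prod_{i=0}^{k-1}(1-i/n)$ and the upper bound are identical to the paper's; the two arguments differ only in how they lower-bound the product. The paper replaces every factor by the smallest one, getting $(1-\frac{k-1}{n})^{k-1} \ge (1-1/\sqrt{n})^{\sqrt{n}}$ via monotonicity of $(1-x/n)^y$ in both arguments, and then invokes $(1-1/x)^x \ge 1/4$ for $x \ge 2$. You instead bound each factor individually by $e^{-2i/n}$ and sum the exponents, obtaining $\exp(-k(k-1)/n) \ge e^{-1}$, which is both slightly sharper (constant $e^{-1} \approx 0.37$ versus $1/4$) and avoids the two-variable monotonicity step. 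Your handling of the edge cases is fine: for $n < 4$ the hypothesis $k < \sqrt{n}$ forces $k \le 1$, where the product is empty or equal to $1$, so the bound is trivial. Either route is acceptable; yours is marginally cleaner.
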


\begin{proof}
We have that 
\begin{align*}
    {n \choose k} & = \frac{n(n-1)\ldots (n-(k-1)}{k!}\\
    & = \frac{n^k}{k!}(1-\frac{1}{n})\ldots (1-\frac{k-1}{n})\\
\end{align*}
The upper bound ${n \choose k} \leq \frac{n^k}{k!}$ follows immediately. To show the lower bound, we observe that
\begin{align*}
    {n \choose k} & = \frac{n^k}{k!}(1-\frac{1}{n})\ldots (1-\frac{k-1}{n})\\
    & \geq \frac{n^k}{k!}(1-\frac{k-1}{n})^{k-1}
\end{align*}
Now consider the function $f_n(x,y) = (1-\frac{x}{n})^y$ for $y \ge 1$ and $x \in [1,n]$. For $x \in [1,n]$ and fixed $y\ge 1$, the function $f_n$ decreases as $x$ increases. Similarly, for $x$ fixed, $f_n$ decreases as $y$ increases. Therefore
\[(1-\frac{k-1}{n})^{k-1} \geq (1-\frac{1}{\sqrt{n}})^{\sqrt{n}}.\]
Furthermore, we know that for $x\geq 2$, $(1-1/x)^x \geq 1/4$. Therefore \begin{align*}
    {n \choose k} & \geq \frac{n^k}{k!}(1-\frac{k-1}{n})^{k-1}\\
    & \geq \frac{n^k}{k!}(1-\frac{1}{\sqrt{n}})^{\sqrt{n}}\\
    & \geq \frac{n^k}{4(k!)}.
\end{align*}
\end{proof}

\begin{lemma}\label{claim:intersection_expectation}
Let $S$ be an arbitrary subset of $P_i \cup P_{i+1} \cup \ldots P_R$. After $i$ rounds, for $l \geq i$, the expected size of the intersection $|S\cap P_l|$ is 
\[ \E[|S\cap P_l|] = s_l = \frac{|S||P_l|}{n_i},\]
where the expectation is over the partitions of $P_i \cup P_{i+1} \cup \ldots P_r$ into $P_i, P_{i+1}, \ldots, P_r$. Moreover, for any polynomial set of queries $\S$ such that $|S| \geq (1-\frac{1}{d_i})n_i$ for $S \in \S$, then with probability $1 - e^{-\Omega(\log^2 n)}$, for all $l \geq i$  and for all $S \in \S$, we have $|S\cap P_l| \geq s_l (1-\frac{1}{d_i})$.
\end{lemma}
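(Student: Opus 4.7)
The plan is to handle the expectation via linearity/symmetry and the concentration via a hypergeometric Chernoff-type bound.

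For the expectation, the partition of $\cup_{j \geq i} P_j$ into $P_i, P_{i+1}, \ldots, P_R$ is uniformly random subject to the fixed sizes $|P_l|$. By symmetry, any fixed vertex $v \in \cup_{j \geq i} P_j$ satisfies $\P(v \in P_l) = |P_l|/n_i$ for each $l \geq i$. Since $S$ is fixed independently of this partition (having been chosen based on the previous $i-1$ rounds), linearity of expectation over $v \in S$ immediately yields $\E[|S \cap P_l|] = |S| \cdot |P_l|/n_i = s_l$.

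For the concentration, the key observation is that the marginal distribution of $P_l$ alone (ignoring the joint structure with the other $P_j$'s) is a uniformly random subset of size $|P_l|$ drawn from the $n_i$ vertices in $\cup_{j \geq i} P_j$. Therefore $|S \cap P_l|$ is hypergeometrically distributed: it counts successes when $|P_l|$ items are sampled without replacement from a population of $n_i$ containing $|S|$ successes. I will then invoke the standard Hoeffding/Chernoff lower-tail bound for the hypergeometric distribution, which gives
\[
\P\!\left(|S \cap P_l| \leq (1-\delta)s_l\right) \leq \exp\!\left(-\delta^2 s_l/2\right)
\]
for any $\delta \in (0,1)$. Setting $\delta = 1/d_i$ yields a tail of $\exp(-s_l/(2 d_i^2))$.

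The remaining step is to verify that $s_l/d_i^2 = \Omega(\log^2 n)$. Since the $d_l$ are increasing in $l$, for every $l \geq i$ the construction gives $|P_l| \geq |P_i| = 3 \log^2 n \cdot d_i^2$. Combined with the assumption $|S| \geq (1 - 1/d_i) n_i$, this implies $s_l \geq (1-1/d_i) \cdot 3 \log^2 n \cdot d_i^2$, so the tail probability is at most $\exp(-\tfrac{3}{2}(1-1/d_i)\log^2 n) = e^{-\Omega(\log^2 n)}$. A union bound over the $O(\log \log n)$ indices $l \geq i$ (Claim \ref{claim:size_of_partition}) and the polynomially many queries $S \in \S$ preserves this $e^{-\Omega(\log^2 n)}$ failure probability.

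The main conceptual point is justifying the hypergeometric characterization of $|S \cap P_l|$ in the presence of a joint random partition; once that is in place, the rest is a straightforward application of a standard concentration inequality together with the specific size relations built into the hard-instance construction. I do not expect any serious obstacle beyond this observation.
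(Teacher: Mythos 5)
Your proof is correct and follows essentially the same route as the paper: linearity of expectation via the $|P_l|/n_i$ marginal for each vertex, a Chernoff-type lower tail with $\delta = 1/d_i$ giving $\exp(-s_l/(2d_i^2))$, the observation that $|P_l| \geq |P_i| = 3\log^2 n\cdot d_i^2$ and $|S| \geq (1-1/d_i)n_i$ force $s_l/(2d_i^2) \geq \log^2 n$, and union bounds over $l$ and $S$. The one place where you go beyond the paper is in explicitly identifying $|S\cap P_l|$ as hypergeometric and invoking the Hoeffding bound for sampling without replacement; the paper simply writes ``By Chernoff bound'' even though the indicators $\mathds{1}[v \in P_l]$ for $v \in S$ are negatively dependent rather than independent, so your version is the more careful one on that point.
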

\begin{proof}[Proof of Lemma \ref{claim:intersection_expectation}]
Every node in $S$ will be in $P_l$ with probability $|P_l|/n_i$, therefore  by linearity of expectation
\[ \E[|S\cap P_l|] = s_l = \frac{|S||P_l|}{n_i}\]
Suppose now that $|S| \geq (1-\frac{1}{d_i})n_i$. By Chernoff bound, we have
\begin{align*}
 \P\big(|S\cap P_l|  \leq s_l (1-\frac{1}{d_i})\big)
 & \leq e^{-\frac{s_l}{2d_i^2}}\\
 & = e^{-\frac{|S|}{n_i}\frac{|P_l|}{2d_i^2}}\\
 & \leq e^{-(1-1/d_i) \frac{|P_l|}{2d_i^2}}\\
 & \leq e^{- \frac{|P_l|}{3d_i^2}}\\
 & \leq e^{-\log^2 n}\\
 & \leq \frac{1}{n^{\log n}}
 \end{align*}
 
In the fourth inequality, we used $1- 1/d_i \leq 2/3$. By a union bound the probability that there exists a partition $P_l$ with $l \geq i$ such that $|S\cap P_l|  \leq s_l (1-\frac{1}{d_i})$ is $O(\log \log n / n^{\log n}) = e^{-\Omega(\log^2 n)}$. Another application of the union bound on the polynomially many queries shows that with probability at least $1 - e^{-\Omega(\log^2 n)}$, for all queries of size greater than $(1-1/d_i)n$ and for all partitions greater than $i$, we have $|S\cap P_l|  \geq s_l (1-\frac{1}{d_i})$.
\end{proof}

\begin{lemma}\label{claim:uniform_edges}
Fix an iteration $i$, and a query $S$. Let $e'_1, \ldots, e'_{k_i}$ be edges of size $d_i$, independently and uniformly  sampled from $P_i$. Let $e^i_1, \ldots, e^i_{k_i}$ be a random matching on $P_i$. We have 
\[ \P( \forall j \in \{1,\ldots, k_i\}, \ e_j^i \not\subseteq S) \leq \P( \forall j \in \{1,\ldots, k_i\}, \ e'_j \not\subseteq S) = (1 - \frac{\binom{|S\cap P_i|}{d_i}}{\binom{|P_i|}{d_i}})^{k_i} \]
\end{lemma}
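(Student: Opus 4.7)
The plan is to establish the two assertions of the lemma separately: (i) the equality $\P(\forall j,\, e'_j \not\subseteq S) = (1-q)^{k_i}$, where $q := \binom{|S\cap P_i|}{d_i}/\binom{|P_i|}{d_i}$, and (ii) the inequality $\P(\forall j,\, e^i_j \not\subseteq S) \le \P(\forall j,\, e'_j \not\subseteq S)$.

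Part (i) is immediate: each $e'_j$ is uniformly distributed over the $\binom{|P_i|}{d_i}$ subsets of $P_i$ of size $d_i$, so $\P(e'_j \subseteq S) = \P(e'_j \subseteq S\cap P_i) = q$, and by independence the joint probability factors as $(1-q)^{k_i}$.

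For part (ii) I would use the framework of negative association. First, realize the uniform random matching via a uniform random ordering: draw a uniformly random permutation $\pi$ of $P_i$ and define $e^i_j := \{\pi((j-1)d_i + 1), \dots, \pi(j d_i)\}$; a direct counting check confirms this yields a uniformly random matching on $P_i$. Next, set $B_m := \mathbb{1}[\pi(m) \in S]$ for $m = 1,\dots,|P_i|$, so $(B_m)_m$ is a uniformly random permutation of a binary vector containing exactly $|S\cap P_i|$ ones. By the classical theorem of Joag-Dev and Proschan (1983), the coordinates of such a random permutation are negatively associated (NA). Then $X_j := \mathbb{1}[e^i_j \subseteq S] = \prod_{m=(j-1)d_i+1}^{jd_i} B_m$ is a non-decreasing function of the $B_m$'s inside the $j$-th block, and since the $k_i$ blocks are pairwise disjoint, the closure properties of NA imply that $(X_1,\dots,X_{k_i})$ are NA as well. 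Applying NA to the non-increasing functions $1-X_j$ on disjoint singleton index sets (iterating the two-set NA inequality) yields
\[
\E\Bigl[\prod_{j=1}^{k_i} (1-X_j)\Bigr] \;\le\; \prod_{j=1}^{k_i} \E[1-X_j] \;=\; (1-q)^{k_i},
\]
and the left-hand side is exactly $\P(\forall j,\, e^i_j \not\subseteq S)$, completing the proof.

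The main obstacle I anticipate is packaging the negative association argument in a form that is self-contained for the paper, since NA is not otherwise developed in the text. If a fully self-contained proof is desired, one can instead attempt induction on $k_i$ by exposing the first matching edge $e^i_1$: letting $T := |e^i_1 \cap S|$, the remaining edges form a uniform random matching on $P_i \setminus e^i_1$, and by the inductive hypothesis their probability of missing $S$ is at most $(1-q_1(T))^{k_i - 1}$ with $q_1(T) := \binom{|S\cap P_i|-T}{d_i}/\binom{|P_i|-d_i}{d_i}$. The inductive step then reduces to establishing $\E[(1-q_1(T))^{k_i - 1}\,\mathbb{1}[T < d_i]] \le (1-q)^{k_i}$, which is delicate because the convexity of $(1-x)^{k_i-1}$ makes Jensen's inequality go the wrong way; a direct combinatorial comparison using $\E[q_1(T)] = q$ together with the monotonicity $q_1(d_i) \le q$ proved via the identity $(s-d_i-j)(p-j)-(s-j)(p-d_i-j)=d_i(s-p)\le 0$ would then be needed. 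For clarity, I would prefer the NA-based argument, citing the negative association literature for the two standard ingredients (NA of uniform permutations of binary vectors, and closure under monotone functions of disjoint blocks).
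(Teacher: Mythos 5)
Your proposal is correct, and part~(i) matches the paper's (trivial) observation. For part~(ii), however, you take a genuinely different route. The paper proves the inequality by induction on the number of matching edges, establishing directly via Bayes' rule and a convex-combination decomposition that
\[
\P\bigl(e_{j+1} \not\subseteq S \mid e_1 \not\subseteq S, \ldots, e_j \not\subseteq S\bigr) \le \P\bigl(e_{j+1} \not\subseteq S\bigr),
\]
i.e., conditioning on the earlier edges missing $S$ can only make the next edge more likely to land in $S$. This is a by-hand reconstruction of the negative-dependence phenomenon, reduced to the single comparison $\P(e_{j+1} \subseteq S) \ge \P(e_{j+1} \subseteq S \mid e_1 \subseteq S)$, which is a ratio of binomial coefficients. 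Your primary argument instead realizes the random matching as consecutive blocks of a uniform permutation and invokes the negative-association (NA) machinery of Joag-Dev and Proschan: the indicator vector $(B_m)$ is a permutation distribution, hence NA; the block products $X_j$ are non-decreasing functions on disjoint index sets, hence NA; and NA yields $\P(\forall j,\ X_j = 0) \le \prod_j \P(X_j = 0)$. This is cleaner and, if anything, more robust than the paper's elementary induction (which, as written, leans on an only lightly-justified exchangeability step equating $\P(e_{j+1} \subseteq S \mid \exists l,\ e_l \subseteq S)$ with $\P(e_{j+1} \subseteq S \mid e_1 \subseteq S)$), at the cost of importing an external theorem that the paper deliberately avoids citing. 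Your backup induction, which exposes $e^i_1$, conditions on $T = |e^i_1 \cap S|$, and recurses on the matching over $P_i \setminus e^i_1$, is structurally different from both: the paper never strips off a used edge, precisely because, as you correctly diagnose, Jensen runs the wrong way under that decomposition. In short, both routes are sound; yours trades self-containedness for brevity by outsourcing the negative-dependence step to standard NA theory, whereas the paper rebuilds the needed correlation inequality from scratch.
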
  

\begin{proof}
Since the partition $i$ is fixed, we drop indexing by $i$ and use $e_j$ to denote $e_j^i$, $k$ to denote $k_i$, and $d$ to denote $d_i$. We also let $n = |P_i|$ and assume without loss of generality that $S \subseteq |P_i|$.\\

The intuition is that, since the edges $e_j$ must be disjoint but the edges $e'_j$ do not necessarily have to, then the edges $e_j$ will cover a ``bigger fraction'' of $S$ than $e'_j$.\\

We prove this by induction on the number of edges.  When considering only one edge $e_1$, we know that  \[
\P( e_1 \not\subseteq S) = \P( e_j \not\subseteq S) = \P(e'_1 \not\subseteq S).\]

Now assume that the result is true for $j$ edges, i.e. $\P(e_1 \not\subseteq S, \ldots, e_j \not\subseteq S) \leq \P(e'_1 \not\subseteq S)^j$. We want to show that it holds for $j+1$ edges, i.e.,
\[\P(e_1 \not\subseteq S, \ldots, e_j \not\subseteq S, e_{j+1} \not\subseteq S) \leq \P(e'_1 \not\subseteq S)^{j+1} \]
By the inductive hypothesis, it is sufficient to show that 
\[ P(e_{j+1} \not\subseteq S | \ \forall \ l \leq j, \  e_l \not\subseteq S) \leq \P(e_{j+1} \not\subseteq S)\]
This is equivalent to showing that $P(e_{j+1} \subseteq S | \ e_1 \not\subseteq S, \ldots, e_j \not\subseteq S) \geq \P(e_{j+1} \subseteq S)$. Furthermore, we have 
\begin{align*}
     \P(e_{j+1} \subseteq S) & = \P(e_{j+1} \subseteq S | \ \forall \ l \leq j, \  e_l \not\subseteq S)\P( \forall \ l \leq j, \  e_l \not\subseteq S) \\
     & \ \ \ + \P(e_{j+1} \subseteq S | \ \exists \  l \leq j, \  e_l \subseteq S)\P( \exists \ l \leq j, \  e_l \subseteq S)
     \end{align*}
Therefore, $P(e_{j+1} \not\subseteq S | \ e_1 \not\subseteq S, \ldots, e_j \not\subseteq S) \leq \P(e_{j+1} \not\subseteq S)$ becomes equivalent to 
\begin{equation}\label{eq:proba_inequality}
 P(e_{j+1} \subseteq S | \ \forall \ l \leq j, \  e_l \not\subseteq S) \P( \exists \ l \leq j, \  e_l \subseteq S) \geq \P(e_{j+1} \subseteq S | \ \exists \  l \leq j, \  e_l \subseteq S)\P( \exists \ l \leq j, \  e_l \subseteq S)
 \end{equation}

Since, $\P( \exists \ l \leq j, \  e_l \subseteq S) > 0$, the inequality \eqref{eq:proba_inequality} is equivalent to 
\begin{equation}\label{eq:proba_inequality2}
 P(e_{j+1} \subseteq S | \ \forall \ l \leq j, \  e_l \not\subseteq S) \geq \P(e_{j+1} \subseteq S | \ \exists \  l \leq j, \  e_l \subseteq S)
 \end{equation}
 
To prove equation \eqref{eq:proba_inequality2}, we observe that 
\[ \P(e_{j+1} \subseteq S) = p P(e_{j+1} \subseteq S | \ \forall \ l \leq j, \  e_l \not\subseteq S) + (1-p) \P(e_{j+1} \subseteq S | \ \exists \  l \leq j, \  e_l \subseteq S),\]
where $p = \P(\forall \ l \leq j, \  e_l \not\subseteq S).$ Therefore, if we show that $\P(e_{j+1} \subseteq S) \geq \P(e_{j+1} \subseteq S | \ \exists \  l \leq j, \  e_l \subseteq S)$, then we must have $P(e_{j+1} \subseteq S | \ \forall \ l \leq j, \  e_l \not\subseteq S) \geq \P(e_{j+1} \subseteq S) \geq  \P(e_{j+1} \subseteq S | \ \exists \  l \leq j, \  e_l \subseteq S)$. To see that $\P(e_{j+1} \subseteq S) \geq \P(e_{j+1} \subseteq S | \ \exists \  l \leq j, \  e_l \subseteq S)$, we first observe that 
\begin{align*}
\P(e_{j+1} \subseteq S | \ \exists \  l \leq j, \  e_l \subseteq S) & = \P(e_{j+1} \subseteq S | \  e_1 \subseteq S)\\
& = \frac{\binom{|S| - d}{d}}{\binom{n -d}{d}},
\end{align*} 
while 
\[ \P(e_{j+1} \subseteq S) = \frac{\binom{|S|}{d}}{\binom{n}{d}}.\]
Therefore,
\begin{align*}
 \frac{\P(e_{j+1} \subseteq S)}{\P(e_{j+1} \subseteq S | \  e_1 \subseteq S)}  & =  \frac{\binom{|S|}{d}}{\binom{n}{d}} \frac{\binom{n -d}{d}}{\binom{|S| - d}{d}}\\
 & = \frac{(|S|)!}{(|S|-d)!} \frac{(|S|-2d)!}{(|S|-d)!} \frac{(n-d)!}{n!} \frac{(n-d)!}{(n-2d)!}\\
 & = \frac{\frac{|S| \ \ \ \ldots (|S|-d+1)}{(|S|-d) \ldots |S|-2d+1)}}{\frac{n \ \ \ \ldots (n-d+1)}{(n-d) \ldots n-2d+1)}}
\end{align*}
It is easy to verify that \[\frac{|S|-d + i}{|S| -2d +i} \geq \frac{n -d +i}{n -2d+ i}\]
for every $i \in \{1, \ldots, d\}$, because $|S| \leq n$. Therefore we get that $\P(e_{j+1} \subseteq S) \geq \P(e_{j+1} \subseteq S | \  e_1 \subseteq S)$, and hence 
$\P(e_{j+1} \subseteq S) \geq \P(e_{j+1} \subseteq S | \ \exists \  l \leq j, \  e_l \subseteq S)$. This proves \eqref{eq:proba_inequality2} and concludes the induction proof.
% Let $n_i = |P_i|$. Finally, by a counting argument, we have 
% \[ P(e_{j+1} \subseteq S | \ \forall \ l \leq j, \  e_l \not\subseteq S) \geq \frac{\binom{|S| - j}{d}}{\binom{n_i - jd}{d}}\]
% and 
% \[ P(e_{j+1} \subseteq S | \ \exists \ l \leq j, \  e_l \subseteq S) \geq \frac{\binom{|S| - j}{d}}{\binom{n_i - jd}{d}}\]
\end{proof}

\begin{lemma}\label{lemma:large_query}
Consider a query $S$ by an algorithm at round $i$ such that all the queries from previous rounds are independent of $P_{i},\ldots, P_R$. Let $S$ be such that $|S| \geq (1-\frac{1}{d_i})n_i$, then with probability $1 - e^{-\Omega(\log^2 n)}$, $S$ contains at least one edge from the matching on $P_i$
\end{lemma}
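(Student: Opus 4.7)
The plan is to combine the intersection concentration of Lemma~\ref{claim:intersection_expectation} with the independence reduction of Lemma~\ref{claim:uniform_edges} to reduce the question to a straightforward coupon-collector style estimate, and then to use Claim~\ref{claim:binomial_bounds} to show that the relevant binomial ratio is bounded below by a positive constant independent of $d_i$.

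First, I would apply Lemma~\ref{claim:intersection_expectation} to the query $S$, viewing the randomness of $P_i,\ldots,P_R$ as a uniformly random partition of the vertex pool $\cup_{j\ge i}P_j$ (this is legitimate because by hypothesis all previously asked queries are independent of this partition, so conditioning on their answers leaves the partition conditionally uniform). Since $|S|\geq(1-1/d_i)n_i$, the expected intersection is $s_i = |S||P_i|/n_i \geq (1-1/d_i)|P_i|$, and with probability $1-e^{-\Omega(\log^2 n)}$ the realized intersection satisfies
\[
|S\cap P_i| \ \ge\ (1-1/d_i)^2|P_i|.
\]
Condition on this event.

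Next I would condition on $P_i$ (hence on the set $S\cap P_i$) and use Lemma~\ref{claim:uniform_edges} to upper bound the probability that the random matching on $P_i$ puts no edge inside $S$ by
\[
\Bigl(1-\tfrac{\binom{|S\cap P_i|}{d_i}}{\binom{|P_i|}{d_i}}\Bigr)^{k_i},
\]
where $k_i=|P_i|/d_i = 3d_i\log^2 n$. Applying Claim~\ref{claim:binomial_bounds} (valid since $d_i < \sqrt{|P_i|}$ by construction) gives
\[
\frac{\binom{|S\cap P_i|}{d_i}}{\binom{|P_i|}{d_i}}
\ \geq\ \frac{1}{4}\Bigl(\frac{|S\cap P_i|}{|P_i|}\Bigr)^{d_i}
\ \geq\ \frac{1}{4}(1-1/d_i)^{2d_i}
\ \geq\ \frac{1}{4e^{2}} - o(1),
\]
a positive constant $c>0$ independent of $i$ and $n$. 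Therefore the probability that no edge of the matching on $P_i$ lies in $S$ is at most $(1-c)^{k_i}\le e^{-c k_i} = e^{-\Omega(d_i\log^2 n)} = e^{-\Omega(\log^2 n)}$.

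Finally, taking a union bound over the $e^{-\Omega(\log^2 n)}$ failure probability from the concentration step and the $e^{-\Omega(\log^2 n)}$ failure of the coupon-collector bound, and then, for the subsequent proof of Lemma~\ref{lemma:induction_loglogn_hardness}, over the polynomially many queries asked in round $i$, yields the claim. The main delicate point is the third step: one must verify that the ratio $\binom{|S\cap P_i|}{d_i}/\binom{|P_i|}{d_i}$ is bounded below by an \emph{absolute} constant, which requires the dependence $(1-1/d_i)^{2d_i}\geq \Omega(1)$ uniformly in $d_i\ge 3$; this is precisely why Claim~\ref{claim:binomial_bounds} is stated in the form $n^k/(4\,k!)\leq \binom{n}{k}$, and why the construction fixes $|P_i|=3d_i^2\log^2 n$ so that $k_i=\Theta(d_i\log^2 n)$ is large enough to absorb this constant into an $e^{-\Omega(\log^2 n)}$ tail.
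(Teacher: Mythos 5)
Your proposal is correct and follows essentially the same route as the paper's proof: concentration of $|S\cap P_i|$ via Lemma~\ref{claim:intersection_expectation}, the reduction to independently and uniformly sampled edges via Lemma~\ref{claim:uniform_edges}, and the lower bound $\tfrac{1}{4}(1-1/d_i)^{2d_i}$ on the binomial ratio via Claim~\ref{claim:binomial_bounds}. Your final step $(1-c)^{k_i}\le e^{-ck_i}=e^{-\Omega(d_i\log^2 n)}$ is in fact slightly cleaner than the paper's, which detours through the intermediate claim $\tfrac{1}{4}e^{-4}\ge \tfrac{1}{3d_i}$ (which fails for $d_i=3$) before arriving at the same $1-e^{-\Omega(\log^2 n)}$ conclusion.
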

\begin{proof}
\begin{align}
    \P( \exists j \in \{1,\ldots, k_i\}, \ e_j^i \subseteq S) & = 1 -  \P( \forall j \in \{1,\ldots, k_i\}, \ e_j^i \not\subseteq S)\\
    & \geq 1 - (1 - \frac{\binom{|S\cap P_i|}{d_i}}{\binom{|P_i|}{d_i}})^{k_i}\label{eq:proba_existence}
\end{align}
where the inequality comes from Lemma \ref{claim:uniform_edges}. Since $|S| \geq (1-\frac{1}{d_i})n_i$, we have with high probability that $|S\cap P_i| \geq (1-\frac{1}{d_i})^2 |P_i| \geq d_i^2$. Conditioning on this event, and since we already have $d_i \leq \sqrt{|P_i|}$, we get by Claim \ref{claim:binomial_bounds}
\begin{align*}
 \frac{\binom{|S\cap P_i|}{d_i}}{\binom{|P_i|}{d_i}} &\geq \frac{|S\cap P_i|^{d_i}}{4(d_i)!} \frac{(d_i)!}{|P_i|^{d_i}}\\
 & \geq \frac{1}{4} (1-\frac{1}{d_i})^{2d_i}\\
 & \geq \frac{1}{4} e^{-\frac{1/d_i}{1-1/d_i} 2d_i}\\
 & \geq \frac{1}{4} e^{-\frac{2}{1-1/d_i} }\\
 & \geq \frac{1}{4} e^{-4}
 \end{align*}
 
 Since $k_i = \frac{|P_i|}{d_i} = 3\log^{2} n \cdot d_i$, we get that 
 \[\frac{\binom{|S\cap P_i|}{d_i}}{\binom{|P_i|}{d_i}} \geq \frac{1}{4} e^{-4} \geq \frac{1}{3 d_i} = \frac{\log^{2} n}{k_i}\]
 
 The probability \eqref{eq:proba_existence} becomes
 \begin{align}
    \P( \exists j \in \{1,\ldots, k_i\}, \ e_j^i \subseteq S) & \geq 1 - (1 - \frac{\binom{|S\cap P_i|}{d_i}}{\binom{|P_i|}{d_i}})^{k_i} \\
    & \geq 1 - (1 - \frac{\log^{2} n}{k_i})^{k_i}\\
    & \geq 1 - \frac{1}{n^{\log n}}\\
    & = 1 - e^{-\Omega(\log^2 n)}
\end{align}
\end{proof}

\begin{lemma}\label{lemma:small_query}
Consider a query $S$ by an algorithm at round $i$ such that all the queries from previous rounds are independent of $P_{i},\ldots, P_R$. Let $S$ be  such that $|S| \leq (1-\frac{1}{d_i})n_i$, then with probability $1 - e^{-\Omega(\log^2 n)}$, for every $l \geq i+1$, $S$ does not contain any edge from the matching on $P_l$.
\end{lemma}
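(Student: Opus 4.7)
The plan is to bound, for each $l \geq i+1$, the probability that some edge $e_j^l$ of the matching on $P_l$ is contained in $S$, and then union bound over $l$ and over the $k_l$ edges. Since elements of $S$ lying outside $P_i \cup \ldots \cup P_R$ cannot contribute to whether any $e_j^l \subseteq S$, I may assume without loss of generality that $S \subseteq P_i \cup \ldots \cup P_R$, so $|S| \leq (1 - 1/d_i)\, n_i$. By hypothesis all previous queries are independent of the partition of $P_i \cup \ldots \cup P_R$ into $(P_i, \ldots, P_R)$, so $S$ itself is independent of this random partition.

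Given this independence, for any fixed edge slot $e_j^l$ its marginal distribution over the joint randomness of the random partition and the random matching on $P_l$ is uniform over $d_l$-subsets of $P_i \cup \ldots \cup P_R$: $P_l$ is a uniformly random $|P_l|$-subset, and conditionally on $P_l$ each edge slot is uniform within $P_l$, and the two compose by symmetry to a uniform marginal on $d_l$-subsets of the whole ground set. Hence
\[
\P(e_j^l \subseteq S) \;=\; \binom{|S|}{d_l}\Big/\binom{n_i}{d_l} \;\leq\; (|S|/n_i)^{d_l} \;\leq\; (1-1/d_i)^{d_l} \;\leq\; e^{-d_l/d_i}.
\]
Plugging in $d_l \geq d_{i+1} = 3\, d_i^2 \log^2 n$ yields $d_l/d_i \geq 3 d_i \log^2 n \geq 3\log^2 n$, so $\P(e_j^l \subseteq S) \leq e^{-3\log^2 n}$. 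A union bound over the $k_l \leq n$ edges of the matching on $P_l$ and then over the $O(\log\log n)$ values of $l \in \{i+1, \ldots, R\}$ preserves the $e^{-\Omega(\log^2 n)}$ bound, giving the conclusion.

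The main delicate step is justifying the marginal uniformity claim for $e_j^l$; everything else is a short computation exploiting the rapid growth of the $d_l$'s relative to $d_i$. If one prefers to avoid this symmetry argument, an equivalent route is to condition on the partition, bound $\P(e_j^l \subseteq S \mid \text{partition}) \leq (|S \cap P_l|/|P_l|)^{d_l}$, and then control $|S \cap P_l|/|P_l|$ by an upper-tail Chernoff bound analogous to Lemma \ref{claim:intersection_expectation}, establishing $|S \cap P_l|/|P_l| \leq 1 - 1/(2d_i)$ with probability $1 - e^{-\Omega(\log^2 n)}$; this yields the same conclusion up to constants in the exponent.
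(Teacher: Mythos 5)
Your proof is correct, and it shares the overall skeleton of the paper's argument (reduce to a single edge slot $e_j^l$, bound $\P(e_j^l \subseteq S)$, then union bound over the $k_l$ edges and the $O(\log\log n)$ parts $l \geq i+1$), but the key per-edge estimate is obtained by a genuinely different and more elementary route. The paper writes $\P(e_j^l \subseteq S) = \P(|e_j^l \cap S| \geq d_l)$, computes $\E[|e_j^l \cap S|] = |S| d_l / n_i \leq d_l - d_l/d_i$, and applies an upper-tail Chernoff bound with $\delta = 1/(d_i-1)$ to get $\P(e_j^l \subseteq S) \leq 2e^{-\frac{1}{3}\frac{d_l}{d_i(d_i-1)}} \leq 2e^{-\log^2 n}$; this requires (implicitly) a concentration inequality for sampling without replacement. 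You instead compute the containment probability exactly as the hypergeometric ratio $\binom{|S|}{d_l}/\binom{n_i}{d_l} \leq (|S|/n_i)^{d_l} \leq e^{-d_l/d_i} \leq e^{-3\log^2 n}$, which avoids any concentration argument and gives a slightly stronger exponent. Both routes rest on the same marginal-uniformity fact — that $e_j^l$ is, marginally over the random partition and the random matching on $P_l$, a uniform $d_l$-subset of $\cup_{j' \geq i} P_{j'}$ — which the paper also uses (without comment) when it computes $\E[|e_j^l \cap S|] = |S||e_j^l|/n_i$, so you are not assuming anything the paper does not; your explicit justification of it by symmetry, and your reduction to $S \subseteq \cup_{j' \geq i} P_{j'}$, are both sound. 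Your fallback route (condition on the partition, bound $\P(e_j^l \subseteq S \mid P_l) \leq (|S \cap P_l|/|P_l|)^{d_l}$, and control $|S \cap P_l|$ via Lemma~\ref{claim:intersection_expectation}-style concentration) is essentially a hybrid of the two and also works.
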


\begin{proof}
Let's fix a partition $P_l$ with $l \geq i+1$. Let $e_1, \ldots, e_j, \ldots e_{k_l}$ be the random matching on $P_l$. 
We have for $j = 1, \ldots, k_l$,
\[ \E\big[ |e_j \cap S|\big] = \frac{|S||e_j|}{n_i} \leq (1-\frac{1}{d_i}) d_{l} = d_l - \frac{d_l}{d_i}.\]

Going forward, we present an upper bound on $\P(e_j \subset S)$. We can assume without loss of generality that $|S| = (1-\frac{1}{d_i})n_i$. In fact, for any $j = 1, \ldots, k_l$, if $S \subseteq S'$ then we have $\P(e_j \subseteq S) \leq \P(e_j \subseteq S')$. \\

The probability that $e_j \subset S$ can be expressed as 
\begin{align*}
    \P(e_j \subset S) & = \P(|e_j \cap S| \geq d_l)\\
    & \leq \P\Big(|e_j \cap S| \geq (1+\delta)E\big[ |e_j \cap S|\big] \Big),
\end{align*}
where $\delta = \frac{1}{d_i - 1}$.
By Chernoff bound we get 
\begin{align*}
    \P(e_j \subset S) &  \leq 2 e^{-\frac{1}{3} \delta^2 E\big[ |e_j \cap S|\big]}\\
    &  = 2 e^{-\frac{1}{3} \frac{1}{(d_i -1)^2} (1-\frac{1}{d_i}) d_{l}}\\
     &  = 2 e^{-\frac{1}{3} \frac{1}{(d_i -1)d_i}  d_{l}}\\
     &  \leq 2 e^{- \log^{2}n}\\
     &  \leq \frac{2}{n^{\log n}}\\
\end{align*}

where the second to last inequality is due to $d_l \geq d_{i+1} = 3 \log^{2} n \cdot d_i^2$. Therefore by a union bound on the edges of the matching in $P_l$ we get that 
\begin{align*}
    \P(\exists \ j = 1, \ldots k_l, \ \ e_j \subset S) &  \leq \frac{2k_l}{n^{\log n}} \\
    & \leq  \frac{1}{n^{\log n - 1}}\\
\end{align*}

Finally, another union bound on all the partition $l \geq i+1$ yields that the probability that there exists a partition $P_l$ such that an edge of $P_l$ is included in $S$ is less than $O(\log \log n / n^{\log n - 1}) = 1 - e^{-\Omega(\log^2 n)}$.
\end{proof}

% \textbf{If for the $i$-th round, elements in $P_i \cup \ldots \cup P_{R}$ are indistinguishable, then the matching returned is not is not equal to $M$}. The last part of the analysis argues that if all queries $Q(S)$ of an algorithm are independent of the partition of $P_i \cup \ldots \cup P_{R}$ at every round $i$, then, with high probability, the solution returned by this algorithm is
% not exact.

\begin{lemma}\label{lemma:induction_loglogn_hardness}
If vertices in $P_{i}, P_{i+1},  \ldots, P_R$ are indistinguishable to $\A$ at the beginning of round $i$ of queries, then vertices in $P_{i+1},  \ldots, P_R$ are indistinguishable at the end of round $i$ with probability $1 - e^{-\Omega(\log^2 n)}$.
\end{lemma}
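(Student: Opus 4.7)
The hypothesis that $P_i, \ldots, P_R$ are indistinguishable at the beginning of round $i$ means that every query answered in the previous $i$ rounds satisfies $Q_{M_P}(S) = Q_{M_{P''}}(S)$ with high probability over partitions $P''$ that agree with $P$ on $P_0, \ldots, P_{i-1}$. Consequently, the transcript of query--answer pairs used by $\A$ up to round $i$ is, up to the above failure probability, a function only of $(P_0, \ldots, P_{i-1})$ and the algorithm's internal randomness. In particular, we may couple any two partitions $P$ and $P'$ that agree on $P_0, \ldots, P_i$ (with matchings coupled to agree on $P_0,\ldots,P_i$) so that $\A$ issues the identical collection of round-$i$ queries under both; hence each such query $S$ may be treated as fixed with respect to the random refinement of $P_{\geq i+1} := \cup_{j\geq i+1} P_j$ into $P_{i+1},\ldots,P_R$, allowing Lemmas~\ref{lemma:large_query} and \ref{lemma:small_query} to be invoked.

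For any such round-$i$ query $S$, we split on its size. If $|S| \geq (1-1/d_i)\,n_i$, Lemma~\ref{lemma:large_query} yields that, with probability $1-e^{-\Omega(\log^2 n)}$, $S$ contains an edge of the matching on $P_i$, so $Q_{M_P}(S) = 1$; the same argument applied to $M_{P'}$ (which by the coupling shares $P_i$ and its matching with $M_P$) gives $Q_{M_{P'}}(S)=1$. If instead $|S| < (1-1/d_i)\,n_i$, Lemma~\ref{lemma:small_query} yields that, with probability $1-e^{-\Omega(\log^2 n)}$, for every $l \geq i+1$ no edge of the matching on $P_l$ lies in $S$, and the same holds under $M_{P'}$. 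In this case $Q_{M_P}(S)$ is determined solely by whether some edge of $M_P$ restricted to $P_0 \cup \cdots \cup P_i$ lies in $S$; since this restriction coincides with that of $M_{P'}$ by our coupling, we again obtain $Q_{M_P}(S) = Q_{M_{P'}}(S)$.

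It remains to combine these per-query bounds with a union bound. Since $\A$ makes only $\poly(n)$ queries in round $i$, the probability that any of them violates the per-query conclusion above is at most $\poly(n)\cdot e^{-\Omega(\log^2 n)} = e^{-\Omega(\log^2 n)}$; combined with the inherited failure probability from the hypothesis, we obtain that with probability $1-e^{-\Omega(\log^2 n)}$ every round-$i$ query answer is independent of the refinement of $P_{\geq i+1}$ into $P_{i+1},\ldots,P_R$, which is exactly the assertion that $P_{i+1},\ldots,P_R$ are indistinguishable at the end of round $i$. The main technical effort is already absorbed into Lemmas~\ref{lemma:small_query} and \ref{lemma:large_query}; the principal subtlety at this level is the coupling argument that justifies treating the adaptive queries as fixed with respect to the fine partition, so that those two lemmas are applicable.
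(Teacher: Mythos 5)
Your proof is correct and follows essentially the same route as the paper's: split each round-$i$ query by size relative to $(1-1/d_i)n_i$, invoke Lemma~\ref{lemma:large_query} in the large case and Lemma~\ref{lemma:small_query} in the small case to conclude the answer is independent of the refinement of $\cup_{j\geq i+1}P_j$, and finish with a union bound over the $\poly(n)$ queries. Your explicit coupling argument justifying that adaptive queries may be treated as fixed is a welcome elaboration of a step the paper leaves implicit, but it is not a different method.
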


\begin{proof}
 Consider a query $S$ by an algorithm at round $i$ such that all the queries from previous rounds are independent of the partition $P_{i},\ldots, P_R$.
 Lemma \ref{lemma:small_query} shows that if $S$ is small ($|S| \leq (1-1/d_i)\sum_{j \geq i} |P_j|)$ and is independent of partition of $\cup_{j = i}^R P_i$ into $P_i, \ldots, P_R$, then with probability $1 - e^{-\Omega(\log^2 n)}$, for every $j \geq i+1$, $S$ does not contain any edge contained in $P_j$. On the other hand, Lemma \ref{lemma:large_query} shows that if a query $S$ has large size ($|S| \geq (1-1/d_i)\sum_{j \geq i} |P_j|)$ and is independent of partition of $\cup_{j = i}^R P_i$ into $P_i, \ldots, P_R$, then with probability $1 - e^{-\Omega(\log^2 n)}$, $S$ contains at least one edge from the matching on $P_i$ which implies $Q_{M_P}(S) = 1$. In both cases, $Q_{M_P}(S)$ is independent of the partition of $\cup_{j = i+1}^R P_i$ into $P_{i+1}, \ldots, P_R$ with probability $1 - e^{-\Omega(\log^2 n)}$. By a union bound, this holds for $poly(n)$ queries at round $i$.
\end{proof}

\thmhardnesslog*
\begin{proof}
Consider a uniformly random partition $P = (P_0, \ldots, P_i, \ldots P_{R})$, a matching $M_P$ and an algorithm $\mathcal{A}$ which queries $M_P$ in $\log \log n -3$ rounds. By Lemma \ref{lemma:induction_loglogn_hardness}, after $i$ rounds of queries, with probability $1 - e^{-\Omega(\log^2 n)}$ over both the randomization of $P$ and of the algorithm, all the queries $Q_{M_P}(S)$ made by $\mathcal{A}$ are independent of the partition of $P_i \cup \ldots \cup P_{R}$. Therefore, and since $R \geq \log \log n - 1$ by Claim \ref{claim:size_of_partition}, we get that after $\log \log n -3$ round of queries, with probability $1 - e^{-\Omega(\log^2 n)}$ over both the randomization of $P$ and the algorithm $\A$, all the queries $Q_{M_P}(S)$ made by $\A$ are independent
of the partition of $P_{R-1} \cup P_R$. We now distinguish two cases:
\begin{itemize}
    \item $\A$ does not a return any edge that is included in $P_{R-1} \cup P_R$.
    \item $\A$ returns a set of edges that is included in $P_{R-1} \cup P_R$. In this case, we know that with probability $1 - e^{-\Omega(\log^2 n)}$ all the queries $Q_{M_P}(S)$ made by $\A$ are independent of the partition of $P_{R-1} \cup P_R$ into $P_{R-1}$ and $P_R$.
    The edges that are returned by $\A$ and included in $P_{R-1} \cup P_R$ are therefore also independent from the partition of $P_{R-1} \cup P_R$ into $P_{R-1}$ and $P_R$ with probability $1 - e^{-\Omega(\log^2 n)}$. To fully learn $M_P$, $\A$ needs to make the distinction between points in $P_{R-1}$ and points in $P_{R}$, but there are $\binom{|P_R| + |P_{R-1}|}{|P_{R-1}|}$ ways of partitioning $P_{R-1} \cup P_R$ into $P_{R-1}$ and $P_R$. Therefore the probability that $\A$ correctly learns $M_P$ is less than
    \[ (1 - e^{-\Omega(\log^2 n)})\frac{1}{\binom{|P_R| + |P_{R-1}|}{|P_{R-1}|}}\]
    In the rest of the proof, we show that $1/\binom{|P_R| + |P_{R-1}|}{|P_{R-1}|} = e^{-\Omega(\log^2 n)}$. This implies that the probability that $\A$ learns $M_P$ correctly is less than $(1 - e^{-\Omega(\log^2 n)})e^{-\Omega(\log^2 n)}= e^{-\Omega(\log^2 n)}$. 
    
    We know that $n \leq \sum_{i = 0}^R |P_i| \leq (R+1)|P_R|$. Therefore $|P_R| \geq n/(R+1) = \Omega( n/\log\log n)$. By Claim \ref{claim:binomial_bounds}, we get that 
    \begin{align*}
        \frac{1}{\binom{|P_R| + |P_{R-1}|}{|P_{R-1}|}} & \leq \frac{4(|P_{R-1}|)!}{(|P_R| + |P_{R-1}|)^{|P_{R-1}|}}\\
        & \leq \left(\frac{|P_{R-1}|}{|P_R| + |P_{R-1}|}\right)^{|P_{R-1}|}\\
        & \leq \left(\frac{1}{3 \log^2 n |P_{R-1}|}\right)^{|P_{R-1}|}\\
        & \leq \left(\frac{1}{3 \log^2 n |P_{R-1}|}\right)^{3 \log^2 n}\\
        & \leq \left(\frac{1}{3 \log^2 n |P_{R-1}|}\right)^{3 \log^2 n}\\
        & = e^{-\Omega(\log^2 n)}
    \end{align*}
\end{itemize}

Therefore, with probability $1 - e^{-\Omega(\log^2 n)}$, the matching returned by $\mathcal{A}$ is not equal to $M_P$.
\end{proof}

\section{Missing Analysis for Low Degree Near Uniform Hypergraphs (Section~\ref{sec:bounded_degree})}
\label{appendix:hypergraphs}

\subsection{Proof of Lemma \ref{lemma:failure-prob-bn}}\label{appx:lemma_failure_prob}

\begin{lemma}\label{lemma:success_probability_bounded}
Assume that the hypergraph $H(V,E)$ has a maximum degree of $\Delta$ and edge size between $d/\rho$ and $d$. Let $S$ be a vertex-sample with probability $p$, and assume that $S$ contains an edge $e$. We have
\[ \P(\not\exists \ e' \mbox{ s.t } e' \subseteq S, e' \in H \setminus \{e\} \ | \ e \subseteq S) \geq f^{\bullet}(\Delta, p,d,\rho) \]
\end{lemma}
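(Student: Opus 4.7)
The plan is to condition on the event $\{e \subseteq S\}$ and exploit the fact that the remaining vertices in $V \setminus e$ are still sampled independently with probability $p$. Under this conditional measure, for any $e' \neq e$ of size $j$ that intersects $e$ in $i$ vertices, we have $\{e' \subseteq S\} = \{e' \setminus e \subseteq S\}$, which occurs with probability $p^{j-i}$, so $\P(e' \not\subseteq S \mid e \subseteq S) = 1 - p^{j-i}$. The goal is therefore to show
\[
\P\Bigl(\bigcap_{e' \in H \setminus \{e\}} \{e' \not\subseteq S\} \,\Bigm|\, e \subseteq S\Bigr) \;\geq\; \prod_{j,i} (1-p^{j-i})^{a^\star_{ij}},
\]
where $a^\star_{ij}$ denotes the actual number of edges in $H \setminus \{e\}$ of size $j$ that intersect $e$ in exactly $i$ vertices, and then to verify that $(a^\star_{ij})$ is feasible for the program defining $f^\bullet$.

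I would prove the inequality above by induction on an enumeration $e_1,\ldots,e_m$ of $H \setminus \{e\}$. The inductive step reduces, via Bayes' rule, to showing
\[
\P(e_l \subseteq S \mid e \subseteq S,\, e_1 \not\subseteq S, \ldots, e_{l-1} \not\subseteq S) \;\leq\; p^{|e_l \setminus e|} \;=\; \P(e_l \subseteq S \mid e \subseteq S).
\]
Applying Bayes once more, this is equivalent to the statement
\[
\P(e_1 \not\subseteq S, \ldots, e_{l-1} \not\subseteq S \mid e \subseteq S,\, e_l \subseteq S) \;\leq\; \P(e_1 \not\subseteq S, \ldots, e_{l-1} \not\subseteq S \mid e \subseteq S),
\]
which follows from a monotone coupling: further conditioning on $\{e_l \subseteq S\}$ merely forces the vertices in $e_l \setminus e$ into $S$, and each event $\{e_k \not\subseteq S\}$ is monotone decreasing in the sampled set. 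Establishing this monotonicity cleanly is the step I expect to be the main technical nuisance, though it is a standard Harris/FKG-style argument for product measures on a Boolean lattice.

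It remains to verify that the actual counts $(a^\star_{ij})$ form a feasible point of the program defining $f^\bullet(\Delta,p,d,\rho)$. Since each vertex $v \in e$ lies in at most $\Delta - 1$ edges other than $e$, double counting gives
\[
\sum_{i,j} i \cdot a^\star_{ij} \;=\; \sum_{e' \in H \setminus \{e\}} |e' \cap e| \;=\; \sum_{v \in e}(\deg(v)-1) \;\leq\; (\Delta-1)|e| \;\leq\; (\Delta-1)d,
\]
and similarly $\sum_{i,j} j \cdot a^\star_{ij} = \sum_{e' \neq e} |e'| \leq \sum_{v \in V} \deg(v) \leq \Delta n$. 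Non-negativity is immediate, and $j$ ranges over $[d/\rho, d]$ by the edge-size hypothesis. Feasibility then yields $\prod_{j,i}(1-p^{j-i})^{a^\star_{ij}} \geq f^\bullet(\Delta,p,d,\rho)$, which combined with the positive-correlation bound proves the lemma.
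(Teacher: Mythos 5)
Your proposal is correct and follows essentially the same route as the paper: induction over an enumeration of the edges, with Bayes' rule reducing the inductive step to the positive-correlation inequality $\P(e_l \subseteq S \mid e \subseteq S,\, e_1,\ldots,e_{l-1} \not\subseteq S) \leq \P(e_l \subseteq S \mid e \subseteq S)$, followed by checking that the true intersection counts are feasible for the program defining $f^\bullet$. The only difference is cosmetic: the paper justifies the correlation inequality by an explicit case split on whether some already-excluded edge has an unsampled vertex inside $e_l$, whereas you invoke a monotone-coupling/Harris argument for decreasing events under a product measure; both are valid, and your explicit feasibility verification is in fact cleaner than the paper's implicit treatment via the intermediate program $f^\bullet_k$.
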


\begin{proof}
The intuition is that the term $f^{\bullet}(\Delta,p,d,\rho)$ treats the events that the edges are not contained in $S$ as independent events. \\

Consider the following intermediate optimization problem

\begin{alignat*}{3}
 f^\bullet_k(\Delta, p,d,\rho):=\quad\quad & \min\limits_{a_{ij}} & \prod\limits_{j=d/\rho}^{d}\prod\limits_{i=0}^{j-1} (1-p^{j-i})^{a_{ij}} \\
 &  \text{s.t.} \quad & \sum_{j = d/\rho}^{d}\sum_{i = 0}^{j-1} i\cdot  a_{ij} \leq \min\{(\Delta -1)d,kd\}\\
& & \sum_{j = d/\rho}^{d}\sum_{i = 0}^{j-1} j\cdot  a_{ij} \leq k\\
& & a_{ij} \geq 0.
\end{alignat*}

% \begin{alignat*}{3}
%  f_k(\Delta, p):=\quad\quad & \min\limits_{\{a_i\}_{i = 0}^{d-1}} & \prod\limits_{i=0}^{d-1} (1-p^{d-i})^{a_i} \\
%  &  \text{s.t.} \quad & \sum_{i = 0}^{d-1} i\cdot  a_i \leq \min\{(\Delta -1)d, kd\}\\
% & & \sum_{i = 0}^{d-1} a_i \leq k\\
% & & a_i \geq 0.
% \end{alignat*}

We want to show that for every $ 1 \leq k \leq \Delta n$, and for any $k$ edges $e_1,\ldots, e_k$, all different than $e$, we have
\[ \P(e_1, \ldots, e_k \not\subseteq S \ | \ e \subseteq S)\geq f_k^{\bullet}(\Delta,p,d,\rho).\]

For $k = 1$, the result clearly holds. Suppose by induction for any $k$ edges $e_1,\ldots, e_k$, all different than $e$, we have
\[ \P(e_1, \ldots, e_k \not\subseteq S \ | \ e \subseteq S)\geq f_k^{\bullet}(\Delta,p,d,\rho).\]
If we consider $k+1$ edges, then 
\begin{align*}
    \P(e_1, \ldots, e_k, e_{k+1} \not\subseteq S \ | \ e \subseteq S) & = P(e_1, \ldots, e_k \not\subseteq S \ | \ e \subseteq S) \P(e_{k+1} \not\subseteq S \ | \ e \subseteq S, \  e_1, \ldots, e_k, e_{k+1} \not\subseteq S) 
\end{align*}

By the induction hypothesis we know that $P(e_1, \ldots, e_k \not\subseteq S \ | \ e \subseteq S) \geq f_k^{\bullet}(\Delta,p,d,\rho)$. Therefore, if we can show that 
\begin{equation}\label{eq:inequality_toshow}
 \P(e_{k+1} \not\subseteq S \ | \ e \subseteq S, \  e_1, \ldots, e_k \not\subseteq S) \geq \P(e_{k+1} \not\subseteq S\ | \ e \subseteq S)
 \end{equation}
then we will have
\[ \P(e_1, \ldots, e_k, e_{k+1} \not\subseteq S \ | \ e \subseteq S) \geq f_{k}^{\bullet}(\Delta,p,d,\rho) \P(e_{k+1} \not\subseteq S\ | \ e \subseteq S) \geq  f_{k+1}^{\bullet}(\Delta,p,d,\rho)\]

To see that \eqref{eq:inequality_toshow} holds, we omit conditioning on $e \subset S$ to ease notation. By Bayes rule, 
\begin{align*}
   \P(e_{k+1} \not\subseteq S \ | \ e \subseteq S, \  e_1, \ldots, e_k \not\subseteq S) & = \P( e_{k+1} \not\subseteq  S \ |  \ \exists \ \ell \in [1,k]  \ e_{\ell} \cap e_{k+1} \setminus S \neq \emptyset,  \ e_1, \ldots, e_k \not\subseteq S)\\
   & \ \ \ \ \times \P(  \exists \ \ell \in [1,k] \ e_{\ell} \cap e_{k+1} \setminus S \neq \emptyset \ |  \ e_1, \ldots, e_k \not\subseteq S)\\ & \ \ \ \ + \P( e_{k+1} \not\subseteq  S \ |  \ \forall \ \ell \in [1,k] \ e_{\ell} \cap e_{k+1} \setminus S= \emptyset,\ e_1,\ldots, e_k \not\subseteq S)\\
   & \ \ \ \ \times \P( \forall \ \ell \in [1,k] \ e_{\ell} \cap e_{k+1} \setminus S= \emptyset \ |  \ e_1,\ldots, e_k \not\subseteq S \not\subseteq S)\\
\end{align*}

When $e_{\ell} \cap e_{k+1} \setminus S \neq \emptyset$ for some value $\ell$, then we know with probability one that $e_{k+1} \not\subseteq S$. Therefore
\[ \P( e_{k+1} \not\subseteq  S \ |  \ \exists \ \ell \in [1,k]  \ e_{\ell} \cap e_{k+1} \setminus S \neq \emptyset,  \ e_1, \ldots, e_k \not\subseteq S) = 1 \geq P(e_{k+1} \not\subseteq  S).\]

Furthermore,  \[\P( e_{k+1} \not\subseteq  S \ |  \ \forall \ \ell \in [1,k] \ e_{\ell} \cap e_{k+1} \setminus S= \emptyset,\ e_1,\ldots, e_k \not\subseteq S)= \P( e_{k+1} \not\subseteq  S)= 1 - p^d,\] 
therefore
\begin{align*}
   \P(e_{k+1} \not\subseteq S \ | \ e \subseteq S, \  e_1, \ldots, e_k \not\subseteq S) & \geq \P( e_{k+1} \not\subseteq  S \ |  \ \exists \ \ell \in [1,k]  \ e_{\ell} \cap e_{k+1} \setminus S \neq \emptyset,  \ e_1, \ldots, e_k \not\subseteq S)\\
   & \ \ \ \ \times  P(e_{k+1} \not\subseteq  S)\\ & \ \ \ \ + \P( e_{k+1} \not\subseteq  S \ |  \ \forall \ \ell \in [1,k] \ e_{\ell} \cap e_{k+1} \setminus S= \emptyset,\ e_1,\ldots, e_k \not\subseteq S)\\
   & \ \ \ \ \times  P(e_{k+1} \not\subseteq  S)\\
   & =  P(e_{k+1} \not\subseteq  S)
\end{align*}

\end{proof}

\begin{lemma}\label{lemma:aij-ai}
For $p \in (0,1)$, any optimal solution to $LP^\bullet(\Delta, p, d, \rho)$ and $f^\bullet(\Delta, p, d, \rho)$ is such that $a_{ij} = 0$ for all $j \neq d/\rho$.
\end{lemma}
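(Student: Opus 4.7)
The plan is to prove both claims by a local exchange argument: starting from any feasible $(a_{ij})$ with $a_{ij} > 0$ for some $j > d/\rho$, I construct a strictly better feasible solution by transferring mass from $a_{ij}$ onto some $a_{i',d/\rho}$. First I would note that both objectives are, after a monotone transform, linear in $(a_{ij})$: for $f^\bullet$ one takes a logarithm so that the objective becomes $\sum_{i,j} a_{ij}\ln(1-p^{j-i})$, which is a linear minimization. The per-unit coefficient is $p^{j-i}$ for $LP^\bullet$ and $\ln(1-p^{j-i})$ for $\log f^\bullet$, both monotone in $j-i$: for fixed $i$, decreasing $j-i$ makes the coefficient strictly larger (favorable for the $LP^\bullet$ max) and strictly more negative (favorable for the $\log f^\bullet$ min), using $p\in(0,1)$. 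Simultaneously, decreasing $j$ relaxes the size constraint $\sum j\,a_{ij}\le \Delta n$. These two monotonicities are what drive the mass to concentrate at $j=d/\rho$.

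I would then split on the value of $i$ for the offending variable $a_{ij}$. In the easy case $i \le d/\rho - 1$, perform the one-for-one swap $a_{ij}\leftarrow a_{ij}-\varepsilon$ and $a_{i,d/\rho}\leftarrow a_{i,d/\rho}+\varepsilon$. The intersection constraint is preserved exactly, since its change is $-i\varepsilon + i\varepsilon=0$, and the size constraint is relaxed by $(j-d/\rho)\varepsilon>0$. The objective strictly improves because the new coefficient at index $(i,d/\rho)$ is $p^{d/\rho-i}>p^{j-i}$ for $LP^\bullet$ (and the log analogue for $f^\bullet$), using $d/\rho - i < j - i$ and $p\in(0,1)$. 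This rules out the optimality of any solution with a nonzero $a_{ij}$ in this regime.

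The main obstacle is the remaining case $i\ge d/\rho$, where one cannot keep the inner index fixed because $a_{i,d/\rho}$ does not exist (the admissible range is $0\le i\le j-1$, which at $j=d/\rho$ forces $i\le d/\rho - 1$). Here I would swap toward $a_{d/\rho-1,\,d/\rho}$ at an amplified rate: $a_{ij}\leftarrow a_{ij}-\varepsilon$ and $a_{d/\rho-1,\,d/\rho}\leftarrow a_{d/\rho-1,\,d/\rho}+c\varepsilon$, where
\[
c \;:=\; \min\!\left\{\frac{i}{d/\rho - 1},\ \frac{j}{d/\rho}\right\}.
\]
By the definition of $c$ both constraints are preserved, and $c>1$ because $i\ge d/\rho>d/\rho-1$ and $j\ge d/\rho+1$. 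The resulting objective change is $\varepsilon(cp - p^{j-i})$ for $LP^\bullet$ and $\varepsilon\bigl(c\ln\tfrac{1}{1-p}-\ln\tfrac{1}{1-p^{j-i}}\bigr)$ for $-\log f^\bullet$. The key estimate is that $j-i\ge 1$ implies $p\ge p^{j-i}$ (and $\ln\tfrac{1}{1-p}\ge\ln\tfrac{1}{1-p^{j-i}}$), so the amplification $c>1$ makes both changes strictly positive. Combining the two cases contradicts the optimality of the starting point and gives the lemma uniformly for $LP^\bullet$ and $f^\bullet$; the delicate step, and the one I would most carefully verify, is the Case 2 computation because there the naive one-for-one swap fails and one must rely on the fact that the savings in both constraints are simultaneously strictly more than proportional.
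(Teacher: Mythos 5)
Your proof is correct and follows the same exchange-argument strategy as the paper: assume some $a_{ij}>0$ with $j>d/\rho$ and shift mass to a variable with $j=d/\rho$, using the monotonicity of $p^{j-i}$ (resp.\ $1-p^{j-i}$) in $j-i$ to show the objective strictly improves. Your Case 1 is exactly the paper's argument (one-for-one transfer $(i,j)\to(i,d/\rho)$, which preserves the intersection constraint exactly and relaxes the size constraint).

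Your Case 2 is a genuine addition: the paper's proof performs the swap $a'_{i,d/\rho}=a_{i,d/\rho}+\epsilon$ without checking that $i\le d/\rho-1$, and when $i\ge d/\rho$ the target variable $a_{i,d/\rho}$ does not exist in the program (the index range is $0\le i\le j-1$, and the would-be coefficient $1-p^{d/\rho-i}$ would be nonpositive). Your amplified transfer to $a_{d/\rho-1,\,d/\rho}$ with rate $c=\min\{i/(d/\rho-1),\,j/(d/\rho)\}>1$ correctly preserves both constraints, and the estimate $cp>p\ge p^{j-i}$ (with the logarithmic analogue for $f^\bullet$) gives the strict improvement. So your write-up is not only consistent with the paper's route but closes a case the paper's proof overlooks.
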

\begin{proof}
For ease of notation, we use $d^-$ instead of $d/\rho$ in the proof. Suppose to the contrary that there is an optimal solution $a$ such that $a_{ij} > 0$ for some $j > d^-$. Then consider the alternative solution $a'$ such that $a'_{id^-} = a_{id^-} + \epsilon$, $a'_{ij} = a_{ij} - \epsilon$, and $a'_{sk} = a'_{sk}$ for all $s \neq i$ or $k \notin \{d^-, j\}$. For $\epsilon > 0$ small enough, clearly the constraints still hold. 

The net increase in objective value of $LP^\bullet(\Delta, p, d, \rho)$ is  $\epsilon\cdot (p^{d^- - 1} - p^{j - 1}) > 0$ because $d^- < j$ and $p \in (0,1)$.
The new objective value of $f^\bullet(\Delta, p, d, \rho)$ is $\left(\frac{1-p^{d^--i}}{1-p^{j-i}}\right)^\epsilon < 1$ times the old objective value (thus smaller than the old objective value) again because $d^- < j$ and $p \in (0,1)$. Therefore, $a$ is not an optimal solution to either $LP^\bullet(\Delta, p, d, \rho)$ or $f^\bullet(\Delta, p, d, \rho)$.
\end{proof}

From Lemma~\ref{lemma:aij-ai}, we can obtain the following two equivalent representations of $LP^\bullet(\Delta, p, d, \rho)$ and $f^\bullet(\Delta, p, d, \rho)$: let $a_i$ be the number of edges of size $d/\rho$ that intersect the focal edge at $i$ vertices, we have

\begin{alignat*}{3}
% \label{lp-bounded-degree}
 LP^\bullet(\Delta, p, d, \rho):=\quad\quad & \max\limits_{a_{i}} & \sum_{i = 0}^{d/\rho-1} a_{i} \cdot p^{d/\rho-i} \\
 &  \text{s.t.} \quad & \sum_{i = 0}^{d/\rho-1} i\cdot  a_{i} \leq (\Delta -1)d\\
& & \sum_{i = 0}^{d/\rho-1} a_{i} \leq \frac{\Delta n}{d/\rho}\\
& & a_{i} \geq 0.
\end{alignat*}

\begin{alignat*}{3}
 f^\bullet(\Delta, p, d, d/\rho):=\quad\quad & \min\limits_{a_{i}} & \prod\limits_{i=0}^{d/\rho-1} (1-p^{d/\rho-i})^{a_{i}} \\
 &  \text{s.t.} \quad & \sum_{i = 0}^{d/\rho-1} i\cdot  a_{i} \leq (\Delta -1)d\\
& & \sum_{i = 0}^{d/\rho-1} a_{i} \leq \frac{\Delta n}{d/\rho}\\
& & a_{i} \geq 0.
\end{alignat*}

\begin{claim}\label{claim:bdf}
The function 
 $$f(x) = \left(\frac{1-p^d}{1-p^{d-x}}\right)^{\frac{1}{x}}$$ is increasing in $x$ for $x \in [1, d-1]$.
\end{claim}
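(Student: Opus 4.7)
\textbf{Proof plan for Claim~\ref{claim:bdf}.} My plan is to take logarithms and reduce monotonicity of $f$ to monotonicity of $\log f(x) = h(x)/x$, where
$$h(x) := \log(1-p^d) - \log(1-p^{d-x}).$$
Since $\log$ is monotone, it suffices to show $(\log f)'(x) \geq 0$ on $[1,d-1]$, which is equivalent to showing $F(x) := xh'(x) - h(x) \geq 0$ on that interval.

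Next I would verify that $F$ extends continuously to $x=0$ with $F(0)=0$ (note $h(0)=0$, and $h'(0) = (-\log p)p^d/(1-p^d)$ is finite), and then compute
$$F'(x) = h'(x) + xh''(x) - h'(x) = xh''(x).$$
So once I show $h''(x)>0$, I immediately get $F(x) = \int_0^x t h''(t)\,dt \geq 0$ for every $x\ge 0$, which yields the claim.

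The remaining step is a routine differentiation that I expect to be the only place where care is needed. Setting $c := -\log p > 0$ and $u(x) := p^{d-x}$, one has $u'(x) = cu(x)$ and
$$h'(x) = \frac{cu}{1-u}, \qquad h''(x) = c\cdot\frac{u'(1-u)+u u'}{(1-u)^2} = \frac{c^2 u}{(1-u)^2}.$$
Since $p\in(0,1)$ gives $u\in(0,1)$ on the relevant interval, this is strictly positive, completing the argument. The only mild subtlety is checking the boundary behavior at $x=0$ so that the integral representation of $F$ is valid; this follows since both $h$ and $h'$ extend smoothly to a neighborhood of $0$ (as $1-p^{d-x}>0$ for $x<d$).

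Overall I do not anticipate a genuine obstacle: the claim is a one-variable calculus fact that reduces cleanly to convexity of $h$, and the bookkeeping of constants ($c=-\log p$, the direction of inequalities due to $p<1$) is the only thing to watch. No assumption from the rest of the paper is needed beyond $p\in(0,1)$ and $d-1\geq 1$.
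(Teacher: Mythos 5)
Your proof is correct, but it takes a genuinely different route from the paper's. The paper differentiates $f$ directly, writes $f'(x)$ as a product of two factors, argues the first factor is non-positive, and then shows the second factor $g(x) = p^d x \log p - (p^d - p^x)\log\left(\frac{p^x(p^d-1)}{p^d-p^x}\right)$ is also non-positive by proving $g$ is decreasing on $[1,d-1]$ and that $g(1) \le 0$, the latter via a chain of algebraic manipulations ending in $\log p + 1 - p \le 0$. You instead pass to $\log f(x) = h(x)/x$ with $h(x) = \log(1-p^d) - \log(1-p^{d-x})$, note $h(0)=0$, and reduce everything to convexity of $h$: since $F(x) := xh'(x)-h(x)$ satisfies $F(0)=0$ and $F'(x)=xh''(x)$, you get $F(x)=\int_0^x t\,h''(t)\,dt \ge 0$, which is exactly the standard fact that the chord slope from the origin of a convex function vanishing at $0$ is nondecreasing. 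Your computation $h''(x) = c^2 u/(1-u)^2 > 0$ with $c=-\log p>0$ and $u=p^{d-x}\in(0,1)$ checks out, and the extension to $x=0$ is unproblematic since $1-p^{d-x}>0$ for all $x<d$. Your argument is shorter, avoids the paper's delicate sign-chasing inside $g$, and yields strict monotonicity for free; both arguments need only $p\in(0,1)$ and $d\ge 2$.
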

\begin{proof}
Taking the derivative of $f$, we get 
\begin{equation} \label{eq:dbf'}
    f'(x) = \frac{\left(\frac{p^x(p^d-1)}{p^d-p^x}\right)^{\frac{1}{x}}}{(p^d-p^x)x^2}\cdot \left(p^d x \log p - (p^d - p^x)\log\left(\frac{p^x(p^d-1)}{p^d-p^x}\right)\right).
\end{equation}
The first term on the RHS of Eq.\eqref{eq:dbf'} is non-positive: $p^d < p^x$ and $p^d - 1 < 0$ because $p\in (0,1)$ and $x < d$. We now show that the second term is non-positive as well. Denote the second term by $g(x)$, i.e., 
$$g(x) = p^d x \log p - (p^d - p^x)\log\left(\frac{p^x(p^d-1)}{p^d-p^x}\right).$$ Our plan is to first show that $g(x)$ is decreasing in $x$, and therefore it is maximized at $x = 1$. We then show that $g(1) \leq 0$, and thus conclude that $g(x) \leq 0$ for all $x \in [1, d-1]$.

Taking derivative of $g$, we get
$$g'(x) = p^x\log p \log \left(\frac{p^x(p^d-1)}{p^d-p^x}\right) = p^x\log p \log \left(\frac{1-p^d}{1-p^{d-x}}\right).$$
Because $1 \leq d-x \leq d-1$ and $p \in (0,1)$, we have $0 \geq 1-p^d \geq 1-p^{d-x}$. Thus, $(1-p^d)/(1-p^{d-x})\geq 1$. Furthermore, $\log p < 0$ because $p \in (0,1)$. We can thus conclude that $g'(x) \leq 0$ for all $x \in [1, d-1]$, or equivalently, $g(x)$ is decreasing on $[1, d-1]$.

We now show $g(1) \leq 0$.
\begin{align*}
g(1) &= p^d \log p - (p^d - p) \log \left(\frac{p(p^d-1)}{p^d-p} \right)\\
&= p^d\left(\log p - \log \left(\frac{p(p^d-1)}{p^d-p} \right)\right) + p \log \left(\frac{p(p^d-1)}{p^d-p} \right)\\
& = p^d \log p - (p^d - p) \log \left(\frac{p^d-1}{p^{d-1}-1} \right)\\
& = p \left(p^{d-1}\log p - (p^{d-1} - 1) \log \left(\frac{1-p^d}{1-p^{d-1}} \right))\right)\\
& = p \left(p^{d-1}\log p + (1-p^{d-1}) \log \left(1+\frac{p^{d-1}-p^d}{1-p^{d-1}} \right)\right)\\
& \leq p \left(p^{d-1}\log p + (1-p^{d-1}) \frac{p^{d-1}-p^d}{1-p^{d-1}} \right)\\
& = p \left(p^{d-1}\log p + p^{d-1}-p^d \right)\\
& = p^d \left(\log p + 1-p \right) \leq 0,
\end{align*}
where the last inequality follows from the fact that $1+t \leq e^t$ for all $t$: we can set $t = \log p$. 
\end{proof}

\begin{lemma} \label{lemma:closed-form-nb}
For $p \in (0, 1)$, an optimal solution to the math program $f^\bullet(\Delta, p, d, \rho)$ is 
\[a^*_{d/\rho-1}= \min \left \{(\Delta - 1)\frac{d}{d/\rho-1},  \frac{\Delta n}{d/\rho} \right \}, \; a^*_0 = \frac{\Delta n}{d/\rho} - a^*_{d/\rho-1}, \; a^*_i = 0 \; \forall i \in \{ 1,\ldots, d/\rho-2 \}. 
\]
\end{lemma}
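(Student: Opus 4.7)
The plan is to reduce the optimization to a linear program by taking logarithms and then to certify the proposed optimum $a^*$ via LP duality, with Claim~\ref{claim:bdf} carrying the main technical load. Writing $c_i := -\log(1 - p^{d/\rho - i})$, which is finite and strictly positive because $p \in (0,1)$, minimizing the product objective is equivalent to maximizing $\sum_{i = 0}^{d/\rho - 1} c_i \, a_i$ over the polytope defined by $\sum_i i\, a_i \leq B_1 := (\Delta - 1)d$, $\sum_i a_i \leq B_2 := \Delta n / (d/\rho)$, and $a \geq 0$. An easy preliminary observation that I will use throughout is that $c_i$ is strictly increasing in $i$, since $1 - p^{d/\rho - i}$ decreases in $i$ for $p \in (0,1)$.

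The dual LP is $\min B_1 y_1 + B_2 y_2$ subject to $i\, y_1 + y_2 \geq c_i$ for all $i \in \{0,\ldots,d/\rho-1\}$ and $y_1, y_2 \geq 0$. I will exhibit, in each of the two cases that determine $a^*_{d/\rho-1}$, a dual-feasible pair $(y^*_1, y^*_2)$ whose objective matches the primal objective achieved by $a^*$; strong duality then certifies optimality. In Case~1, where $B_1/(d/\rho-1) \leq B_2$, both primal constraints are tight under $a^*$; complementary slackness applied to $a^*_{d/\rho-1} > 0$ and $a^*_0 \geq 0$ suggests $y^*_2 = c_0$ and $y^*_1 = (c_{d/\rho-1} - c_0)/(d/\rho-1)$, which is non-negative by monotonicity of $c_i$. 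In Case~2, where $B_2 < B_1/(d/\rho-1)$, the first primal constraint is slack, so complementary slackness forces $y^*_1 = 0$, and $a^*_{d/\rho-1} > 0$ then gives $y^*_2 = c_{d/\rho - 1}$; dual feasibility $i \cdot 0 + c_{d/\rho-1} \geq c_i$ follows immediately from monotonicity of $c_i$.

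The only substantive step is verifying dual feasibility in Case~1, which for each $i \in \{1, \ldots, d/\rho - 2\}$ reduces to the inequality
\[
\frac{c_{d/\rho - 1} - c_0}{d/\rho - 1} \;\geq\; \frac{c_i - c_0}{i}.
\]
The right-hand side rewrites as $\log\bigl((1 - p^{d/\rho})/(1 - p^{d/\rho - i})\bigr)^{1/i}$, which is precisely $\log f(i)$ for the function $f$ in Claim~\ref{claim:bdf} applied with its parameter $d$ replaced by $d/\rho$. Since that claim asserts $f$ is increasing on $[1, d/\rho - 1]$, the ratio $(c_i - c_0)/i$ is maximized over this range at $i = d/\rho - 1$, which is exactly the inequality required. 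This is the one place where the argument genuinely relies on Claim~\ref{claim:bdf}; the remainder is routine LP bookkeeping, including a short algebraic check that the primal and dual objectives coincide in both cases (both evaluate to $c_0 B_2 + B_1(c_{d/\rho-1}-c_0)/(d/\rho-1)$ in Case~1 and to $c_{d/\rho-1} B_2$ in Case~2).
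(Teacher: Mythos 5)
Your proof is correct, and it takes a genuinely different route from the paper's. The paper argues by exchange: starting from any feasible $a$ with $a_{d/\rho-1}$ below the claimed value, it shifts an $\epsilon$ of mass from the largest positive intermediate coordinate $a_i$ onto $a_{d/\rho-1}$ (adjusting $a_0$ to preserve feasibility) and shows via Claim~\ref{claim:bdf} that the ratio of objective values is at most $1$, so that repeating the exchange drives any solution to the claimed one without increasing the objective. You instead log-linearize the product objective into the LP $\max \sum_i c_i a_i$ with $c_i = -\log(1-p^{d/\rho-i})$ and certify $a^*$ by exhibiting an explicit dual feasible point with matching objective value in each of the two cases. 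The two arguments rest on exactly the same technical fact: your dual feasibility condition $(c_{d/\rho-1}-c_0)/(d/\rho-1) \geq (c_i-c_0)/i$ is, after exponentiation, precisely the inequality $f(i) \leq f(d/\rho-1)$ that the paper extracts from Claim~\ref{claim:bdf}, applied with $d$ replaced by $d/\rho$ exactly as the paper also does. What your version buys is a clean, self-contained optimality certificate (weak duality already suffices once the primal and dual objectives coincide, so the appeal to strong duality is not even needed), whereas the exchange argument as written only establishes that a single local perturbation does not hurt and implicitly relies on iterating it to completion. Your Case~2 analysis ($y_1^*=0$, $y_2^*=c_{d/\rho-1}$, feasible by monotonicity of $c_i$) and the checks that both primal constraints needed for the objective computation are tight or slack as claimed are routine and correct.
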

\begin{proof}
For ease of notation, we use $d^-$ instead of $d/\rho$ in the proof. Consider another solution $a$ such that $a_{d^--1} < \min\{(\Delta - 1)d/(d^--1), \Delta n/d^-\}$. If for all $i \in \{0, \ldots, d^--2\}$, $a_i = 0$, then $a$ is not optimal: neither constraint is tight; thus one could always increase some $a_i, i < d^--1$ to decrease the objective value. Therefore, without loss of generality, we can assume that there exists an index $i < d^--1$, $a_i > 0$. Let $i$ be the biggest such index. For $\epsilon > 0$, consider the following alternative solution:
$$a'_{d^--1} = a_{d^--1} + \epsilon, \; 
a'_i = \begin{cases} 
      a_i - \frac{{d^--1}}{i} \epsilon & i > 0 \\
      a_i - \epsilon & i = 0
   \end{cases}, $$
$$a'_0 = \begin{cases} 
      a_0 + \left(\frac{{d^--1}}{i} -1 \right)\epsilon & i > 0 \\
      a_0 - \epsilon & i = 0
   \end{cases}, \;
a'_k = a_k \;\forall k \notin \{0, i, {d^--1}\}
   $$

It can be easily checked that $a'$ is still feasible for small enough $\epsilon$. Let $obj(a)$ (resp. $obj(a')$) be the objective function value with respect to solution $a$ (resp. $a'$). When $i > 0$, we have
\begin{align*}
\frac{obj(a')}{obj(a)} &= \left( \frac{(1-p)\cdot (1-p^{d^-})^{\frac{d^--1}{i} - 1}}{(1-p^{d^--i})^{\frac{d^--1}{i}}}\right )^\epsilon\\
& = \left(\frac{1-p}{1-p^{d^-}} \left(\frac{ 1-p^{d^-}}{1-p^{d^--i}}\right)^{\frac{d^--1}{i}}\right )^\epsilon\\
& = \left(\left(\frac{1-p}{1-p^{d^-}}\right)^{\frac{1}{d^--1}} \left(\frac{ 1-p^{d^-}}{1-p^{d^--i}}\right)^{\frac{1}{i}}\right )^{(d^--1)\epsilon}\\
& = \left(\frac{\left(\frac{ 1-p^{d^-}}{1-p^{d^--i}}\right)^{\frac{1}{i}}}{\left(\frac{1-p^{d^-}}{1-p}\right)^{\frac{1}{d^--1}}}\right )^{(d^--1)\epsilon}\leq 1,
\end{align*}
where the last inequality follows from Claim~\ref{claim:bdf}. 
We have thus shown that $obj(a') \leq obj(a)$ if $i > 0$. 

We now show that $obj(a') \leq obj(a)$ if $i = 0$.
When $i = 0$, we have
\begin{align*}
\frac{obj(a')}{obj(a)} &= \left( \frac{1-p}{1-p^{d^-}}\right )^\epsilon \leq 1,
\end{align*}
because $0 < p^{d^-} < p$.
\end{proof}

\begin{lemma} \label{lemma:exp-nb}
For $\Delta \geq 2$, we have
\[ f^\bullet(\Delta,p,d,\rho) \geq f^\bullet(2,p,d/\rho,1)^{(\Delta - 1)\rho}\]
\end{lemma}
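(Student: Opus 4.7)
The plan is to plug in the closed-form optima from Lemma~\ref{lemma:closed-form-nb} on both sides and compare exponents directly. Write $d^- := d/\rho$ throughout. The lemma gives
$$f^\bullet(\Delta, p, d, \rho) = (1-p^{d^-})^{A_0}(1-p)^{A_{d^--1}},$$
where $A_{d^--1} = \min\{(\Delta-1)d/(d^--1),\, \Delta n/d^-\}$ and $A_0 = \Delta n/d^- - A_{d^--1}$. Applying the lemma again with parameters $(\Delta, d, \rho) \to (2, d^-, 1)$ yields the analogous expression for $f^\bullet(2, p, d^-, 1)$ with coefficients $B_{d^--1} = \min\{d^-/(d^--1),\, 2n/d^-\}$ and $B_0 = 2n/d^- - B_{d^--1}$. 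Raising the latter to the power $(\Delta-1)\rho$ gives
$$f^\bullet(2, p, d^-, 1)^{(\Delta-1)\rho} = (1-p^{d^-})^{(\Delta-1)\rho B_0}(1-p)^{(\Delta-1)\rho B_{d^--1}}.$$

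In the generic regime where the intersection constraint is tight on both sides, $A_{d^--1} = (\Delta-1)d/(d^--1)$ and $B_{d^--1} = d^-/(d^--1)$. The key identity $\rho \cdot d^- = d$ gives $(\Delta-1)\rho \cdot B_{d^--1} = (\Delta-1)d/(d^--1) = A_{d^--1}$, so the exponents of $(1-p)$ on the two sides match exactly. Since $1-p^{d^-} \in (0,1)$, the inequality $f^\bullet(\Delta,p,d,\rho) \geq f^\bullet(2,p,d^-,1)^{(\Delta-1)\rho}$ then reduces to $A_0 \leq (\Delta-1)\rho B_0$, which after plugging in simplifies to $\Delta n/d^- \leq 2(\Delta-1)\rho n/d^-$, i.e., $\Delta \leq 2(\Delta-1)\rho$. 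For $\Delta \geq 2$ and $\rho \geq 1$, we have $2(\Delta-1)\rho \geq 2(\Delta-1) \geq \Delta$, so the bound holds.

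It remains to address the corner cases where some $\max$ is pinned to zero. If $A_0 = B_0 = 0$, both sides collapse to powers of $(1-p)$ and the same inequality $\Delta \leq 2(\Delta-1)\rho$ suffices. If $A_0 = 0$ but $B_0 > 0$, then $(1-p^{d^-})^{(\Delta-1)\rho B_0} \leq 1$ lets us drop that factor on the RHS, and the definition $A_0 = 0$ (i.e.\ $\Delta n/d^- \leq (\Delta-1)d/(d^--1)$) gives the needed $(1-p)$-exponent comparison. The remaining case $A_0 > 0$, $B_0 = 0$ is actually vacuous: $B_0 = 0$ forces $n \leq (d^-)^2/(2(d^--1))$, whereas $A_0 > 0$ forces $n > (\Delta-1)\rho (d^-)^2/(\Delta(d^--1))$, and these two together would require $\Delta > 2(\Delta-1)\rho$, contradicting $\Delta \geq 2$ and $\rho \geq 1$.

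The main obstacle is purely the bookkeeping of exponents across the substitution $(\Delta,d,\rho)\to(2,d^-,1)$; the underlying inequality $\Delta \leq 2(\Delta-1)\rho$ is elementary, and the cancellation $(\Delta-1)\rho \cdot d^-/(d^--1) = (\Delta-1)d/(d^--1)$ is the one identity that makes the $(1-p)$-exponents line up on both sides.
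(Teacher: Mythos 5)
Your proof is correct and follows essentially the same route as the paper: both substitute the closed-form optimum from Lemma~\ref{lemma:closed-form-nb} on each side, exploit the cancellation $(\Delta-1)\rho\cdot\frac{d/\rho}{d/\rho-1}=(\Delta-1)\frac{d}{d/\rho-1}$ so that the $(1-p)$-exponents align, and reduce the remaining $(1-p^{d/\rho})$-exponent comparison to the elementary bound $\Delta\le 2(\Delta-1)\rho$. The only cosmetic difference is that you handle the min via an explicit case split on which branch is active, whereas the paper replaces the min with $(\Delta-1)\frac{d}{d/\rho-1}$ in one monotonicity step (valid since $1-p<1-p^{d/\rho}$), which absorbs the non-generic case without separate treatment.
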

\begin{proof}
For ease of notation, we use $d^-$ instead of $d/\rho$ in the proof. From Lemma~\ref{lemma:closed-form-nb}, we have that 
\begin{align} 
f^\bullet(\Delta, p, d, \rho) &= (1-p)^{ \min \left \{(\Delta - 1)\frac{d}{d^--1},  \frac{\Delta n}{d^-} \right \}} (1-p^{d^-})^{\frac{\Delta n}{d^-} - \min \left \{(\Delta - 1)\frac{d}{d^--1},  \frac{\Delta n}{d^-} \right \}} \notag\\
&\geq (1-p)^{(\Delta - 1)\frac{d}{d^--1}} (1-p^{d^-})^{\frac{\Delta n}{d^-} - (\Delta - 1)\frac{d}{d^--1}}. \label{eq:bd-1-n}
\end{align}
Specifically when $\Delta = 2$, we have 
\begin{align*}
f^\bullet(2, p, d^-, 1) &= (1-p)^{ \min \left \{\frac{d^-}{d^--1},  \frac{2 n}{d^-} \right \}} (1-p^{d^-})^{\frac{2 n}{d^-} - \min \left \{\frac{d^-}{d^--1},  \frac{2 n}{d^-} \right \}}\\
&= (1-p)^{\frac{d^-}{d^--1}} (1-p^{d^-})^{\frac{2 n}{d^-} - \frac{d^-}{d^--1}},\end{align*}
where the second equality follows because for 
$2 \leq d^- \leq n$, we have $\frac{d^-}{d^--1} \leq \frac{2n}{d^-}$.

We therefore have
\begin{equation}\label{eq:bd-2-n}
f^\bullet(2, p, d^-, 1)^{(\Delta-1)\frac{d}{d^-}} = (1-p)^{\frac{d^-}{d^--1}(\Delta-1)\frac{d}{d^-}} (1-p^{d^-})^{\left(\frac{2 n}{d^-} - \frac{d^-}{d^--1}\right)(\Delta-1)\frac{d}{d^-}}.
\end{equation}

From Eqs.\eqref{eq:bd-1-n} and \eqref{eq:bd-2-n}, we have that to show the desired inequality, we can equivalently show 
\begin{enumerate}
    \item $$(1-p)^{(\Delta - 1)\frac{d}{d^--1}} \geq (1-p)^{\frac{d^-}{d^--1}(\Delta -1) \frac{d}{d^-}}.$$
    \item $$(1-p^{d^-})^{\frac{\Delta n}{d^-} - (\Delta - 1)\frac{d}{d^--1}} \geq (1-p^{d^-})^{(\frac{2 n}{d^-} - \frac{d^-}{d^--1})(\Delta -1) \frac{d}{d^-}}.$$
\end{enumerate}
We first show 
$$(1-p)^{(\Delta - 1)\frac{d}{d^--1}} \geq (1-p)^{\frac{d^-}{d^--1}(\Delta -1) \frac{d}{d^-}}.$$
Since $1-p \in (0,1)$, we equivalently want to show 
$$(\Delta - 1)\frac{d}{d^--1} \leq \frac{d^-}{d^--1}(\Delta -1) \frac{d}{d^-},$$
which, after canceling common terms and rearranging clearly holds.

We then show
$$(1-p^{d^-})^{\frac{\Delta n}{d^-} - (\Delta - 1)\frac{d}{d^--1}} \geq (1-p^{d^-})^{(\frac{2 n}{d^-} - \frac{d^-}{d^--1})(\Delta -1) \frac{d}{d^-}}.$$
Again, because $1-p^{d^-} \in (0,1)$, we equivalently need to show 
$$\frac{\Delta n}{d^-} - (\Delta - 1)\frac{d}{d^--1} \leq (\frac{2 n}{d^-} - \frac{d^-}{d^--1})(\Delta -1) \frac{d}{d^-}.$$

Equivalently
$$\frac{\Delta n}{d^-} \leq \left((\frac{2 n}{d^-} - \frac{d^-}{d^--1})\frac{d}{d^-}+\frac{d}{d^--1}\right)(\Delta -1) = \frac{2n}{d^-}\frac{d}{d^-}(\Delta - 1).$$

Dividing both sides by $\frac{n}{d^-}(\Delta -1)$, we get equivalently 
$$\frac{\Delta}{\Delta-1} \leq 2\frac{d}{d^-},$$
which holds for $\Delta \geq 2$ because $\Delta/(\Delta - 1) \leq 2$ and $d/d^- \geq 1$.

\end{proof}

\begin{lemma} \label{lemma:f-lp-2}
Let $p> 0$, then we have
\[ f^\bullet(2,p,d,\rho) \geq 1 - LP^\bullet(2,p,d,\rho).\]
\end{lemma}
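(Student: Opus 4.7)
The plan is to evaluate $f^\bullet(2, p, d, \rho)$ at the closed-form optimizer from Lemma~\ref{lemma:closed-form-nb}, bound each of its two nonzero factors from below via Bernoulli's inequality, and then combine the bounds with a two-term Weierstrass-type inequality to recover $1 - LP^\bullet(2, p, d, \rho)$.

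First I would instantiate Lemma~\ref{lemma:closed-form-nb} at $\Delta = 2$: the optimizer $a^*$ is supported on $\{0, d/\rho - 1\}$ with
\[
a^*_{d/\rho - 1} = \min\left\{\frac{d}{d/\rho - 1},\ \frac{2n}{d/\rho}\right\}, \qquad a^*_0 = \frac{2n}{d/\rho} - a^*_{d/\rho - 1}.
\]
Because the two programs have identical constraints, $a^*$ is also feasible for the LP defining $LP^\bullet(2, p, d, \rho)$, so evaluating its linear objective at $a^*$ gives
\[
LP^\bullet(2, p, d, \rho) \ \geq\ a^*_{d/\rho - 1}\, p + a^*_0\, p^{d/\rho}.
\]
The task therefore reduces to the pointwise inequality
\[
(1-p)^{a^*_{d/\rho - 1}}\,(1-p^{d/\rho})^{a^*_0} \ \geq\ 1 - a^*_{d/\rho - 1}\, p - a^*_0\, p^{d/\rho}.
\]

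Next I would verify $a^*_{d/\rho - 1} \geq 1$: the first branch equals $d\rho/(d - \rho) \geq 1$ because $\rho(d+1) \geq d$ when $\rho \geq 1$, and the second branch satisfies $2n/(d/\rho) \geq 2$ since $d/\rho \leq d \leq n$. Applying Bernoulli's inequality $(1-x)^a \geq 1 - ax$ (valid for $a \geq 1$) to the first factor yields $(1-p)^{a^*_{d/\rho - 1}} \geq 1 - a^*_{d/\rho - 1}\, p$, and when $a^*_0 \geq 1$ (or $a^*_0 = 0$) the same reasoning gives $(1-p^{d/\rho})^{a^*_0} \geq 1 - a^*_0\, p^{d/\rho}$. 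Multiplying these two lower bounds and invoking $(1-u)(1-v) \geq 1 - u - v$ (trivial when $u+v \geq 1$ since the right-hand side is then nonpositive, and reducing to $uv \geq 0$ otherwise) finishes the argument in this regime.

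The main obstacle is the boundary case $0 < a^*_0 < 1$, which genuinely occurs when $d$ is close to $n$ (e.g., $d = n$, $\rho = 1$ gives $a^*_0 = (n-2)/(n-1) < 1$); Bernoulli's inequality points the wrong way for the factor $(1 - p^{d/\rho})^{a^*_0}$. To handle this I would analyze the single-variable function
\[
g(p) \ :=\ (1-p)^{a^*_{d/\rho - 1}}(1-p^{d/\rho})^{a^*_0} + a^*_{d/\rho - 1}\, p + a^*_0\, p^{d/\rho} - 1
\]
on $[0, 1]$: a direct check gives $g(0) = 0$, $g'(0) = 0$, and $g(1) = 2n/(d/\rho) - 1 \geq 0$. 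The claim becomes trivial whenever $a^*_{d/\rho - 1}\, p + a^*_0\, p^{d/\rho} \geq 1$, so it suffices to establish $g \geq 0$ on the complementary range, which I would do through a sign analysis of $g'$ combined with the refined estimate $(1-p^{d/\rho})^{a^*_0} \geq 1 - p^{d/\rho}$ (valid for $a^*_0 \in [0,1]$) paired with the tighter Bernoulli bound on the first factor; the cross-term $a^*_{d/\rho-1}\, p \cdot p^{d/\rho}$ exactly compensates the gap between $p^{d/\rho}$ and $a^*_0\, p^{d/\rho}$ in the regime where $1 - LP^\bullet$ is positive.
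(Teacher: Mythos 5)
Your route is genuinely different from the paper's. The paper never invokes the closed-form optimizer; instead it claims, for \emph{every} feasible $a$, the pointwise inequality $\prod_{j,i}(1-p^{j-i})^{a_{ij}} \geq 1 - \sum_{j,i}a_{ij}\,p^{j-i}$, established by a Bernoulli step $(1-x)^a \geq 1-ax$ followed by the Weierstrass product bound $\prod_i(1-a_ix_i) \geq 1-\sum_i a_ix_i$. Your specialization to the $f$-minimizer $a^*$ (and the observation that $a^*_{d/\rho-1}\geq 1$ always) is a legitimate alternative, and your concern about $a^*_0 \in (0,1)$ in fact exposes a subtlety in the paper's own argument: the paper proves Bernoulli ``for a positive integer $a$'' and then silently applies it to arbitrary nonnegative reals $a_{ij}$, but the inequality reverses for exponents in $(0,1)$ (for instance $a_0 = x_0 = 1/2$ gives $\sqrt{1/2} < 3/4$, falsifying the paper's stated general claim).

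The difficulty is that your fix for the $a^*_0 \in (0,1)$ case does not hold up. Using $(1-p^{d/\rho})^{a^*_0}\geq 1-p^{d/\rho}$ together with Bernoulli on the first factor yields $(1-a^*_{d/\rho-1}p)(1-p^{d/\rho}) = 1 - a^*_{d/\rho-1}p - p^{d/\rho} + a^*_{d/\rho-1}p^{1+d/\rho}$, and for this to dominate $1-a^*_{d/\rho-1}p - a^*_0p^{d/\rho}$ you need exactly $a^*_{d/\rho-1}\,p \geq 1-a^*_0$, which fails for small $p$ while the lemma is claimed for all $p>0$. For example with $d=n=3$, $\rho = 1$, $p = 0.01$: $a^*_{d/\rho-1}p = 0.015$ but $1-a^*_0 = 0.5$, and $1-LP^\bullet$ is comfortably positive there, so restricting to the regime $1-LP^\bullet>0$ does not rescue the step. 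The ``sign analysis of $g'$'' is where all the work would actually be, and it is left unexecuted. What would genuinely close the gap is a second-order comparison: the surplus $(1-p)^{a^*_{d/\rho-1}} - (1-a^*_{d/\rho-1}p)$ is $\Theta(p^2)$ because $a^*_{d/\rho-1}>1$, while the deficit $(1-a^*_0p^{d/\rho}) - (1-p^{d/\rho})^{a^*_0}$ is $\Theta(p^{2d/\rho})$ because $a^*_0\in(0,1)$ and $d/\rho\geq 2$; one must show the former dominates the latter uniformly, not merely near $p=0$.
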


\begin{proof}

For ease of notation, we use $d^-$ instead of $d/\rho$ in the proof. Let $a = (a_{ij})_{j \in [d^-,d], \ i \in [0,j-1]}$ be a feasible solution to $LP^\bullet(2,p,d,\rho)$. It is sufficient to show that 
\[
  \prod\limits_{j=d^-}^{d}\prod\limits_{i=0}^{j-1} (1-p^{j-i})^{a_{ij}} \geq 1 - \sum_{j = d^-}^{d}\sum_{i = 0}^{j-1} a_{ij} \cdot p^{j-i}\]

We show, more generally that for $x_0, \ldots, x_k \in [0,1]^k$ and $a_0,\ldots, a_k \geq 0$
\[\prod\limits_{i=0}^{k} (1-x_i)^{a_i} \geq 1 - \sum_{i = 0}^{k} a_i \cdot x_i,\]

If $1-\sum_{i = 0}^{k} a_i \cdot x_i \leq 0$,  the inequality is trivially true. So we assume that $1-\sum_{i = 0}^{k} a_i \cdot x_i \geq 0$, which implies $1-a_i\cdot x_i \geq 0$ for $i \in [k]$. We first show that for a positive integer $a$ and $x \in [0,1]$ we have
\[ (1-x)^a \geq 1-ax\]
To see this, we consider the function $g(x) = (1-x)^a - (1-ax)$ over $[0,1]$. The derivative of $g$ is $g'(x) = a-a(1-x)^{a-1} \geq 0$ for $x \in [0,1]$. $g$ is therefore increasing and since $g(0) = 0$, we get that $g(x) \geq 0$ for $x \in [0,1]$. Therefore, for $i \in [k]$
\[ (1-x_i)^{a_i} \geq 1-a_i \cdot x_i \geq 0\]
By taking the product we get that 
\begin{equation}\label{eq:firstproduct}
\prod\limits_{i=0}^{k} (1-x_i)^{a_i} \geq \prod\limits_{i=0}^{k} (1-a_i \cdot x_i)
\end{equation}

All is left is to show that 
\begin{equation}\label{eq:induction}
    \prod\limits_{i=0}^{k} (1-a_i \cdot x_i) \geq 1 - \sum_{i = 0}^{k} a_i \cdot x_i
    \end{equation} 
We show \eqref{eq:induction} by induction. It is true for $i = 0$. Suppose \eqref{eq:induction} holds for some $\ell < k$, then 
\[\prod\limits_{i=0}^{\ell} (1-x_i)^{a_i} \geq 1 - \sum_{i = 0}^{\ell} a_i \cdot x_i, \]
and by multiplying by $(1-a_{\ell +1} \cdot x_{\ell + 1}) $, we get
\begin{align*}
\prod\limits_{i=0}^{\ell +1} (1-a_i \cdot x_i) & \geq (1-a_{\ell +1} \cdot x_{\ell + 1}) (1 - \sum_{i = 0}^{\ell} a_i \cdot x_i)\\
& =  1- \sum_{i = 0}^{\ell + 1} a_i \cdot x_i + (a_{\ell +1} \cdot x_{\ell + 1})(\sum_{i = 0}^{\ell + 1} a_i \cdot x_i)\\
& \geq  1 - \sum_{i = 0}^{\ell + 1} a_i \cdot x_i
\end{align*}
This concludes the proof of \eqref{eq:induction}. From \eqref{eq:firstproduct} and \eqref{eq:induction}, we get 
\[\prod\limits_{i=0}^{k} (1-x_i)^{a_i} \geq 1 - \sum_{i = 0}^{k} a_i \cdot x_i.\]
Finally, by setting $k = d^- -1$, and $x_i = p^{d^- - i}$, we get that 
\[\prod\limits_{i=0}^{d^--1} (1-p^{d^--i})^{a_i} \geq 1 - \sum_{i = 0}^{d^--1} a_i \cdot p^{d^--i},\]
which proves the lemma.
\end{proof}

\begin{claim}\label{claim:sufficient_bound_lp}
For $n\geq 100$ and $2 \leq d \leq n$, we have 
\[ \frac{d}{d-1}n^{-1/d}+\frac{1}{d} \leq \frac{n}{n-1}n^{-1/n}+\frac{1}{n} \leq 1 - \frac{\log n}{2n}\]
\end{claim}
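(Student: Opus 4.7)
The plan is to prove the claim as two separate inequalities. Setting $g(d) := \frac{d}{d-1} n^{-1/d} + \frac{1}{d}$, these are (i) $g(d) \leq g(n)$ for all $d \in [2, n]$, and (ii) $g(n) \leq 1 - \log n/(2n)$.

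For (ii), I would substitute $n^{-1/n} = e^{-\log n / n}$ and apply the Taylor bound $e^{-x} \leq 1 - x + x^2/2$ at $x = \log n/n$, also expanding $\frac{n}{n-1} = 1 + \frac{1}{n-1}$. Collecting terms reduces (ii) to a one-variable estimate in $n$ whose leading content is $\log n/(2n) \geq 2/n$, i.e.\ $\log n \geq 4$. Since $n \geq 100 > e^4$ this holds, and the lower-order $\O(1/n^2)$ and $\O((\log n)^2/n^2)$ corrections are absorbed by a short calculation.

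For (i), I split the interval at $d_0 := \log n$. On $[2, d_0]$, the bound $\frac{d}{d-1} \leq 2$ gives $g(d) \leq 2 n^{-1/d} + 1/d$; since $2 n^{-1/d}$ is increasing and $1/d$ decreasing in $d$, the right-hand side is maximized at an endpoint of $[2, d_0]$ and is at most $2/e + 1/\log 100 \approx 0.953$ for $n \geq 100$, while $g(n) \geq n^{-1/n} \geq 1 - \log n/n \approx 0.954$ for $n \geq 100$, so $g(d) \leq g(n)$ on this subinterval. On $[d_0, n]$ I show $g' \geq 0$; differentiation (treating $d$ as continuous) yields
\[ d(d-1)^2\, g'(d) = n^{-1/d}\bigl[(d-1)\log n - d\bigr] - \frac{(d-1)^2}{d}, \]
and since $d \geq \log n$ implies $n^{-1/d} \geq 1/e$, it suffices to prove $\phi(d) := (d-1)\log n - d - e(d-1)^2/d \geq 0$ on $[\log n, n]$.

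The main obstacle is verifying this last inequality uniformly in $d$. My plan is to compute $\phi'(d) = \log n - 1 - e(1 - 1/d^2)$ and observe that $\phi'(d) \geq \log n - 1 - e$, which is nonnegative since $\log n \geq \log 100 > 1 + e \approx 3.72$; hence $\phi$ is increasing on $[\log n, n]$ and it suffices to check $\phi(\log n) \geq 0$. A direct substitution verifies this at $n = 100$, and a short analysis shows $\phi(\log n)$ is itself increasing in $n$ for $n \geq 100$ (its derivative in $\log n$ is bounded below by a positive constant once $\log n$ exceeds a small threshold), so the bound extends to all $n \geq 100$. Combining the two regimes proves (i), which together with (ii) yields the claim.
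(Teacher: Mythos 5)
Your proof is correct, and part (i) takes a genuinely different --- and in fact more careful --- route than the paper's. The paper proves $g(d)\le g(n)$ by asserting that $q(w)=\frac{w}{w-1}n^{-1/w}+\frac1w$ satisfies $q'(w)\ge 0$ on all of $[2,\infty)$, using the bound $n^{1/w}\le e$ inside the derivative; that bound only holds for $w\ge \log n$, and indeed $q$ is \emph{decreasing} near $w=2$ (at $n=100$, $q(2)=0.700>q(3)\approx 0.657$), so monotonicity on the whole interval fails. Your split at $d_0=\log n$, with a direct numerical bound on $[2,\log n]$ and a derivative argument on $[\log n,n]$ (where $n^{-1/d}\ge 1/e$ legitimately applies), is exactly the structure needed; your computation of $d(d-1)^2 g'(d)$ and the analysis of $\phi$ both check out. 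Likewise for (ii): the paper routes through $n^{-1/n}\le 1-\frac34\frac{\log n}{n}$, which is too lossy to close the final step at $n=100$, whereas your expansion $e^{-x}\le 1-x+x^2/2$ reduces (ii) to $4n-2+(\log n)^2\le (n+1)\log n$, which does hold for all $n\ge 100$.

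One step you should tighten. On $[2,\log n]$ you justify ``maximized at an endpoint'' by noting that $2n^{-1/d}+1/d$ is a sum of an increasing and a decreasing function; that inference is not valid in general, and bounding each summand separately by its maximum gives $2/e+1/2>1$, which is useless. The fix is one line: the derivative of $h(d)=2n^{-1/d}+1/d$ equals $d^{-2}\bigl(2\log n\cdot n^{-1/d}-1\bigr)$, and since $n^{-1/d}$ is increasing in $d$ this changes sign at most once, from negative to positive, so $h$ is minimized in the interior and maximized at an endpoint, giving $h(d)\le\max\{2/\sqrt n+1/2,\;2/e+1/\log n\}=2/e+1/\log n$ for $n\ge 100$. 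Be aware that the resulting comparison $2/e+1/\log 100\approx 0.9529$ versus $g(n)\ge 1-\log(100)/100\approx 0.9540$ is tight, so the constants in this regime genuinely matter.
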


\begin{proof}
We prove the claim in three steps.
\begin{enumerate}
    \item $\frac{d}{d-1}n^{-1/d}+\frac{1}{d} \leq \frac{n}{n-1}n^{-1/n} + \frac{1}{n}$ for $n \geq 42$. \\
    
    To see this, we analyze the derivative of 
$$q(w) = \frac{w}{w-1}\cdot n^{-\frac{1}{w}} + \frac{1}{w}, \;\; w \in [2,\infty).$$
We get
\begin{align*}
    q'(w) & = \frac{n^{-1/w}\cdot \left(-n^{1/w} + 2w n^{1/w} - w^2 (1+ n^{1/w}) + (w-1) w \log n\right)}{w^2(w-1)^2}\\
    & = \frac{n^{-1/w}\cdot \left(-n^{1/w}\cdot (w-1)^2 - w^2 + w(w-1)\log n\right)}{w^2(w-1)^2}\\
    & \geq \frac{n^{-1/w}\cdot \left(-e\cdot (w-1)^2 - w^2 + w(w-1)\log n\right)}{w^2(w-1)^2}.
    \end{align*}
    When $\log n \geq e + 1$, we have $w(w-1)\log n \geq w(w-1)(e+1) \geq w^2 + e(w-1)^2$, and therefore $q'(w) \geq 0$ when $n \geq 42 \geq e^{e+1}$.
    \item $n^{-1/n} \leq 1 - \frac{3}{4}\frac{\log n}{n}$ for $n \geq 10$\\
    
    To see this, consider the function $g(x) = 1 - \frac{3}{4}\frac{\log x}{x} - x^{-1/x}$ over the interval $[e,\infty)$. We have
    \[ g'(x) = \frac{1 - \log x}{x^2} (x^{-1/x} - \frac{3}{4}).\]
    Since $x^{-1/x}$ is increasing and $10^{-1/10} > 3/4$, we have that $g'(x) < 0$ for $x \geq 10$. Therefore, $g$ is decreasing over $[10,\infty]$. Since $\lim\limits_{x \rightarrow \infty} g(x) = 0$, this implies that $g(n) \geq 0$ for $n \geq 10$.
  
    \item $\frac{n}{n-1}( 1 - \frac{3}{4}\frac{\log n}{n})+\frac{1}{n} \leq 1 - \frac{\log n}{2n}$ for $n \geq 100$.\\
    
    To see this, we use the following sequence of equivalences
    \begin{align*}
        \frac{n}{n-1}( 1 - \frac{3}{4}\frac{\log n}{n})+\frac{1}{n} \leq 1 - \frac{\log n}{2n} & \Leftrightarrow (1+\frac{1}{n-1})( 1 - \frac{3}{4}\frac{\log n}{n})+\frac{1}{n} \leq 1 - \frac{\log n}{2n}\\
        & \Leftrightarrow \frac{1}{n-1} + \frac{1}{n}- \frac{3}{4}\frac{\log n}{n(n-1)}) \leq  \frac{\log n}{4n}\\
        &\Leftrightarrow 4n + 4(n-1) - 3\log n \leq  (n-1)\log n\\
    \end{align*}
    The last inequality is true for $n \geq 100$
\end{enumerate}
\end{proof}

\begin{lemma}\label{lemma:lp-general-closed-form}
The optimal solution of $LP^\bullet(2,p,d/\rho,1)$ is given by \begin{equation}\label{eq:bfs}
a_{d/\rho-1} = \min\left\{\frac{d/\rho}{d/\rho-1}, \frac{2n}{d/\rho}\right\}, \; a_0 = \frac{2n}{d/\rho} - a_{d/\rho-1}, \; a_i = 0 \;\; \forall i \in \{1, 2, \ldots, d/\rho-2\}.\end{equation}
\end{lemma}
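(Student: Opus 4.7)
The plan is to use a swap argument together with the convexity of the function $i \mapsto p^{d/\rho - i}$ to reduce the LP to a two-variable problem that can be solved by inspection. Write $d^{-} = d/\rho$ for brevity. First I show that, starting from any feasible solution $a = (a_0, \ldots, a_{d^- - 1})$, we may assume $a_i = 0$ for every $i \in \{1, \ldots, d^- - 2\}$ without decreasing the objective. Fix such an intermediate $i$ and let $\lambda = i/(d^- - 1) \in (0,1)$. Consider the modification that decreases $a_i$ by $\delta \in (0, a_i]$ and simultaneously increases $a_{d^- - 1}$ by $\lambda \delta$ and $a_0$ by $(1-\lambda)\delta$. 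A direct check shows that both LP constraints are preserved exactly: the first shifts by $-i\delta + (d^- - 1)(\lambda \delta) = 0$, and the second by $-\delta + \lambda\delta + (1-\lambda)\delta = 0$. The change in the objective equals
\[
\delta\bigl(\lambda\, p + (1-\lambda)\,p^{d^-} - p^{d^- - i}\bigr),
\]
which is non-negative because the map $x \mapsto p^{d^- - x}$ is convex on $[0, d^- - 1]$ (its second derivative equals $(\log p)^2\, p^{d^- - x} > 0$) and $i = \lambda(d^- - 1) + (1-\lambda)\cdot 0$. Taking $\delta = a_i$ for each intermediate $i$ in turn, we obtain a feasible solution with weakly larger objective supported only on $\{a_0,\, a_{d^- - 1}\}$.

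Next I optimize the restricted two-variable problem. The remaining constraints are $a_0 + a_{d^- - 1} \leq 2n/d^-$ and $(d^- - 1)\, a_{d^- - 1} \leq d^-$, i.e.\ $a_{d^- - 1} \leq d^-/(d^- - 1)$, together with non-negativity, and the objective is $a_{d^- - 1}\,p + a_0\,p^{d^-}$. Because $p^{d^-} > 0$, the second constraint must be tight at an optimum, so $a_0 = 2n/d^- - a_{d^- - 1}$. Substituting, the objective becomes $a_{d^- - 1}(p - p^{d^-}) + (2n/d^-)\,p^{d^-}$, an increasing function of $a_{d^- - 1}$ since $p > p^{d^-}$. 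Hence $a_{d^- - 1}$ should be as large as possible subject to $a_{d^- - 1} \leq d^-/(d^- - 1)$ and $a_{d^- - 1} \leq 2n/d^-$ (the latter needed to keep $a_0 \geq 0$). This yields $a_{d^- - 1} = \min\{d^-/(d^- - 1),\ 2n/d^-\}$ and $a_0 = 2n/d^- - a_{d^- - 1}$, exactly the claimed BFS.

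The key technical ingredient is the convex-chord inequality behind the swap; it handles all intermediate indices $i$ simultaneously without a case split on $p$. Once the intermediate variables are eliminated, the two-variable reduction is mechanical. An alternative route is LP duality, with dual solution $y_2 = p^{d^-}$, $y_1 = (p - p^{d^-})/(d^- - 1)$ in the regime $d^-/(d^- - 1) \leq 2n/d^-$ and $y_2 = p$, $y_1 = 0$ otherwise; verifying dual feasibility at intermediate indices reduces to precisely the same convex inequality, so the primal exchange argument is the cleaner path.
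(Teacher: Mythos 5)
Your proof is correct and follows essentially the same exchange argument as the paper: transfer weight from an intermediate $a_i$ onto $a_0$ and $a_{d^- - 1}$ while preserving both constraints exactly, and show the objective does not decrease. The mass-split you use ($-\delta$, $+\lambda\delta$, $+(1-\lambda)\delta$ with $\lambda = i/(d^- - 1)$) is the same transformation the paper uses, just parameterized by the decrease in $a_i$ rather than the increase in $a_{d^- - 1}$, and the inequality you need is identical to theirs after clearing denominators. Where you diverge slightly is in the justification: you invoke the convex-chord inequality for $x \mapsto p^{d^- - x}$ directly (second derivative $(\log p)^2 p^{d^- - x} > 0$), whereas the paper proves monotonicity of the difference quotient $f(x) = \bigl(p^{d^- - x} - p^{d^-}\bigr)/x$ via a first-derivative calculation; these are equivalent statements, but your convexity framing is cleaner and requires less computation. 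Your explicit reduction to the two-variable subproblem and the monotone-in-$a_{d^- - 1}$ conclusion is also a tidier finish than the paper's proof-by-contradiction, which establishes that $a_{d^- - 1}$ must sit at its upper bound but leaves the determination of $a_0$ and the vanishing of the intermediate variables implicit. No gaps.
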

\begin{proof}
For ease of notation, we use $d^-$ instead of $d/\rho$ in the proof. Suppose to the contrary that there exists an optimal solution $a$ such that $a_{d^--1} < \min\{d^-/(d^--1), 2n/d^-\}$. There must exist $i < d^--1$ such that $a_i > 0$: neither constraint is tight if for all $i < d^--1$, $a_i = 0$; thus one could always increase some $a_i, i < d^--1$ to increase the objective value. Let $i$ be the biggest index below $d^--1$ such that $a_i > 0$. For $\epsilon > 0$, consider the following alternative solution:
$$a'_{d^--1} = a_{d^--1} + \epsilon, \; 
a'_i = \begin{cases} 
      a_i - \frac{{d^--1}}{i} \epsilon & i > 0 \\
      a_i - \epsilon & i = 0
   \end{cases}, \;
a'_0 = \begin{cases} 
      a_0 + \left(\frac{{d^--1}}{i} -1 \right)\epsilon & i > 0 \\
      a_0 - \epsilon & i = 0
   \end{cases},$$
$$a'_k = a_k \;\;\forall k \notin \{0, i, {d^--1}\}
   $$

It can be easily checked that $a'$ is still feasible for small enough $\epsilon$. When $i > 0$, the net increase in objective value from $a$ to $a'$ is
\begin{align} \label{eq:netincrease}
\epsilon\cdot \left(p^{d^--{(d^--1)}} + \left(\frac{{d^--1}}{i}-1\right)\cdot p^d - \frac{{d^--1}}{i}\cdot p^{d-i}\right).
\end{align}

Now consider the following function $$f(x) = \frac{1}{x} (p^{d^--x}-p^{d^-}), \;\; x \in [1, {d^--1}].$$
We claim that $f(x)$ is strictly increasing when $p \in (0,1)$: taking the derivative of $f(x)$, we get
$$f'(x) = \frac{p^{d^--x}\cdot (p^x - (1+x\log p))}{x^2} = \frac{p^{d^--x}\cdot (e^{x\log p} - (1+x\log p))}{x^2} > 0,$$
where the inequality follows from the fact that $e^t > 1+t$ for all $t \neq 0$ and $p \in (0,1)$.

Therefore, we have that $f({d^--1}) > f(i)$, which translates to 
\begin{equation} \label{eq:kineq}\frac{1}{{d^--1}} (p^{d^--{(d^--1)}} -p^{d^-}) > \frac{1}{i} (p^{d^--i} -p^{d^-}).\end{equation}

By rearranging Eq.\eqref{eq:kineq}, we can conclude that the net increase given in \eqref{eq:netincrease} is strictly positive, thus contradicting the assumption that $a$ is an optimal solution.

Now consider the case where $i = 0$. The net increase in objective value from $a$ to $a'$ is 
\begin{align} \label{eq:netincrease-0}
\epsilon\cdot \left(p^{d^--{(d^--1)}} - p^{d^-}\right),
\end{align}
which is clearly strictly positive if $p \in (0,1)$, contradicting the assumption that $a$ is an optimal solution.
% We have shown that any optimal solution is such that $a_{(d^--1)} = \min\{d/{(d^--1)}, n/d\}$. When $d/{(d^--1)} \leq n/d$, we have that $a_{(d^--1)} = d/{(d^--1)}$ and the first constraint is tight considering ${(d^--1)} \cdot a_{(d^--1)}$ in the summation alone. Thus, $a_i = 0$ for all $i \in \{1, 2, \ldots, {(d^--1)}\}$. Since increasing $a_0$ would always increase the objective value when $p > 0$, we have that $a_0 = n/d - a_{(d^--1)}$. When $d/{(d^--1)} > n/d$, on the other hand, the second constraint is tight when considering $a_{(d^--1)}$ alone as $a_{(d^--1)} = n/d$ now. Thus, $a_i = 0$ for all $i < {(d^--1)}$.
\end{proof}

\begin{lemma} \label{lemma:lp2ub-nb}
Assume $d \geq 2$. When $p = (2n)^{-\frac{\rho}{d}}$  and $n \geq 100$, we have $$LP^\bullet(2,p,d/\rho,1) \leq 1 - \frac{\log n}{2n}.$$
\end{lemma}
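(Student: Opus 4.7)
The plan is to leverage the closed-form optimal solution of $LP^\bullet(2, p, d/\rho, 1)$ furnished by Lemma~\ref{lemma:lp-general-closed-form}, then to bound the resulting expression and reduce to Claim~\ref{claim:sufficient_bound_lp}. Throughout, write $d^- := d/\rho$ for brevity. Since $\rho \geq 1$ and we are in the regime $d^- \geq 2$ (consistent with the hypotheses under which the mathematical programs were derived), the value $p = (2n)^{-\rho/d}$ satisfies the convenient identity $p^{d^-} = (2n)^{-1}$.

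First I would verify that for $n \geq 100$ (in fact for all $n \geq 2$) and $d^- \geq 2$, we have $d^-/(d^--1) \leq 2n/d^-$, so that the minimum in Eq.~\eqref{eq:bfs} is attained at the first term. Plugging into the objective of $LP^\bullet(2,p,d^-,1)$, which equals $a_{d^--1}\cdot p + a_0 \cdot p^{d^-}$, yields
\begin{equation*}
LP^\bullet(2,p,d^-,1) \;=\; \frac{d^-}{d^--1}\cdot p \;+\; \left(\frac{2n}{d^-} - \frac{d^-}{d^--1}\right)\cdot p^{d^-} \;=\; \frac{d^-}{d^--1}\,(2n)^{-1/d^-} \;+\; \frac{1}{d^-} \;-\; \frac{d^-/(d^--1)}{2n}.
\end{equation*}

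Next I would discard the negative third term (which only helps the bound), and use the monotonicity $(2n)^{-1/d^-} \leq n^{-1/d^-}$ to obtain
\begin{equation*}
LP^\bullet(2,p,d^-,1) \;\leq\; \frac{d^-}{d^--1}\, n^{-1/d^-} \;+\; \frac{1}{d^-}.
\end{equation*}
The right-hand side is exactly of the form treated by Claim~\ref{claim:sufficient_bound_lp} with the roles of $d$ and $n$ in the claim played by $d^-$ and $n$ respectively. Since $2 \leq d^- \leq n$, Claim~\ref{claim:sufficient_bound_lp} directly gives $\tfrac{d^-}{d^--1}n^{-1/d^-} + \tfrac{1}{d^-} \leq 1 - \tfrac{\log n}{2n}$ for $n \geq 100$, which finishes the proof.

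The only non-routine step is verifying Claim~\ref{claim:sufficient_bound_lp} itself, but that is a separate statement assumed available, so within this proof the main work is simply algebraic simplification after substituting $p^{d^-} = 1/(2n)$. I expect no real obstacle: the closed form removes all optimization content, and the only inequality manipulation is the monotonicity of $x \mapsto x^{-1/d^-}$ used to pass from $2n$ to $n$ inside the exponent.
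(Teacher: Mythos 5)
Your proposal is correct and follows essentially the same route as the paper: both substitute the closed-form optimum from Lemma~\ref{lemma:lp-general-closed-form}, use $p^{d/\rho} = (2n)^{-1}$ and $(2n)^{-1/(d/\rho)} \leq n^{-1/(d/\rho)}$ to reach the bound $\frac{d/\rho}{d/\rho-1}n^{-1/(d/\rho)} + \frac{1}{d/\rho}$, and then invoke Claim~\ref{claim:sufficient_bound_lp}. Your only (harmless) refinement is evaluating the objective exactly before discarding the negative term, where the paper simply upper-bounds each coefficient; both arguments implicitly use $d/\rho \geq 2$ in the final step, exactly as the paper does.
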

\begin{proof}
% First observe that when $d \in [d^-, d]$, any optimal solution to $LP^\bullet(2,p)$ is a feasible solution to $LP^\bullet(2,p)$ with $d = d$. Therefore, we only need to show that when $d = d$, $LP^\bullet(2,p) \leq 1 - \frac{\log n}{2n}$. 
From Lemma~\ref{lemma:lp-general-closed-form}, we know that
$$LP^\bullet(2,p,d/\rho,1) \leq \frac{d/\rho}{d/\rho-1}\cdot p + \frac{2n}{d/\rho}\cdot p^{d/\rho}.$$

Therefore, when $p = (2n)^{-\frac{\rho}{d}}$, 
\begin{align}
LP^\bullet(2,p,d/\rho,1) & \leq \frac{d/\rho}{d/\rho-1}\cdot (2n)^{-\frac{1}{d/\rho}} + \frac{2n}{d/\rho}\cdot (2n)^{-1} \notag\\
& \leq \frac{d/\rho}{d/\rho-1} \cdot n^{-\frac{1}{d/\rho}} + \frac{1}{d/\rho}, \label{eq:dplus}
\end{align}
where the last inequality follows because $(2n)^{-\frac{1}{d/\rho}} \leq n^{-\frac{1}{d/\rho}}$.

From Claim \ref{claim:sufficient_bound_lp}, we have that for $d/\rho \geq 2$ and $n \geq 100$, \eqref{eq:dplus} is upper bounded by $1 - \frac{\log n}{2n}$.
\end{proof}

\subsection{Proof of Theorem \ref{thm:main}}\label{appx:proof_main_thm}

\thmmain*
\begin{proof}
To show that Algorithm \ref{alg:1round-boundeddegree-nu} returns the hypergraph with high probability, we will show the following:
\begin{enumerate}
    \item With high probability, for every edge, there exists at least one sample that contains that edge and no other edges.
    \item If a sample contains more than one edge, we will learn at most the intersection of the edges, and this intersection will be discarded in the last for-loop of the algorithm
\end{enumerate}

The second point above is easy to see: if a sample $S_i$ contains more than one edge, then $Q_H(S_i \setminus \{v\}) = 0$ if and only if $v$ is in the intersection of all edges contained in $S_i$. If we can successfully learn any edge in $S_i$ from another sample $S_j$ that contains only that edge, then the intersection learned through sample $S_i$ will be discarded as it is a subset of the edge learned using $S_j$. 

For the rest of the proof, we focus on showing the first point.
To reiterate, we want to show that every edge of $H$ appears in at least one sample by itself. Below, we show that it is the case with probability $1-o(1)$.

We first show that with high probability, each edge $e$ of size $d' \in [d/\rho, d]$ is contained in at least $(2n)^{\Delta - 1}(\log^2 n)/2$ sample $S_i$'s: use $X_e$ to denote the number of samples containing $e$, then we have
\begin{align*}
\E(X_e) &= (2n)^{\Delta \rho}\log^2 n \cdot p^{d'}\\
&\geq  (2n)^{\Delta \rho}\log^2 n \cdot p^{d}\\
&= (2n)^{\Delta \rho}\log^2 n \cdot (2n)^{-\rho}\\
&= (2n)^{(\Delta - 1)\rho}\log^2 n
\end{align*}

By Chernoff bound, we have
$$\P(X_e \leq (2n)^{(\Delta - 1)\rho}(\log^2 n)/2) \leq e^{-(2n)^{(\Delta - 1)\rho}(\log^2 n)/8}.$$
As there are at most $\Delta n \leq n^2$ edges of size $d' \in [d/\rho, d]$, by a union bound, we have
\begin{align*}
& P(\exists e \in E \text{ s.t. } |e|\in [d/\rho, d], X_e \leq (2n)^{(\Delta - 1)\rho}(\log^2 n)/2)\\ 
& \leq \Delta n e^{-(2n)^{(\Delta - 1)\rho}(\log^2 n)/8} = e^{\log \Delta + \log n -(2n)^{(\Delta - 1)\rho}(\log^2 n)/8}\\
& \leq e^{-\log n ((2n)^{(\Delta - 1)\rho}(\log n)/8-2)}\\
& \leq e^{-\log n((\log n)/8-2)} = n^{-(\log n)/8+2} = o(1).
\end{align*}
Subsequently, 
$$P(\forall e \in E \text{ s.t. } |e|\in [d/\rho, d], X_e \geq (2n)^{(\Delta - 1)\rho}(\log^2 n)/2) \geq 1- o(1).$$

From now on we condition on the event that all edges $e$ whose size is between $d/\rho$ and $d$ are included in at least $(2n)^{(\Delta - 1)\rho}(\log^2 n)/2$ samples. 

From Lemma~\ref{lemma:failure-prob-bn}, we have that for all $n \geq 100$, 
\begin{align*}
\P(\exists e' \in E \setminus \{e\}, \ e' \subseteq S_i \ | \ e \subseteq S_i) \leq 1-(\frac{\log n}{2n})^{(\Delta - 1)\rho}.
\end{align*}

As each edge $e$ with size between $d/\rho$ and $d$ is contained in at least $(2n)^{(\Delta - 1)\rho}(\log^2 n)/2$ samples, we have that 
\begin{align*}
\P(\forall S_i \text{ s.t. } e \in S_i, \exists e' \in E \setminus \{e\} \mbox{ s.t. } e' \subseteq S_i)
& \leq \left(1-\left(\frac{\log n}{2n}\right)^{(\Delta - 1)\rho}\right)^{(2n)^{(\Delta - 1)\rho}(\log^2 n)/2}\\
& \leq e^{-\left(\frac{\log n}{2n}\right)^{(\Delta - 1)\rho}(2n)^{(\Delta - 1)\rho}(\log^2 n)/2}\\
& = e^{-(\log^{(\Delta + 1)\rho}n)/2} = n^{-(\log^{\Delta\rho} n)/2}.
\end{align*}

% where the second inequality follows from Claim~\ref{claim:pp}.

By another union bound on all edges of size between $d/\rho$ and $d$ (at most $\Delta n$ of them), we have that 
\begin{align*}
\P(\exists e \text{ s.t. } |e| \in [d/\rho, d], \forall S_i \text{ s.t. } e \in S_i, \exists e' \in E \setminus \{e\}, \ e' \subseteq S_i) \leq n^{-(\log^{\Delta\rho}n)/2+2} = o(1). 
\end{align*}
We can thus conclude that with probability at least $1-o(1)$, for all $e \in E$ with size between $d/\rho$ and $d$, there exists at least one sample $S_i$ that contains $e$ but no other remaining edges.\\

We now prove the query complexity of \textsc{FindLowDegreeEdges}. It is clear that \textsc{FindLowDegreeEdges} is non adaptive because all the queries can be made in parallel. Regarding query complexity, \textsc{FindLowDegreeEdges} constructs  $(2n)^{\rho\Delta}\log^2 n $ samples, and for each one of these samples, makes at most $\O(n)$ queries. Therefore \textsc{FindLowDegreeEdges} makes at most 
\[ \O(n (2n)^{\rho\Delta}\log^2 n) = \O((2n)^{\rho\Delta+1}\log^2 n)\]
queries in total.
\end{proof}

\section{Sequential and Parallel Runtime}
The runtime of our proposed algorithms is not much worse than their query complexity. When running sequentially, \textsc{FindEdgeAdaptive} requires an additional $O(n)$ time to do every set partition, \textsc{FindDisjointEdges} requires an additional $O(n)$ time to construct a set of independently sampled vertices, and \textsc{FindLowDegreeEdges} requires an additive $(\Delta n)^2 d$ time to execute the last for loop that consolidates the candidate edge sets. We summarize both the sequential and parallel runtimes for each algorithm in Table \ref{tab:runtime} below which we will include in the next version of the paper. We assume that each query can be made in $O(1)$ time, and for the parallel runtime analysis, we assume access to polynomially many parallel processors. Some of our algorithm use a subroutine that can either be  \textsc{FindEdgeParallel} or \textsc{FindEdgeAdaptive}. We will use \textsc{PRL} and \textsc{ADA} to refer to these subroutines respectively.

\begin{table}[h]
\centering
\begin{tabular}{|p{55mm}|p{25mm}|p{30mm}|p{25mm}|} 
\hline
Algorithm & Query \newline Complexity & Sequential \newline Runtime & Parallel \newline Runtime \\ \hline
     $\textsc{PRL: FindEdgeParallel}(S)$     &    $O(|S|)$              &         $O(|S|)$           &          $O(1)$        \\ \hline
     $\textsc{ADA: FindEdgeAdaptive}(S, s)$     &  $O(s \log(|S|))$               &      $O(s|S|\log(|S|))$              &         $O(\log|S|)$         \\ \hline
     $\textsc{FindDisjointEdges}(\textsc{PRL})$     &     $O(n^\alpha \log^2 n$ \newline $+n^{2+\alpha} \log^2 n)$             &        $O(n^{\alpha+1} \log^2 n)$            &        $O(1)$          \\ \hline
     $\textsc{FindDisjointEdges}(\textsc{ADA})$     &    $O(n^\alpha \log^2 n$\newline $+ 2n^\alpha \log^3 n)$              &        $O(n^{\alpha+2} \log^3 n)$            &           $O(\log n)$       \\ \hline
     $\textsc{FindMatching}(\textsc{PRL})$     &      $O(n^3 \log^3 n)$            &      $O(n^3 \log^3 n)$              &       $O(\log^2 n)$           \\ \hline
     $\textsc{FindMatching}(\textsc{ADA})$     &     $O(n \log^5 n)$             &       *$O(n^{\alpha +2} \log^6 n)$             &        $O(\log^4 n)$          \\ \hline
     $\textsc{FindLowDegreeEdges}$     &       $O((2n)^{\rho \Delta + 1}$\newline $\log^2 n)$           &     $O((2n)^{\rho \Delta + 1} \log^2 n$ \newline $+(\Delta n)^2 d)$               &       $O(\log n)$           \\ \hline
\end{tabular}
\caption{Algorithms Runtime (with $\alpha = 1 / (1 - 1/(2\log n))$ in the cell with *)}
\label{tab:runtime}
\end{table}

\end{document}